\theoremstyle{plain}
\newtheorem{lemma}{Lemma} 
\newtheorem{theorem}{Theorem}
\newtheorem*{conjecture*}{Conjecture}
\theoremstyle{definition}
\newtheorem{definition}{Definition}
\newtheorem{remark}{Remark}
\newtheorem*{remark*}{Remark}
\definecolor{mygray}{gray}{0.4}
\newcommand{\set} [1]			{{\mathscr{{#1}}}}
\newcommand{\alg}[1]			{{\mathcal{{#1}}}}
\newcommand{\rndv}[1]      {{\mathsf{{#1}}}}
\newcommand{\msr}[1]       {{\it    {{#1}}}}
\newcommand{\cnst}[1]      {{\mathit{{#1}}}}
\newcommand{\sss}[1]		{{\mathit{2}^{{#1}}}}
\newcommand{\integers}[1]	{{\mathbb{Z}}_{^{{#1}}}}
\newcommand{\reals}[1]		{{\mathbb{R}}_{^{{#1}}}}
\newcommand{\bigo} [1]     {{\cnst{O}\left({{#1}}\right)}}
\newcommand{\pmf}[0]       {{p.m.f.~}}
\renewcommand{\vec}[1]     {\overrightarrow{{#1}}}
\newcommand{\dif}[1]       {{\mathrm{d}{#1}}}  
\newcommand{\der}[2]        {\tfrac{\dif{#1}}{\dif{#2}}}
\newcommand{\DEF}[0]			{{\!\!~\triangleq\!~}}  
\newcommand{\mtimes}[0]			{{\circledast}}
\newcommand{\abs}[1]           {{\left\lvert{{#1}}\right\lvert}}
\newcommand{\lon}[1]           {{{\left\lVert{{#1}}\right\lVert}}} 
\newcommand{\ind}[0]           {{\imath}}
\newcommand{\jnd}[0]           {{\jmath}}
\newcommand{\knd}[0]           {{\kappa}}
\newcommand{\tin}[0]           {{\cnst{t}}}
\newcommand{\blx}[0]           {{\cnst{n}}}
\newcommand{\PXS}[2]         {{\bf P}_{{#1}}\!\left[{#2}\right]}
\newcommand{\EXS}[2]         {{\bf E}_{{#1}}\!\left[{#2}\right]}
\newcommand{\PX}[1]          {\PXS{\!}{{#1}}}                      %Probability
\newcommand{\EX}[1]          {\EXS{\!}{{#1}}}                      %Expectation
\newcommand{\PCXS}[3]        {\PXS{{#1}}{\left.\! {#2} \right| {#3}}}
\newcommand{\ECXS}[3]        {\EXS{{#1}}{\left.\! {#2} \right| {#3}}} 
\newcommand{\PCX}[2]         {\PCXS{\!}{{#1}}{{#2}}}  %Conditional Probability
\newcommand{\ECX}[2]         {\ECXS{\!}{{#1}}{{#2}}}  %Conditional Expectation
\newcommand{\fX}[0]          {{\cnst{f}}}
\newcommand{\gX}[0]          {{\cnst{g}}}
\newcommand{\drv}[0]			{{\rndv{D}}}
\newcommand{\grv}[0]			{{\rndv{G}}}
\newcommand{\hrv}[0]			{{\rndv{H}}}
\newcommand{\RD}[3]				{{\cnst{D}}_{{#1}}            \!\left(\left.            \! {#2}\right\Vert {#3}                  \right)}
\newcommand{\RMI}[3]			{{\cnst{I}}_{{#1}}            \!\left(                  \! {#2};         \!{#3}                \!\right)}
\newcommand{\RC}[2]				{{\cnst{C}}_{{#1},{#2}}}
\newcommand{\RR}[2]				{{\cnst{S}}_{{#1},{#2}}}
\newcommand{\cset}[0]			{{\set{A}}}
\newcommand{\spe}[1]			{{\cnst{E}_{sp\!}}       \left({#1}\right)}
\newcommand{\rate}[0]			{{\cnst{R}}}
\newcommand{\rnb}[0]          {{\cnst{\beta}}}
\newcommand{\rnf}[0]          {{\cnst{\phi}}}
\newcommand{\rng}[0]          {{\cnst{\rho}}}
\newcommand{\rno}[0]          {{\cnst{\alpha}}}
\newcommand{\rnt}[0]          {{\cnst{\eta}}}
\newcommand{\Pem}[1]           {{\it P_{{{\bf e}}}^{{#1}}}}         
\newcommand{\enc}[0]           {{\varPsi}} 
\newcommand{\dec}[0]           {{\varTheta}}    
\newcommand{\brl}[0]           {{\alg{B}}}
\newcommand{\rborel}[1]        {{\brl}({#1})}
\newcommand{\oev}[0]           {{\set{E}}}
\newcommand{\pmea}[1]          {{{\alg{P}}({#1})}}
\newcommand{\pdis}[1]          {{{\set{P}}({#1})}}
\newcommand{\dinp}[0]          {{\cnst{x}}}
\newcommand{\inp}[0]           {{\rndv{X}}}
\newcommand{\inpS}[0]          {{\set{X}}}
\newcommand{\inpA}[0]          {{\alg{X}}}
\newcommand{\dout}[0]          {{\cnst{y}}}
\newcommand{\out}[0]           {{\rndv{Y}}}
\newcommand{\outS}[0]          {{\set{Y}}}
\newcommand{\outA}[0]          {{\alg{Y}}}
\newcommand{\dsta}[0]          {{\cnst{z}}}
\newcommand{\sta}[0]           {{\rndv{Z}}}
\newcommand{\staS}[0]          {{\set{Z}}}
\newcommand{\dmes}[0]          {{\cnst{m}}}
\newcommand{\mes}[0]           {{\rndv{M}}}
\newcommand{\mesS}[0]          {{\set{M}}}
\newcommand{\dest}[0]          {{\widehat{{\cnst{m}}}}}
\newcommand{\est}[0]           {{\widehat{{\rndv{M}}}}}
\newcommand{\mA}[0]				{{\msr{a}}}    
\newcommand{\amn}[1]			{{{\mA}_{{#1}}}}
\newcommand{\mB}[0]				{{\msr{b}}}
\newcommand{\mP}[0]				{{\msr{p}}}    
\newcommand{\pma}[2]			{{{\mP}_{{#1}}^{{#2}}}}
\newcommand{\qrv}[0]			{{\rndv{Q}}}
\newcommand{\mQ}[0]				{{\msr{q}}}    
\newcommand{\qmn}[1]			{{{\mQ}_{{#1}}}}
\newcommand{\mU}[0]				{{\msr{u}}}
\newcommand{\Um}[0]				{{{\cnst{U}}}}
\newcommand{\vrv}[0]			{{\rndv{V}}}
\newcommand{\mV}[0]				{{\msr{v}}}
\newcommand{\mW}[0]				{{\msr{w}}}    
\newcommand{\wma}[2]			{{{\mW}_{{#1}}^{{#2}}}}
\newcommand{\Wm}[0]				{{{\cnst{W}}}}
\newcommand{\Wmn}[1]			{{{\cnst{W}}_{{#1}}}}
\newcommand{\Wma}[2]			{{{\cnst{W}}_{{#1}}^{{#2}}}}
\newcommand{\smplS}[0]				{{{\Omega}}}
\newcommand{\dsmpl}[0]				{{\cnst{\omega}}}
\newcommand{\smplA}[0]				{{\alg{F}}}
\newcommand{\fltrn}[1]				{{\alg{F}}_{{#1}}}
\newcommand{\smplG}[0]				{{\alg{G}}}
\DeclareRobustCommand{\bigplus}{%
	\mathop{\vphantom{\sum}\mathpalette\@bigplus\relax}\slimits@
}
\newcommand{\@bigplus}[2]{\vcenter{\hbox{\make@bigplus{#1}}}}
\newcommand{\make@bigplus}[1]{%
	\sbox\z@{$\m@th#1\sum$}%
	\setlength{\unitlength}{\wd\z@}%
	\begin{picture}(1.4,1.4)
	%\roundcap
	\linethickness{.17ex}
	\Line(.7,.14)(.7,1.26)
	\Line(.14,.7)(1.26,.7)
	\end{picture}%
}
\DeclareRobustCommand{\bigtimes}{%
	\mathop{\vphantom{\sum}\mathpalette\@bigtimes\relax}\slimits@
}
\newcommand{\@bigtimes}[2]{\vcenter{\hbox{\make@bigtimes{#1}}}}
\newcommand{\make@bigtimes}[1]{%
	\sbox\z@{$\m@th#1\sum$}%
	\setlength{\unitlength}{\wd\z@}%
	\begin{picture}(1,1)
	%\roundcap
	\linethickness{.17ex}
	\Line(.1,.1)(.9,.9)
	\Line(.1,.9)(.9,.1)
	\end{picture}%
}
\newcommand{\prob}[0]        {{\bf P}}
\renewcommand{\PXS}[2]       {{\prob}_{\!{#1}}\!\left({#2}\right)}
\renewcommand{\PX}[1]        {\PXS{}{{#1}}}                      %Probability
\renewcommand{\PCXS}[3]      {\PXS{{#1}}{\left.\! {#2} \right| {#3}}}
\renewcommand{\PCX}[2]       {\PCXS{}{{#1}}{{#2}}}  %Conditional Probability
\newcommand{\TCIND}[2]       {{\mathds{1}_{#1}\left({#2}\right)}}  
\begin{document}
\title{The Sphere Packing Bound\\ for DSPCs with Feedback \`{a} la Augustin}
%\author{Bar\i\c{s} Nakibo\u{g}lu\\ \small{\href{mailto:bnakib@metu.edu.tr}{bnakib@metu.edu.tr}}\vspace{-.28cm}
\author{Bar\i\c{s} Nakibo\u{g}lu\vspace{-.3cm}
%\author{Bar\i\c{s} Nakibo\u{g}lu,~\IEEEmembership{Member,~IEEE,}\vspace{-.3cm}
\thanks{e-mail:bnakib@metu.edu.tr
%	
%This work was partially supported by the NSF under Grant CCF-0917212 and Grant CNS-0932410.
%	
%%The author is with the Department of Electrical and Electronics Engineering
%%at Middle East Technical University, 
%%06800 Ankara, Turkey (e-mail:bnakib@metu.edu.tr).
}
}
\maketitle
\pagestyle{plain}
\pagenumbering{arabic}
\thispagestyle{empty} 
\begin{abstract}
Establishing the sphere packing bound for block codes on 
the discrete stationary product channels with feedback 
---which are commonly called the discrete memoryless channels with feedback---
was considered to be an open problem until recently, 
notwithstanding the proof sketch provided by Augustin in 1978.
A complete proof following Augustin's proof sketch is presented,
to demonstrate its adequacy and to draw attention to 
two novel ideas it employs. These novel ideas  
(i.e., the Augustin's averaging and the use of subblocks) 
are likely to be applicable in other communication problems 
for establishing impossibility results.    
\end{abstract}
\begin{IEEEkeywords}
Feedback communications, reliability function, 
error exponent, sphere packing bound/exponent,
error analysis.
\end{IEEEkeywords}
%\IEEEpeerreviewmaketitle
\section{Introduction}\label{sec:introduction}
After the founding paper of Shannon \cite{shannon48}, 
establishing the channel capacity as the threshold rate 
for reliable communication, 
one of the first challenges
of the mathematical theory of communications
was determining the behavior of 
the optimum error probability as a function of the block length
at rates below the channel capacity.
The optimum error probability was shown to decay exponentially 
with the block length 
and the exponent of this decay (i.e., the error exponent, or the reliability function) 
was determined at all rates between 
the critical rate and the capacity of the channel
in \cite{shannon59,elias55B,dobrushin62B} 
for various channel models.
Although it was not always discussed in these terms, 
\cite{shannon59,elias55B,dobrushin62B} proved the following 
two distinct results in order determine the
error exponent at rates between the critical rate and the capacity
of the channel.
\begin{enumerate}[(i)]
\item\label{RCE} {\it The Random Coding Bound (RCB):} 
At all rates less than the capacity, 
the random coding exponent (RCE) is achievable,
i.e., the error exponent is bounded from below by the RCE.
\item\label{SPB} {\it The Sphere Packing Bound (SPB):}
At any rate less than the capacity the error exponent 
is bounded from above by the sphere packing exponent (SPE).
\end{enumerate}
The RCE and the SPE are equal 
to one another for all rates between the critical rate and
the channel capacity.
Thus \eqref{RCE} and \eqref{SPB} determine the error exponent 
exactly for all rates between the critical rate and the capacity
on any channel that they are established.

In \cite{gallager65}, Gallager proved \eqref{RCE} not only for
all of the models considered in \cite{shannon59,elias55B,dobrushin62B},
but also for essentially all memoryless channel models of interest, 
including the non-stationary ones. 
The elegance and the simplicity of Gallager's derivation and 
the generality of his result make his seminal paper \cite{gallager65}
of interest to the contemporary researchers after decades
\cite{shinW09,zhangMKWT13}. 

For the SPB ---i.e., for \eqref{SPB}--- the progress did not happen all at once
as it did for \eqref{RCE}.
The first two complete proofs of the SPB for arbitrary discrete stationary 
product channels 
(DSPCs)\footnote{The channels that we call DSPCs are usually called 
	discrete memoryless channels
	(i.e., DMCs).
	We use the name DSPC to underline the stationarity of these channels 
	and the non-existence of constraints on their input sets;
	see \S \ref{sec:model} for a more detailed discussion.} 
by Shannon, Gallager, and Berlekamp in \cite{shannonGB67A}
and by Haroutunian in \cite{haroutunian68} both relied
on expurgations based on the composition (i.e., the empirical distribution, 
or the type)
of the input codewords. 
Thus the proofs in \cite{shannonGB67A} and \cite{haroutunian68} 
hold only for codes on stationary channels with finite input sets.
In \cite{augustin69},
Augustin provided the first proof of the SPB on the product channels that 
does not assume either the stationarity of the channel or 
the finiteness of its input set.
In \cite{nakiboglu19B}, we have improved 
the approximation error term of the upper bound on the error exponent 
given in \cite{augustin69}
from \(\bigo{\blx^{-0.5}}\) to \(\bigo{\blx^{-1} \ln \blx}\)
for the block length \(\blx\), using the R\'enyi capacity
and center analyzed in \cite{nakiboglu19A}.

Unlike the proofs in \cite{shannonGB67A} and \cite{augustin69},
Haroutunian's proof in \cite{haroutunian68} establishes 
the SPB not only for codes on the product channels but also 
for codes on the stationary memoryless channels
with either composition or cost constraints.
However, the finite input set hypothesis of \cite{haroutunian68}   
curbs its usefulness for models other than the discrete ones, 
e.g., \cite{haroutunian68} does not imply the SPB 
for the Poisson channels, derived for the first time in 
\cite{wyner88-b}.
Building upon the techniques he 
developed in \cite{augustin69}  and \cite[\S31]{augustin78}
and employing the information measures he analyzed in \cite[\S34]{augustin78},
Augustin proved the SPB on (possibly non-stationary) cost
constrained memoryless channels 
with bounded cost functions in \cite[\S36]{augustin78}.
Augustin's SPB given in \cite[Thm. 36.6]{augustin78} 
applies to the Poisson channels,
but not to various Gaussian channels 
analyzed in \cite{shannon59,ebert66,richters67}
because the quadratic cost function is not bounded.
In \cite{nakiboglu18D}, we have proved the SPB 
for codes on the cost constrained memoryless channels
---without assuming the cost function to be bounded---
using the constrained Augustin capacity and center 
analyzed in \cite[\S34]{augustin78} and \cite{nakiboglu18C}.
The SPB given in 
\cite[Thm. 2]{nakiboglu18D} 
%\cite[Thm. \ref*{D-thm:exponent-cost-ologn}]{nakiboglu18D} 
implies the SPB not only for the Poisson channels,
but also for various Gaussian channels 
considered in \cite{shannon59,ebert66,richters67}.

Despite their generality, Augustin proofs in \cite{augustin69} and \cite{augustin78} 
did not have nearly as much impact as the proofs in \cite{shannonGB67A} and \cite{haroutunian68}.
This is partly due to the considerable simplification 
provided by the application of the composition based expurgations
in \cite{shannonGB67A} and \cite{haroutunian68}.
This reliance on the composition based expurgations, however, 
were making the derivation of the SPB with the techniques 
in \cite{shannonGB67A} and \cite{haroutunian68} 
rather convoluted and tedious ---if at all possible---
for codes on channels other than the
stationary memoryless ones with finite input sets.
For codes on DSPCs with feedback, for example, there is no evident generalization 
for the concept of composition of an input codeword that can be used 
in a derivation of the SPB similar to  \cite{shannonGB67A} or \cite{haroutunian68}. 
Thus establishing the SPB for arbitrary DSPCs with feedback has been 
a significant challenge. 
Nevertheless, several partial results have been reported over the years.

For DSPCs with feedback that have certain symmetries,
Dobrushin established the SPB in \cite{dobrushin62A}.
For arbitrary DSPCs with feedback,
Haroutunian  \cite{haroutunian77} derived an upper bound
on the error exponent,
which is usually called Haroutunian's bound/exponent.
Haroutunian's exponent is equal to the SPE only for
DSPCs with certain symmetries;
Haroutunian's exponent is strictly greater than the SPE
even for non-symmetric binary input binary output channels.
Sheverdyaev proposed a derivation of the SPB for codes on
DSPCs with feedback using Taylor's expansion
in \cite{sheverdyaev82}. 
Sheverdyaev's proof was, however, supported rather weakly on several critical
steps, see \cite[A7]{palaiyanurthesis} for a more detailed discussion.
Curtailing the ways feedback link can be used
by appropriate assumptions,
\cite{palaiyanurthesis,palaiyanurS10,comoN10} derived the SPB 
for certain families of codes on arbitrary DSPCs with feedback. 

Augustin presented a proof sketch establishing the SPB 
on arbitrary DSPCs with feedback in \cite[\S 41]{augustin78}.
Despite the novelty of Augustin's approach and the importance 
of his result, Augustin's proof sketch is not widely known.
In fact, until very recently, establishing the SPB on DSPCs with 
feedback has been considered to be an open problem.
In the following, we present a complete 
proof that is following Augustin's proof sketch without any 
significant modification.
Our main aim is to make the two main ideas of Augustin's proof
---the averaging and the use of subblocks--- widely accessible
via this relatively short article.
We believe both ideas are likely to be useful in establishing 
impossibility results in other communications problems.
We assume the channel to be discrete for simplicity and
employ concepts that are not present, at least explicitly,
in \cite{augustin78}
---such as R\'enyi's information measures and stochastic sequences--- 
whenever we think  their use simplifies the discussion
for the contemporary researcher.

Elsewhere, in \cite[\S V]{nakiboglu19B}, 
we have proved the SPB for codes on DSPCs with feedback 
using the averaging and the subblock ideas of Augustin \cite{augustin78}
together with the Taylor's expansion idea of Sheverdyaev \cite{sheverdyaev82}
and the auxiliary channel idea of Haroutunian \cite{haroutunian68,haroutunian77}.
In addition, we have shown in \cite[\S V-E]{nakiboglu19B} that
Haroutunian's bound implies the SPB when considered together with 
the averaging and the subblock ideas of Augustin.
Although proofs in \cite[\S V]{nakiboglu19B} do employ ideas from 
Augustin's proof sketch, both proofs also employ 
other fundamental observations which makes them substantially 
different from the proof we present in the following.

In the rest of the current section, 
we first describe our notation and model, 
then state the main asymptotic result, i.e., 
Theorem \ref{proposition:SPBF}.
In \S\ref{sec:preliminaries}, we recall certain properties of 
R\'enyi's information measures and SPE,
derive preliminary results on tilting and stochastic sequences,
and state a sufficient condition for constructing a 
probability measure with a given set of conditional probabilities
on a product space.
In \S\ref{sec:SPB}, we prove a non-asymptotic SPB 
for codes on DSPCs with feedback,
which implies Theorem \ref{proposition:SPBF}.
In \S\ref{sec:conclusion}, we discuss 
possible generalizations and alternative proofs
for the main result of the paper, establishing the
sphere packing exponent as an upper bound to the reliability function
for channel with feedback.

\subsection{Notation}\label{sec:notation}
We denote the set of all real numbers by \(\reals{}\), 
positive real numbers by \(\reals{+}\), 
non-negative real numbers by \(\reals{\geq0}\),
and integers by \(\integers{\!}\). 
For any  real number \(\dsta\), 
\(\lfloor\dsta\rfloor\) is the greatest integer less than or equal to \(\dsta\),
\(\lceil\dsta\rceil\) is the least integer greater than or equal to  \(\dsta\),
and \(\abs{\dsta}\) is the absolute value of \(\dsta\).
For any set \(\cset\) the indicator function 
\(\TCIND{\cset}{\cdot}\) is defined as follows:
\begin{align}
\notag
\TCIND{\cset}{\dinp}
&=
\begin{cases}
1
&\dinp\in\cset
\\
0
&\dinp\notin\cset
\end{cases}.
\end{align}

For any finite set \(\outS\), we denote 
the set of all subsets of \(\outS\) 
(i.e., the power set of \(\outS\)) by \(\sss{\outS}\)
and 
the set of all probability mass functions (\pmf\!\!\!'s) on \(\outS\) by \(\pdis{\outS}\).  
For any \(\mQ\) and \(\mW\) in \(\pdis{\outS}\) 
the total variation distance between them is defined as
\begin{align}
\lon{\mQ-\mW}
\DEF \sum\nolimits_{\dout \in \outS} \abs{\mQ(\dout)-\mW(\dout)}.
\end{align}

While discussing the continuity of functions, we will assume that 
the set of real numbers is equipped with its natural topology
and the set of all \pmf\!\!\!'s is equipped with the total variation 
topology.

For any two finite sets \(\inpS\) and \(\outS\), we denote 
the Cartesian product of  \(\inpS\) and \(\outS\) by \(\inpS\times\outS\),
the set of all functions from \(\inpS\) to \(\outS\) by \({\outS}^{\inpS}\),
and
the set of all stochastic matrices from 
\(\inpS\) to \(\outS\) by \(\pdis{\outS|\inpS}\).
We interpret stochastic matrices from \(\inpS\) to \(\outS\)
as functions from \(\inpS\) to \(\pdis{\outS}\), as well.
Thus we use \(\Wm(\dinp)\) and \(\Wm(\cdot|\dinp)\)
interchangeably for \(\Wm\)'s in \(\pdis{\outS|\inpS}\). 
For any \(\mP\) in \(\pdis{\inpS}\) and \(\Wm\) in \(\pdis{\outS|\inpS}\),
\(\mP\mtimes\Wm\) is the \pmf on \(\inpS\times\outS\) whose marginal 
distribution on
\(\inpS\) is \(\mP\) and conditional distribution given \(\dinp\) is
\(\Wm(\dinp\!)\). For  any
\(\mP\) in \(\pdis{\inpS}\) and \(\mQ\) in \(\pdis{\outS}\), 
we denote their product, which is a 
\pmf on \(\inpS\!\times\!\outS\), by \(\mP\!\otimes\!\mQ\).
We use the symbol \(\otimes\) to denote 
the product of \(\sigma\)-algebras, as well.

For any interval \(\cset\) on \(\reals{}\)
the Borel \(\sigma\)-algebra of \(\cset\),
denoted by \(\rborel{\cset}\), 
is the minimum \(\sigma\)-algebra on the subsets of 
\(\cset\) that includes all the open subintervals of
\(\cset\), \cite[p. 143]{shiryaev}.
A pair \((\smplS,\smplA)\) is a measurable space iff 
\(\smplA\) is a \(\sigma\)-algebra of subsets of \(\smplS\).
If in addition \(\prob\) is a probability on \(\smplA\), then 
the triple  \((\smplS,\smplA,\prob)\) form a probability space.
A real valued function \(\inp\) on \(\smplS\) is a random variable 
in the probability space \((\smplS,\smplA,\prob)\)
iff \(\inp\) is \(\smplA\)-measurable
(i.e., the inverse image of every set in \(\rborel{\reals{}}\) is in \(\smplA\)),
\cite[p. 170]{shiryaev}.
A sequence of pairs
\((\inp_{1},\fltrn{1}),\ldots,(\inp_{\blx},\fltrn{\blx})\)
is a stochastic sequence 
in \((\smplS,\smplA,\prob)\)
iff 
\(\fltrn{1},\ldots,\fltrn{\blx}\) are \(\sigma\)-algebras
satisfying \(\fltrn{1}\subset\cdots\fltrn{\blx}\subset\smplA\)
and
\(\inp_{\tin}\)'s are \(\fltrn{\tin}\)-measurable random variables,
\cite[p. 476]{shiryaev}.
See \cite[Ch. II]{shiryaev}, for an  accessible introduction to
the mathematical foundations of the probability theory.

Our notation will be overloaded for certain symbols,
but the relations represented by these symbols
will be clear from the context.
We use the short hand
\(\smplG_{\tin}^{\blx}\) for the product of \(\sigma\)-algebras \(\smplG_{\tin},\ldots,\smplG_{\blx}\),
\(\inpS_{\tin}^{\blx}\) for the Cartesian product of sets \(\inpS_{\tin},\ldots,\inpS_{\blx}\),
\(\inp_{\tin}^{\blx}\) for the random vector \((\inp_{\tin},\ldots,\inp_{\blx})\),
and
\(\dinp_{\tin}^{\blx}\) for the vector \((\dinp_{\tin},\ldots,\dinp_{\blx})\).

\subsection{The DSPCs with Feedback and the Channel Codes}\label{sec:model}
A discrete channel with 
a finite input set \(\inpS\) and a finite output set \(\outS\),
is represented by a stochastic matrix \(\Wm\).
The product of a sequence of discrete channels 
\(\Wmn{1},\ldots,\Wmn{\blx}\)
with the input sets \(\inpS_{1},\ldots,\inpS_{\blx}\)
and the output sets \(\outS_{1},\ldots,\outS_{\blx}\)
is a discrete channel from 
\(\inpS_{1}^{\blx}\) to \(\outS_{1}^{\blx}\),
denoted by \(\Wmn{[1,\blx]}\), satisfying
\begin{align}
\notag
\Wmn{[1,\blx]}(\dout_{1}^{\blx}|\dinp_{1}^{\blx})
&=\prod\nolimits_{\tin=1}^{\blx} \Wmn{\tin}(\dout_{\tin}|\dinp_{\tin})
\end{align}
for all \(\dinp_{1}^{\blx}\) in \(\inpS_{1}^{\blx}\)
and \(\dout_{1}^{\blx}\) in \(\outS_{1}^{\blx}\).
A length \(\blx\) product channel \(\Wmn{[1,\blx]}\) is stationary iff 
all \(\Wmn{\tin}\)'s are identical. 
A discrete channel \(\Um\) from \(\staS\) to \(\outS_{1}^{\blx}\)
is a length \(\blx\) memoryless channel if there exits a product channel 
\(\Wmn{[1,\blx]}\) with the input set \(\inpS_{1}^{\blx}\)
satisfying both \(\staS\subset\inpS_{1}^{\blx}\)
and 
\(\Um(\dsta)=\Wmn{[1,\blx]}(\dsta)\)
for all \(\dsta\in\staS\). 

The preceding definition of the memorylessness is wholly consistent 
with the one used in standard texts \cite[p. 185]{coverthomas}, 
\cite[(4.2.1)]{gallager}, \cite[p. 84]{csiszarkorner}.
Nevertheless, the discrete product channels that are
also stationary 
are customarily called discrete memoryless channels.
Although the conventional name is not wrong, 
we prefer a more descriptive and accurate name: 
the discrete stationary product channels (DSPCs).

In discrete product channels (DPCs) probabilistic behavior of the channel outputs 
depend on the channel inputs, but the channel inputs do not depend on the channel 
outputs in any way. 
In DPCs with feedback, on the other hand,
the channel input at any time instance may depend 
on the previous channel outputs,
i.e., the channel input at time \(\tin\) can be a function 
from  \(\outS_{1}^{\tin-1}\) to \({\inpS_{\tin}}\) 
rather than an element of  \(\inpS_{\tin}\).
We define the DPCs with feedback formally as follows.

\begin{definition}\label{def:Fproduct:discrete}
	For any positive integer \(\blx\) and 
	\(\Wmn{\tin}\!:\!\inpS_{\tin}\!\to\!\pdis{\outS_{\tin}}\) for \(\tin\) 
	in \(\{1,\ldots,\blx\}\),
	the \emph{length \(\blx\) discrete product channel with feedback}
	\(\Wmn{\vec{[1,\blx]}}\!:\!\vec{\inpS}_{1}^{\blx}\!\to\!\pdis{\outS_{1}^{\blx}}\) 
	is defined via the following relation:
	\begin{align}
	\label{eq:def:Fproduct:discrete}
	\Wmn{\vec{[1,\blx]}}(\dout_{1}^{\blx}|\vec{\dinp}_{1}^{\blx}) 
	&\!=\!
	\Wmn{1}(\dout_{1}|\vec{\dinp\!}_{\!1})
	\prod\nolimits_{\tin=2}^{\blx}
	\Wmn{\tin}(\dout_{\tin}|\vec{\dinp\!}_{\!\tin}(\dout_{1}^{\tin-1})) 
	\end{align}
	for all \(\vec{\dinp}_{1}^{\blx}\!\in\!\vec{\inpS}_{1}^{\blx}\) 
	and \(\dout_{1}^{\blx}\!\in\!\outS_{1}^{\blx}\)
	where \(\vec{\inpS}_{\tin}\!=\!{\inpS_{\tin}}^{\outS_{1}^{\tin-1}}\)
	for \(\tin\geq 2\)
	and \(\vec{\inpS}_{1}\!=\!\inpS_{1}\).
A DPC with feedback \(\Wmn{\vec{[1,\blx]}}\) is stationary, 
i.e., it is a DSPC with feedback, iff all \(\Wmn{\tin}\)'s are identical. 
\end{definition}

Broadly speaking, 
a channel code is a strategy to convey from the transmitter 
at the input of the channel to the receiver at the output 
of the channel, a random choice from a finite message set. 
The channel codes are usually described in terms of
the amount of information they convey  
per channel use, i.e., in terms of their rate. 
In particular, a rate \(\rate\)  \emph{channel code} on 
a length \(\blx\) DPC with feedback \(\Wmn{\vec{[1,\blx]}}\) 
is an ordered pair \((\enc,\dec)\) composed 
of the \emph{encoding function}  \(\enc\) 
that maps the message set 
\(\mesS\DEF\{1,2,\ldots,\lceil e^{\blx \rate}\rceil\}\) 
to the input set \(\vec{\inpS}_{1}^{\blx}\)
and the \emph{decoding function} \(\dec\)
that maps the output set \(\outS_{1}^{\blx}\) 
to the message set \(\mesS\). 

\emph{The average error probability} \(\Pem{av}\) of
a rate \(\rate\) channel code \((\enc,\dec)\) 
on a length \(\blx\) DPC with feedback \(\Wmn{\vec{[1,\blx]}}\)
is
\begin{align}
\label{eq:def:errorprobability}
\Pem{av} 
&\DEF\tfrac{1}{\lceil e^{\blx \rate}\rceil} \sum\nolimits_{\dmes\in \mesS} \Pem{\dmes},
\end{align}
where \(\Pem{\dmes}\) is \emph{the conditional error probability} 
of the message \(\dmes\)  given by 
\begin{align}
\label{eq:def:conditionalerrorprobability}
\Pem{\dmes}
&\DEF 1-\sum\nolimits_{\dout_{1}^{\blx}\in \outS_{1}^{\blx}}\!
\TCIND{\{\dec(\dout_{1}^{\blx})\}}{\dmes}
\Wmn{\vec{[1,\blx]}}(\dout_{1}^{\blx}|\enc(\dmes)).
\end{align}

\subsection{Main Result}\label{sec:mainresult}
\begin{definition}\label{def:information}
	For any \(\rno\!\in\!(0,1]\), \(\Wm\!\in\!\pdis{\outS|\inpS}\), 
	and \(\mP\!\in\!\pdis{\inpS}\)
	\emph{the order-\(\rno\) R\'enyi information for prior \(\mP\)} is 
	\begin{align}
	\notag
	\RMI{\rno}{\mP}{\Wm} 
	&\!\DEF\!\!
	\begin{cases}
	\!\tfrac{\rno}{\rno-1}\ln\!\sum\limits_{\dout}
	\left[\sum\limits_{\dinp}\mP(\dinp)\left[\Wm(\dout|\dinp)\right]^{\rno}\right]^{\sfrac{1}{\rno}}
	&\rno\!\in\!(0,1)
	\\
	\sum\limits_{\dinp}\mP(\dinp)
	\sum\limits_{\dout}\Wm(\dout|\dinp)\ln\tfrac{\Wm(\dout|\dinp)}{\qmn{1,\mP}(\dout)}
	& \rno\!=\!1
	\end{cases},
	\end{align} 
	where \(\qmn{1,\mP}\in\pdis{\outS}\) is defined as \(\qmn{1,\mP}(\dout)\!\DEF\!\sum_{\dinp}\mP(\dinp)\Wm(\dout|\dinp)\).
\end{definition}
\begin{definition}\label{def:capacity}
	For any \(\rno\!\in\!(0,1]\)  and \(\Wm\!\in\!\pdis{\outS|\inpS}\) 
	\emph{the order-\(\rno\) R\'enyi capacity of \(\Wm\)} is
	\begin{align}
	\notag
%	\label{eq:def:capacity}
	\RC{\rno}{\Wm} 
	&\DEF \sup\nolimits_{\mP \in \pdis{\inpS}}  \RMI{\rno}{\mP}{\Wm}.
	\end{align}
\end{definition}
Both the R\'enyi information and the R\'enyi capacity are 
continuous non-decreasing functions of the order \(\rno\) 
on \((0,1]\), 
see \cite[Lemmas 5 and 15]{nakiboglu19A}.  
%\cite[Lemmas \ref*{A-lem:informationO} and \ref*{A-lem:capacityO}]{nakiboglu19A}. 
We define the order-\(0\) R\'enyi capacity as the continuous extension 
of the R\'enyi capacity at zero:\footnote{The order-\(0\) R\'enyi information is defined in a similar way 
	and the supremum  \(\RMI{0}{\mP}{\Wm}\)
	over \(\mP\)'s in \(\pdis{\inpS}\) is equal to 
	\(\RC{0}{\Wm}\), as defined in \eqref{eq:def:capacity-zero},
	see \cite[Lemma 16-(f)]{nakiboglu19A}.
%	\cite[Lemma \ref*{A-lem:finitecapacity}-(\ref*{A-finitecapacity-uecO})]{nakiboglu19A}.
}
\begin{align}
\label{eq:def:capacity-zero}
\RC{0}{\Wm}
\DEF \lim\nolimits_{\rno\downarrow 0}\RC{\rno}{\Wm}.
\end{align}
\begin{definition}\label{def:spherepacking}
	For any stochastic matrix \(\Wm\!\in\!\pdis{\outS|\inpS}\) and  
	rate \(\rate\!\in\!\reals{\geq0}\), 
	\emph{the sphere packing exponent} (SPE) is
	\begin{align}
	\notag
%	\label{eq:def:spherepacking}
	\spe{\rate,\Wm}
	&\DEF \sup\nolimits_{\rno\in (0,1)} \tfrac{1-\rno}{\rno} \left(\RC{\rno}{\Wm}-\rate\right).
	\end{align} 
\end{definition}

Note that if \(\RC{0}{\Wm}\!=\!\RC{1}{\Wm}\),
then \(\spe{\rate,\Wm}\) is 
infinite for \(\rate\)'s in \([0,\RC{1}{\Wm})\)
and
zero for \(\rate\)'s in \([\RC{1}{\Wm},\infty)\).
For most stochastic matrices of interest, however,
  \(\RC{1}{\Wm}\!>\!\RC{0}{\Wm}\)
and consequently \(\spe{\rate,\Wm}\) is 
a convex function of \(\rate\) that is 
infinite on 
\([0,\RC{0}{\Wm})\),
monotonically decreasing and continuous in \(\rate\) 
on \((\RC{0}{\Wm},\RC{1}{\Wm}]\),
and
zero on \([\RC{1}{\Wm},\infty)\), see 
\cite[Lemma 13]{nakiboglu19B}.
%\cite[Lemma \ref*{B-lem:spherepackingexponent}]{nakiboglu19B}.
\begin{remark}\label{rem:expressions}
For orders in \((0,1)\) the R\'enyi information is just a scaled and 
reparameterized version of the Gallager's function
\(E_{0}(\rng,\mP)\) introduced in \cite{gallager65}; in particular
\begin{align}
\notag
%\label{eq:sibsongallager}
\RMI{\rno}{\mP}{\Wm}
&=\left.\tfrac{E_{0}(\rng,\mP)}{\rng}\right\vert_{\rng=\frac{1-\rno}{\rno}}
&
&\forall \rno\in (0,1).
\end{align}
In \cite{shannonGB67A}, 
the function \(E_{0}(\rng)\) is defined as the maximum of Gallager's 
function \(E_{0}(\rng,\mP)\) over \(\mP\)'s in \(\pdis{\inpS}\).
Thus
\begin{align}
\notag
\RC{\rno}{\Wm}
&=\left.\tfrac{E_{0}(\rng)}{\rng}\right\vert_{\rng=\frac{1-\rno}{\rno}}
&
&\forall \rno\in (0,1).
\end{align}
Consequently, Definition \ref{def:spherepacking} 
is merely a reparameterization of the definition used
by Shannon, Gallager, and Berlekamp in \cite[Thm. 2]{shannonGB67A}.
In \cite{haroutunian68}, Haroutunian employed another expression for the
SPE, which he proved to be equal to the one
in \cite{shannonGB67A}.
This expression is commonly known as Haroutunian's form.
\end{remark}

\begin{theorem}\label{proposition:SPBF}
	For any \(\Wm\!\in\!\pdis{\outS|\inpS}\) satisfying \(\RC{0}{\Wm}\!\neq\!\RC{1}{\Wm}\),
	and \(\rate_{0}\), \(\rate_{1}\) satisfying
	\(\RC{0}{\Wm}<\rate_{0}<\rate_{1}<\RC{1}{\Wm}\), 
	for all \(\blx\) large enough 
	\begin{align}
	\label{eq:thm:SPBF}
	\Pem{av}
	&\geq \exp \left(-\blx \left[\spe{\rate-\tfrac{2\ln \blx}{\blx^{\sfrac{1}{3}}},\Wm}+\tfrac{2\ln \blx}{\blx^{\sfrac{1}{3}}}\right] \right)	 
	\end{align}
	for any rate \(\rate\) channel code on the length \(\blx\) DSPC 
	with feedback \(\Wmn{\vec{[1,\blx]}}\)  satisfying
	\(\Wmn{\tin}=\Wm\) 
%	for all \(\tin\in \{1,\ldots,\blx\}\) 
	provided \(\rate\) satisfies 
	\begin{align}
	\label{eq:thm:SPBF-hypothesis}
	\rate_{1}>\rate 	
	&>\rate_{0}+\tfrac{2\ln \blx}{\blx^{\sfrac{1}{3}}}.
	\end{align}
\end{theorem}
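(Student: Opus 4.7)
The plan is to follow Augustin's binary-hypothesis-testing approach with a single-letter product auxiliary measure, and to handle the non-product output law induced by the feedback via a combination of tilting and a subblock decomposition.

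\textbf{Setup and reduction.} By the hypothesis $\RC{0}{\Wm}<\rate_0<\rate<\rate_1<\RC{1}{\Wm}$ and the continuity of $\RC{\cdot}{\Wm}$ noted after Definition~\ref{def:spherepacking}, the supremum defining $\spe{\rate,\Wm}$ is attained at some $\rno^{\star}\in(0,1)$. Let $\qms{\rno^{\star}}$ be the order-$\rno^{\star}$ R\'enyi center of $\Wm$, which satisfies $\RD{\rno^{\star}}{\Wm(\dinp)}{\qms{\rno^{\star}}}\leq \RC{\rno^{\star}}{\Wm}$ for every $\dinp\in\inpS$. Set $\Um\DEF\qms{\rno^{\star}}^{\otimes\blx}$, write $\mP_{\dmes}\DEF\Wmn{\vec{[1,\blx]}}(\cdot|\enc(\dmes))$, and $\Llr_{\dmes}\DEF\ln(d\mP_{\dmes}/d\Um)$. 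The elementary Neyman-Pearson inequality $\mP_\dmes(\dec^{-1}(\dmes))\leq e^\tau \Um(\dec^{-1}(\dmes))+\mP_\dmes(\Llr_\dmes>\tau)$, combined with the disjointness bound $\sum_\dmes\Um(\dec^{-1}(\dmes))\leq 1$ and \eqref{eq:def:errorprobability}, reduces the task to producing a lower bound, uniform in $\enc$, on the right tail $\mP_\dmes(\Llr_\dmes>\tau)$ for some $\dmes$, at a threshold $\tau$ that sits just below $\blx\rate$ with margin of order $\blx\spe{\rate,\Wm}$.

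\textbf{Tilting and Augustin averaging.} The product structure of $\Um$ together with the sequential factorization of $\Wmn{\vec{[1,\blx]}}$ implies that the moment generating function $\EXS{\mP_{\dmes}}{e^{(\rno^{\star}-1)\Llr_{\dmes}}}$ factorizes sequentially through the feedback, and the sphere-packing property of $\qms{\rno^{\star}}$ upper bounds each sequential factor by $e^{(\rno^{\star}-1)\RC{\rno^{\star}}{\Wm}}$; a Markov argument then yields the one-sided bound $\mP_\dmes(\Llr_\dmes\leq\tau)\leq e^{(1-\rno^{\star})(\tau-\blx\RC{\rno^{\star}}{\Wm})}$. To convert this into the \emph{lower} bound on $\mP_\dmes(\Llr_\dmes>\tau)$ that the reduction actually needs, I would change measure to the sequentially tilted law $\widetilde{\mP}_{\dmes}(\dout_1^\blx)\propto\prod_{\tin=1}^{\blx}\Wm(\dout_\tin|\vec{\dinp}_\tin(\dout_1^{\tin-1}))^{\rno^{\star}}\qms{\rno^{\star}}(\dout_\tin)^{1-\rno^{\star}}$, whose normalizing constant $Z_\dmes$ satisfies $\ln Z_\dmes\geq\blx(\rno^{\star}-1)\RC{\rno^{\star}}{\Wm}$, so that $d\mP_\dmes/d\widetilde{\mP}_\dmes=Z_\dmes e^{(1-\rno^{\star})\Llr_\dmes}$ transfers a tilted lower bound $\widetilde{\mP}_\dmes(\Llr_\dmes>\tau)\geq\tfrac{1}{2}$ into the desired lower bound on $\mP_\dmes(\Llr_\dmes>\tau)$. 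Under $\widetilde{\mP}_\dmes$ the successive increments of $\Llr_\dmes$ form a stochastic sequence in the sense of \S\ref{sec:notation}, with conditional means along the realized output history that lag $\RC{\rno^{\star}}{\Wm}$ by exactly the slack $\RC{\rno^{\star}}{\Wm}-\RD{\rno^{\star}}{\Wm(\vec{\dinp}_\tin)}{\qms{\rno^{\star}}}$; integrating this slack over the induced output history rather than pinning $\vec{\dinp}_\tin$ to a single input letter is \emph{Augustin's averaging} idea.

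\textbf{Subblocks and the main obstacle.} The remaining ingredient is a second-moment concentration bound on the centred tilted process, and this is precisely where Augustin's \emph{subblock} idea enters: even under $\widetilde{\mP}_\dmes$ the increments are not independent, since both the encoder input $\vec{\dinp}_\tin$ and the tilted single-letter law at step $\tin$ depend on $\dout_1^{\tin-1}$ through the feedback. I would partition $\{1,\ldots,\blx\}$ into $K\approx\blx^{\sfrac{1}{3}}$ consecutive subblocks of length $\ell\approx\blx^{\sfrac{2}{3}}$, bound the conditional variance of the tilted log-likelihood within each subblock by a single-letter constant times $\ell$, and combine the $K$ per-subblock Chebyshev-type tail estimates by a union bound; once this tilted tail bound is propagated through the change-of-measure identity and then balanced against $e^{\tau-\blx\rate}$ at the chosen $\rno^{\star}$, the result is exactly the $\spe{\rate-2\ln\blx/\blx^{\sfrac{1}{3}},\Wm}+2\ln\blx/\blx^{\sfrac{1}{3}}$ correction of \eqref{eq:thm:SPBF}. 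The hard part is this subblock concentration step: one must control the conditional variance of the tilted increments uniformly over the feedback history and over the choice of encoder, and then tune $(K,\ell)$ so that neither the cumulative slack from the $K$ subblock union bounds nor the exponent lost in the change of measure overwhelms the desired exponent $\blx\spe{\rate,\Wm}$.
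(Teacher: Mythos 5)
Your high-level scaffolding is right: subblocks, a tilted change of measure, second-moment/Chebyshev concentration, and a pigeon-hole step are indeed the ingredients of the paper's Lemma~\ref{lem:SPBF}. But the proposal has a gap at exactly the point where feedback makes the problem hard, and it also misidentifies what ``Augustin's averaging'' is.

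The gap is that you fix a single order \(\rno^{\star}\) a priori (the optimizer of \(\spe{\rate,\Wm}\)) and tilt every letter at that same order, with auxiliary measure \(\Um=\qms{\rno^{\star}}^{\otimes\blx}\). That is the standard fixed-type sphere-packing argument, and it is precisely what breaks with feedback. Under the tilted law \(\widetilde{\mP}_{\dmes}\) the conditional drift of \(\Llr_{\dmes}\) at step \(\tin\) is \(\RD{\rno^{\star}}{\Wm(\vec{\dinp\!}_{\tin}(\dout_1^{\tin-1}))}{\qms{\rno^{\star}}}\), which the encoder controls through the feedback link. Without a composition/type constraint there is nothing forcing this drift to accumulate near \(\blx\RC{\rno^{\star}}{\Wm}\); for part of the ensemble of output histories and messages the cumulative drift can be much smaller, and then a fixed \(\tau\approx\blx\rate\) threshold is crossed with the wrong concentration behavior. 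You cannot simultaneously control the change-of-measure factor toward \(\Um\) (the rate side) and the factor toward \(\mP_{\dmes}\) (the exponent side) with a single, history-independent \(\rno^{\star}\). This is the feedback obstruction that the paper's construction is designed to circumvent.

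What the paper actually does (and what you are missing) has two parts. First, the tilting order is \emph{adaptive and per-subblock}: the paper extends the sample space with random variables \(\sta_1,\ldots,\sta_\knd\) (one per subblock) and picks the functions \(\gX_1,\ldots,\gX_\knd\) in \S\ref{sec:tuning} so that the conditional mean of the subblock log-likelihood increment \(\qrv_\ind\) under \(\prob_{\!\mV}\), given the past, is pinned at \(\ell_\ind(\rate-\delta_1)\) whenever possible; only when the drift is already small does it saturate at \(\rnt/(1-\epsilon)\), where \(\tfrac{1-\rnt}{\rnt}\RC{\rnt}{\Wm}=\spe{\rate_1,\Wm}\) makes the other change-of-measure factor (\(\vrv_\ind\)) come out right. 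This tuning is what replaces the missing type constraint. Second, ``Augustin's averaging'' is \emph{not} ``integrating the slack over output histories''; it is the smearing of the tilting order over a small uniform window of width \(\approx\epsilon\) in \eqref{eq:LSPBF_01}. Because the adaptive order \(\grv_\ind\) depends on \((\dmes,\dout_1^{\tin_{\ind-1}})\), the resulting \(\prob_{\!\mQ}\) still depends on \(\mes\) and the naive disjointness bound \(\sum_{\dmes}\Um(\dec^{-1}(\dmes))\leq 1\) is unavailable. The smearing, combined with the pigeon-hole over the \(\blx^{\knd}\) \(\knd\)-cubes \(\oev_{\zeta}\) in \S\ref{sec:pigeonhole} and the fixed message-independent reference measure \(\prob_{\!\mU}\), is exactly how the paper recovers a disjointness bound up to a factor \((\tilde\epsilon/\epsilon)^{\knd}\). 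Your proposal, with a fixed product \(\Um\) and a Neyman--Pearson reduction that does not introduce any pigeon-hole in the order variable, never confronts this issue, which is the actual content of Augustin's argument.
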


Note that 
\(\tfrac{2\ln \blx}{\blx^{\sfrac{1}{3}}}\) terms 
in \eqref{eq:thm:SPBF} and \eqref{eq:thm:SPBF-hypothesis}
vanish as \(\blx\)  increases; 
thus Theorem \ref{proposition:SPBF} establishes
the SPE as an upper bound on the error exponent of 
any DSPC with feedback 
at any rate in \((\!\RC{0}{\Wm},\!\RC{1}{\Wm}\!)\),
provided that \(\Wmn{\tin}\!=\!\Wm\) for all \(\tin\).
In fact this result holds with uniform approximation error 
terms on every closed interval of rates in 
\((\!\RC{0}{\Wm},\!\RC{1}{\Wm}\!)\),
as a result of Theorem \ref{proposition:SPBF}.
For rates less than \(\RC{0}{\Wm}\), SPE is infinite; thus 
the upper bound holds trivially.
For rates larger than \(\RC{1}{\Wm}\), we already know that 
the optimal error probability 
of the channel codes
converges to one by \cite{csiszarK82,sheverdyaev82}.

\section{Preliminaries}\label{sec:preliminaries}
\subsection{R\'enyi's Information Measures and SPE}\label{sec:renyi}
R\'enyi's information measures have been studied 
explicitly \cite{renyi61,sibson69,csiszar95}
or implicitly \cite{gallager65,shannonGB67A} 
since the sixties. 
For the finite sample space case, 
the propositions about them that we borrow from 
\cite{nakiboglu19A} and \cite{ervenH14} in the following 
are relatively easy to prove
and well-known, except for Lemma \ref{lem:centercontinuity} establishing 
the continuity of the R\'enyi center as a function of the order.
Lemma \ref{lem:intermediate} states an immediate corollary of 
the monotonicity properties of the R\'enyi capacity and the definition
of the SPE.

\begin{definition}\label{def:divergence}
	For any \(\rno\!\in\!(0,1]\) and \(\mW,\mQ\!\in\!\pdis{\outS}\),
	\emph{the order-\(\rno\) R\'enyi divergence between \(\mW\) and \(\mQ\)} is 
	\begin{align}
	\notag
%	\label{eq:def:divergence}
	\RD{\rno}{\mW}{\mQ}
	&\!\DEF\! 
	\begin{cases}
	\sum\nolimits_{\dout} \mW(\dout)\ln \tfrac{\mW(\dout)}{\mQ(\dout)}
	&\rno\!=\!1
	\\
	\tfrac{1}{\rno-1} \ln \sum\nolimits_{\dout}\left[\mW(\dout)\right]^{\rno} \left[\mQ(\dout)\right]^{1-\rno}
	&\rno\!\neq\!1 
	\end{cases}.
	\end{align} 
\end{definition}
Note that for all \(\rno\!\in\!(0,1)\) and \(\mW,\mQ\!\in\!\pdis{\outS}\)
we have
\begin{align}
\label{eq:symmetry}
\tfrac{1-\rno}{\rno}\RD{\rno}{\mW}{\mQ}
&=\RD{1-\rno}{\mQ}{\mW}
\end{align}
by definition.
Using the derivatives of \(e^{(\rno-1)\RD{\rno}{\mW}{\mQ}}\)
with respect to \(\rno\),
one can show that as a function of its order the R\'enyi divergence 
is nondecreasing on \((0,1)\) and  continuous from the left at one. 
Thus, we get the following proposition.
\begin{lemma}[\!\!{\cite[Thms. 3, 7]{ervenH14}}]\label{lem:divergence}
For any \(\mW,\mQ\!\in\!\pdis{\outS}\), the R\'enyi divergence 
	\(\RD{\rno}{\mW}{\mQ}\) is nondecreasing 
	and continuous in \(\rno\) on \((0,1]\).
\end{lemma}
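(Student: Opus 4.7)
The plan is to set \(F(\rno) \DEF \sum_{\dout : \mQ(\dout) > 0}\mW(\dout)^{\rno}\mQ(\dout)^{1-\rno}\), which equals \(e^{(\rno-1)\RD{\rno}{\mW}{\mQ}}\) for \(\rno \in (0,1)\); the case in which the supports of \(\mW\) and \(\mQ\) are disjoint reduces \(\RD{\rno}{\mW}{\mQ}\) to the constant \(+\infty\) on \((0,1]\) and is trivial, so I would assume \(F > 0\) henceforth. The function \(F\) is smooth and strictly positive on \((0,1]\), and writing \(\phi \DEF \ln F\), the identity \(\RD{\rno}{\mW}{\mQ} = \phi(\rno)/(\rno-1)\) on \((0,1)\) reduces both monotonicity and continuity to analytic properties of \(\phi\).

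The first step is to establish convexity of \(\phi\) on \((0,1]\). For \(\rno_0, \rno_1 \in (0,1]\), \(\lambda \in [0,1]\), and \(\rno_\lambda \DEF (1-\lambda)\rno_0 + \lambda\rno_1\), applying H\"older's inequality with conjugate exponents \(\tfrac{1}{1-\lambda}\) and \(\tfrac{1}{\lambda}\) to the factorization \(\mW(\dout)^{\rno_\lambda}\mQ(\dout)^{1-\rno_\lambda} = \bigl[\mW(\dout)^{\rno_0}\mQ(\dout)^{1-\rno_0}\bigr]^{1-\lambda}\bigl[\mW(\dout)^{\rno_1}\mQ(\dout)^{1-\rno_1}\bigr]^{\lambda}\) yields \(F(\rno_\lambda) \leq F(\rno_0)^{1-\lambda} F(\rno_1)^{\lambda}\), i.e., convexity of \(\phi\).

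Next I would prove monotonicity on \((0,1)\) by direct differentiation. The sign of \(\tfrac{d}{d\rno}[\phi(\rno)/(\rno-1)]\) matches that of \(h(\rno) \DEF \phi'(\rno)(\rno-1) - \phi(\rno)\), and a second differentiation collapses, via cancellation of the \(\phi'\) terms, to \(h'(\rno) = \phi''(\rno)(\rno-1)\), which is nonpositive on \((0,1)\) by the convexity of \(\phi\) just established; thus \(h\) is nonincreasing on \((0,1)\). Since \(F(1) = \sum_{\dout : \mQ(\dout) > 0} \mW(\dout) \leq 1\) forces \(\phi(1) \leq 0\), one has \(\lim_{\rno \uparrow 1} h(\rno) = -\phi(1) \geq 0\), so \(h \geq 0\) throughout \((0,1)\), and the claimed nondecreasingness follows.

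For continuity, the open interval \((0,1)\) is handled by the smoothness of \(\phi\). At \(\rno = 1\) there is a case split. If \(\supp{\mW} \subseteq \supp{\mQ}\), then \(\phi(1) = 0\) and L'H\^opital's rule yields \(\lim_{\rno \uparrow 1} \phi(\rno)/(\rno-1) = \phi'(1) = F'(1) = \sum_{\dout} \mW(\dout)\ln[\mW(\dout)/\mQ(\dout)] = \RD{1}{\mW}{\mQ}\). Otherwise \(\phi(1) < 0\), the denominator \(\rno - 1\) tends to \(0^-\), and the ratio diverges to \(+\infty\), matching the definitional value \(\RD{1}{\mW}{\mQ} = +\infty\). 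The main obstacle is precisely this bookkeeping at \(\rno = 1\) under the second regime: the natural extension of \(F\) via \(0^{0}=1\) is not left-continuous at \(\rno=1\), so one must work with the one-sided limit \(F(1^-)\) and verify separately that the definitional value of \(\RD{1}{\mW}{\mQ}\) agrees with \(\lim_{\rno \uparrow 1}\RD{\rno}{\mW}{\mQ}\) in each of the two support regimes.
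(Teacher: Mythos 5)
Your proof is correct. The paper itself does not prove this lemma — it cites \cite[Thms.~3, 7]{ervenH14} and offers only the one-line hint that one should examine ``the derivatives of \(e^{(\rno-1)\RD{\rno}{\mW}{\mQ}}\) with respect to \(\rno\).'' Your argument is a faithful implementation of exactly that hint: you set \(F(\rno)=e^{(\rno-1)\RD{\rno}{\mW}{\mQ}}\) and \(\phi=\ln F\), establish log-convexity of \(F\) by H\"older, and then deduce both claims from the sign analysis of \(h(\rno)=\phi'(\rno)(\rno-1)-\phi(\rno)\) via \(h'=\phi''\cdot(\rno-1)\le 0\) and the boundary value \(h(1^{-})=-\phi(1)\ge 0\). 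The computation is right, and the case split at \(\rno=1\) correctly matches \(\phi'(1)=F'(1)=\RD{1}{\mW}{\mQ}\) when \(\supp\mW\subseteq\supp\mQ\) and \(+\infty=+\infty\) otherwise. Two small remarks, neither a gap: combining ``nondecreasing on \((0,1)\)'' with ``left-continuous at \(1\)'' is what extends the monotonicity to the closed endpoint, and it is worth stating; and the observation about the \(0^{0}\) extension of \(F\) failing left-continuity, while true, is tangential — your \(F\) (with the sum restricted to \(\mQ(\dout)>0\)) is already the smooth function on which the whole argument runs, so there is nothing to reconcile.
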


The R\'enyi divergence is non-negative 
as a result of the Jensen's inequality.
This observation has been strengthened 
by the following inequality relating 
the R\'enyi divergence to the total variation distance 
\cite{csiszar67A}, \cite{gilardoni10B},
called the Pinsker's inequality:
\begin{align}
\label{eq:pinsker}
\RD{\rno}{\mW}{\mQ}
&\geq \tfrac{\rno}{2} \lon{\mW-\mQ}^{2} 
\end{align}
for all \(\rno\!\in\!(0,1]\) and \(\mW,\mQ\!\in\!\pdis{\outS}\).
\begin{definition}\label{def:radius}
For any \(\rno\!\in\!(0,1]\)  and \(\Wm\!\in\!\pdis{\outS|\inpS}\) 
\emph{the order-\(\rno\) R\'enyi radius of \(\Wm\)} is
	\begin{align}
	\notag
%	\label{eq:def:radius}
\RR{\rno}{\Wm}
&\!\DEF\!\inf\nolimits_{\mQ\in\pdis{\outS}}\max\nolimits_{\dinp \in \inpS} \RD{\rno}{\Wm(\dinp)}{\mQ}.
	\end{align}
\end{definition}
The order-\(\rno\) R\'enyi capacity is defined as the supremum of
the order-\(\rno\) R\'enyi information; however, it is also equal to 
the order-\(\rno\) R\'enyi radius, \cite[Proposition 1]{csiszar95}.
In addition, there exists a unique order-\(\rno\) R\'enyi center
corresponding to this radius.
These observations are stated formally in Lemma \ref{thm:minimax}.
\begin{lemma}[\!\!{\cite[Thm. 1]{nakiboglu19A}}]
%\begin{lemma}[\!\!{\cite[Thm. \ref*{A-thm:minimax}]{nakiboglu19A}}]
\label{thm:minimax}
For any \(\rno\!\in\!(0,1]\)  and \(\Wm\!\in\!\pdis{\outS|\inpS}\)
	\begin{align}
	\label{eq:thm:minimaxradius}
	\RC{\rno}{\Wm}
	&\!=\!\inf\nolimits_{\mQ\in\pdis{\outS}}\max\nolimits_{\dinp \in \inpS} \RD{\rno}{\Wm(\dinp)}{\mQ}.
	\end{align}
Furthermore, there exists a unique \(\qmn{\rno,\Wm}\) in \(\pdis{\outS}\),
	called the order-\(\rno\) R\'enyi center of \(\Wm\!\), such that
	\begin{align}
	\label{eq:thm:minimaxradiuscenter}
	\RC{\rno}{\Wm}
	&\!=\!\max\nolimits_{\dinp \in \inpS} \RD{\rno}{\Wm(\dinp)}{\qmn{\rno,\Wm}}.
	\end{align}
\end{lemma}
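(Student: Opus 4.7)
The plan is to prove the two claims — the minimax identity and the existence/uniqueness of the R\'enyi center — in three parts: the easy direction of the minimax identity via Sibson's identity, the hard direction via a Karush--Kuhn--Tucker-type perturbation argument, and uniqueness from strict convexity of the R\'enyi divergence.

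For the easy direction $\RC{\rno}{\Wm}\leq \inf_\mQ \max_\dinp \RD{\rno}{\Wm(\dinp)}{\mQ}$, I introduce the Sibson tilted output measure $\qma{\rno,\mP}{\star}$ defined by
\[
\qma{\rno,\mP}{\star}(\dout) \propto \Big[\sum\nolimits_\dinp \mP(\dinp)\,\Wm(\dout|\dinp)^{\rno}\Big]^{1/\rno},
\]
which at $\rno\!=\!1$ reduces to the paper's $\qmn{1,\mP}$. A direct algebraic manipulation yields the Sibson identity
\[
\tfrac{1}{\rno-1}\ln\!\sum\nolimits_\dinp\mP(\dinp)\,e^{(\rno-1)\RD{\rno}{\Wm(\dinp)}{\mQ}} = \RMI{\rno}{\mP}{\Wm}+\RD{\rno}{\qma{\rno,\mP}{\star}}{\mQ}
\]
for every $\mQ\in\pdis{\outS}$. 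Since the right-most divergence is nonnegative and, for $\rno\in(0,1)$ where each $(\rno-1)\RD{\rno}{\Wm(\dinp)}{\mQ}\leq 0$, the log-mean on the left is bounded above by $\max_\dinp\RD{\rno}{\Wm(\dinp)}{\mQ}$ (dividing through by the negative factor $\rno-1$ reverses the elementary weighted-average bound), one gets $\RMI{\rno}{\mP}{\Wm}\leq \max_\dinp \RD{\rno}{\Wm(\dinp)}{\mQ}$ for every $\mP,\mQ$. Supping over $\mP$ and infing over $\mQ$ gives the inequality. The $\rno\!=\!1$ case reduces to the classical divergence-decomposition identity $\sum_\dinp\mP(\dinp)\RD{1}{\Wm(\dinp)}{\mQ}=\RMI{1}{\mP}{\Wm}+\RD{1}{\qmn{1,\mP}}{\mQ}$.

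For the hard direction $\RC{\rno}{\Wm}\geq \inf_\mQ \max_\dinp \RD{\rno}{\Wm(\dinp)}{\mQ}$, observe that $\mP\mapsto \RMI{\rno}{\mP}{\Wm}$ is continuous on the compact simplex $\pdis{\inpS}$, so some $\mP^*$ attains $\RC{\rno}{\Wm}$. The crucial step is the KKT-type condition
\[
\RD{\rno}{\Wm(\dinp)}{\qma{\rno,\mP^*}{\star}}\leq \RC{\rno}{\Wm} \qquad\forall\,\dinp\in\inpS,
\]
to be proved by contradiction: if some $\dinp_0$ violated it, then expanding $\RMI{\rno}{(1-\epsilon)\mP^*+\epsilon\delta_{\dinp_0}}{\Wm}$ to first order in $\epsilon$ via the Sibson identity produces a derivative equal to an explicit positive multiple of $\RD{\rno}{\Wm(\dinp_0)}{\qma{\rno,\mP^*}{\star}}-\RC{\rno}{\Wm}$, contradicting maximality of $\mP^*$. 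The KKT condition immediately gives $\max_\dinp\RD{\rno}{\Wm(\dinp)}{\qma{\rno,\mP^*}{\star}}\leq \RC{\rno}{\Wm}$, hence $\RR{\rno}{\Wm}\leq \RC{\rno}{\Wm}$, and simultaneously identifies $\qma{\rno,\mP^*}{\star}$ as a minimizer of $h(\mQ)\!:=\!\max_\dinp\RD{\rno}{\Wm(\dinp)}{\mQ}$ — establishing existence of the R\'enyi center. For uniqueness, note that $h$ is convex on $\pdis{\outS}$ as a finite max of $\mQ$-convex R\'enyi divergences; if two minimizers $\mQ_0\neq\mQ_1$ existed, strict convexity of $\RD{\rno}{\Wm(\dinp)}{\cdot}$ on any direction meeting $\supp{\Wm(\dinp)}$ would give $\RD{\rno}{\Wm(\dinp)}{(\mQ_0+\mQ_1)/2}<\RC{\rno}{\Wm}$ at every active $\dinp$ whose $\Wm(\dinp)$-support meets $\{\mQ_0\neq \mQ_1\}$; the degenerate case is ruled out by the easy observation that any minimizer of $h$ is supported on $\bigcup_\dinp\supp{\Wm(\dinp)}$, yielding $h((\mQ_0+\mQ_1)/2)<\RC{\rno}{\Wm}$ and a contradiction.

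The main obstacle is the perturbation computation in the hard direction: one must differentiate the log-sum-over-$\dout$ expression defining $\RMI{\rno}{\mP}{\Wm}$ at $\mP^*$ in the direction $\delta_{\dinp_0}-\mP^*$ and recognize the resulting expression as a positive multiple of $\RD{\rno}{\Wm(\dinp_0)}{\qma{\rno,\mP^*}{\star}}-\RC{\rno}{\Wm}$. This identification — essentially the content of Sibson's identity after one derivative — is the analytic core of the lemma: it delivers the saddle point of the minimax, the fact that the tilted measure at the optimal prior is the R\'enyi center, and the equality of capacity and radius, all at once. A mild technical issue is the one-sided nature of the derivative when $\mP^*(\dinp_0)=0$, which is handled by using the right-derivative only and by the observation that the Sibson expression is smooth in $\mP$ on the relative interior of the simplex spanned by $\supp{\mP^*}\cup\{\dinp_0\}$.
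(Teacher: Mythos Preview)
The paper does not contain its own proof of this lemma: it is stated with a citation to \cite[Thm.~1]{nakiboglu19A} and no proof is given here, so there is nothing in the present paper to compare your argument against. Your outline is the standard one (Sibson's identity for the easy direction, a KKT/perturbation argument at an optimal prior for the hard direction, and convexity for uniqueness), and it is consistent with the tools the paper does invoke --- in particular the Sibson identity \eqref{eq:sibson} and the existence of an optimal prior via compactness, both of which appear in the proof of Lemma~\ref{lem:centercontinuity}. One point to tighten in your uniqueness sketch: you should argue more carefully that any minimizer $\mQ$ of $h(\mQ)=\max_{\dinp}\RD{\rno}{\Wm(\dinp)}{\mQ}$ must be supported on $\bigcup_{\dinp}\supp{\Wm(\dinp)}$ (mass outside this union can be removed and renormalized, strictly decreasing every $\RD{\rno}{\Wm(\dinp)}{\cdot}$), and that if $\mQ_0\neq\mQ_1$ are both minimizers then at least one $\dinp$ active for the midpoint has $\supp{\Wm(\dinp)}$ meeting $\{\dout:\mQ_0(\dout)\neq\mQ_1(\dout)\}$, so that strict convexity actually bites; as written, your ``degenerate case'' dismissal is a bit quick.
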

The R\'enyi capacity is nondecreasing in its order on \((0,1]\)
as a result of Lemmas \ref{lem:divergence} and \ref{thm:minimax}.
Furthermore, \(\tfrac{1-\rno}{\rno}\!\RC{\rno}{\Wm\!}\) is nonincreasing 
in \(\rno\) on \((0,1)\), as a result of  \eqref{eq:symmetry}
and  Lemmas \ref{lem:divergence} and \ref{thm:minimax}.
This implies the continuity of \(\RC{\rno}{\Wm}\) in \(\rno\) on \((0,1)\),
which can be extended to \((0,1]\).
\begin{lemma}[\!\!\!{\cite[Lemma\! 15-(a,c)]{nakiboglu19A}}]
%\begin{lemma}[\!\!\!{\cite[Lemma\! \ref*{A-lem:capacityO}-(\ref*{A-capacityO-ilsc},\ref*{A-capacityO-zo})]{nakiboglu19A}}]
\label{lem:capacityO}
For any \(\Wm\!\in\!\pdis{\outS|\inpS}\),
\(\RC{\rno}{\Wm}\) is nondecreasing and continuous in \(\rno\) on \((0,1]\)
and \(\tfrac{1-\rno}{\rno}\RC{\rno}{\Wm}\) is nonincreasing in \(\rno\) on \((0,1)\).
\end{lemma}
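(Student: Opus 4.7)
My plan is to derive all three claims from Lemmas \ref{lem:divergence} and \ref{thm:minimax} together with the symmetry identity \eqref{eq:symmetry}. By the minimax identity of Lemma \ref{thm:minimax}, $\RC{\rno}{\Wm}=\inf_{\mQ}\max_{\dinp}\RD{\rno}{\Wm(\dinp)}{\mQ}$ is a pointwise inf--max of R\'enyi divergences; since for every fixed pair $(\dinp,\mQ)$ the map $\rno\mapsto\RD{\rno}{\Wm(\dinp)}{\mQ}$ is nondecreasing on $(0,1]$ by Lemma \ref{lem:divergence}, and pointwise infima and maxima preserve monotonicity, the nondecreasingness of $\RC{\rno}{\Wm}$ on $(0,1]$ is immediate. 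For the second monotonicity claim, multiplying this identity by the positive constant $\tfrac{1-\rno}{\rno}$, moving it inside the inf and max, and applying \eqref{eq:symmetry} yields
\begin{align*}
\tfrac{1-\rno}{\rno}\RC{\rno}{\Wm}
= \inf_{\mQ\in\pdis{\outS}}\max_{\dinp\in\inpS}\RD{1-\rno}{\mQ}{\Wm(\dinp)}.
\end{align*}
Since $1-\rno$ decreases as $\rno$ increases, Lemma \ref{lem:divergence} renders the inner expression nonincreasing in $\rno$ on $(0,1)$ for every fixed $(\dinp,\mQ)$, and nonincreasingness again passes through the inf and max.

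For continuity on $(0,1)$ I would apply a squeeze driven by the two monotonicities just established. Fix $\rno_{0}\in(0,1)$ and write $C^{+}\DEF\lim_{\rno\downarrow\rno_{0}}\RC{\rno}{\Wm}$, which exists by monotonicity. The nondecreasingness of $\rno\mapsto\RC{\rno}{\Wm}$ gives $C^{+}\geq\RC{\rno_{0}}{\Wm}$, while the nonincreasingness of $\tfrac{1-\rno}{\rno}\RC{\rno}{\Wm}$ yields $\tfrac{1-\rno_{0}}{\rno_{0}}C^{+}\leq\tfrac{1-\rno_{0}}{\rno_{0}}\RC{\rno_{0}}{\Wm}$; since $\tfrac{1-\rno_{0}}{\rno_{0}}>0$, these two bounds force $C^{+}=\RC{\rno_{0}}{\Wm}$, i.e., right-continuity at $\rno_{0}$. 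The symmetric argument with $\rno\uparrow\rno_{0}$ gives left-continuity, whence continuity on all of $(0,1)$ follows.

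Extending continuity to $\rno=1$ is where \eqref{eq:symmetry} degenerates, and I expect this to be the only genuine subtlety; I would instead invoke the continuity of $\rno\mapsto\RD{\rno}{\mW}{\mQ}$ at $\rno=1$ from Lemma \ref{lem:divergence}. For every $\mQ\in\pdis{\outS}$, the function $\rno\mapsto\max_{\dinp}\RD{\rno}{\Wm(\dinp)}{\mQ}$ is a maximum over the finite set $\inpS$ of functions continuous at $\rno=1$, hence continuous at $\rno=1$ itself. Applying the bound $\RC{\rno}{\Wm}\leq\max_{\dinp}\RD{\rno}{\Wm(\dinp)}{\mQ}$ valid for every $\mQ$, then taking $\limsup_{\rno\uparrow 1}$, and finally an infimum over $\mQ$ together with Lemma \ref{thm:minimax} gives $\limsup_{\rno\uparrow 1}\RC{\rno}{\Wm}\leq\RC{1}{\Wm}$, which combined with the nondecreasingness already proved closes the argument.
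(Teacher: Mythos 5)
Your treatment of the two monotonicity claims and of continuity on the open interval \((0,1)\) is correct and follows exactly the route the paper sketches: Lemmas \ref{lem:divergence} and \ref{thm:minimax} for the nondecreasingness of \(\RC{\rno}{\Wm}\), the symmetry \eqref{eq:symmetry} together with those same lemmas for the nonincreasingness of \(\tfrac{1-\rno}{\rno}\RC{\rno}{\Wm}\), and a squeeze between these two monotone quantities for interior continuity. The step that does not work is the extension to the endpoint \(\rno=1\). What must be shown there is the \emph{lower} bound \(\liminf_{\rno\uparrow 1}\RC{\rno}{\Wm}\geq\RC{1}{\Wm}\), i.e., that the nondecreasing function \(\rno\mapsto\RC{\rno}{\Wm}\) does not jump downward as \(\rno\) approaches \(1\) from the left. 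Your argument, which takes the \(\limsup\) of \(\RC{\rno}{\Wm}\leq\max_{\dinp}\RD{\rno}{\Wm(\dinp)}{\mQ}\) and then infimizes over \(\mQ\), only produces \(\limsup_{\rno\uparrow 1}\RC{\rno}{\Wm}\leq\RC{1}{\Wm}\); this is the direction that already follows trivially from the nondecreasingness you established, so ``combined with the nondecreasingness'' you have two copies of the easy inequality and none of the hard one.

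To close this gap you need a lower bound that converges to \(\RC{1}{\Wm}\). One clean way: take a \(\mP^*\in\pdis{\inpS}\) attaining \(\RMI{1}{\mP^*}{\Wm}=\RC{1}{\Wm}\) (it exists because \(\mP\mapsto\RMI{1}{\mP}{\Wm}\) is continuous on the compact set \(\pdis{\inpS}\)); then \(\RC{\rno}{\Wm}\geq\RMI{\rno}{\mP^*}{\Wm}\) for every \(\rno\), and it remains to check that \(\RMI{\rno}{\mP^*}{\Wm}\to\RMI{1}{\mP^*}{\Wm}\) as \(\rno\uparrow 1\). In the finite-alphabet setting this is the standard fact \(\lim_{\rng\downarrow 0}E_{0}(\rng,\mP^*)/\rng=\RMI{1}{\mP^*}{\Wm}\) (cf.\ Remark \ref{rem:expressions}), since \(E_{0}(0,\mP^*)=0\). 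An alternative is to pass to a convergent subsequence of the R\'enyi centers \(\qmn{\rno_{n},\Wm}\to\mQ^{*}\) and invoke joint lower semicontinuity of \((\rno,\mQ)\mapsto\RD{\rno}{\mW}{\mQ}\) together with Lemma \ref{thm:minimax}. Either way, some genuine lower-bounding input is required, and the proposal as written omits it.
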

As a result of Lemma \ref{lem:capacityO}, we have
\begin{align}
\label{eq:orderoneovertwo}
\RC{\rno}{\Wm}
&\leq \tfrac{\RC{\sfrac{1}{2}}{\Wm}}{1-\rno}
&
&\forall \rno\in(0,1). 
\end{align}
The continuity of the R\'enyi capacity in the order implies the continuity of 
the R\'enyi center in the order.
\begin{lemma}[\!\!\!{\cite[Lemma 20]{nakiboglu19A}}]
%\begin{lemma}[\!\!\!{\cite[Lemma \ref*{A-lem:centercontinuity}]{nakiboglu19A}}]
\label{lem:centercontinuity} 
The R\'enyi center is a continuous function of its order  on \((0,1]\), i.e.,
\(\lim\nolimits_{\dsta\to \rno}\lon{\qmn{\dsta,\Wm}-\qmn{\rno,\Wm}}=0\) for all 
\(\rno\in(0,1]\).
\end{lemma}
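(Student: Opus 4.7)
The plan is to combine compactness of the simplex $\pdis{\outS}$ with the uniqueness of the R\'enyi center (Lemma \ref{thm:minimax}) and the continuity of the R\'enyi capacity in its order (Lemma \ref{lem:capacityO}). Concretely, I will show that every subsequential total-variation limit point of $\{\qmn{\dsta,\Wm}\}$ as $\dsta\to\rno$ satisfies the minimax identity characterizing $\qmn{\rno,\Wm}$, and invoke uniqueness to upgrade this to convergence of the full sequence.

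Fix $\rno\in(0,1]$ and a sequence $\dsta_n\to\rno$ in $(0,1]$. Since $\pdis{\outS}$ is compact in the total variation topology (being a closed bounded subset of a finite-dimensional Euclidean space), I pass to a subsequence $\dsta_{n_k}$ along which $\qmn{\dsta_{n_k},\Wm}\to\mQ^{\star}$ for some $\mQ^{\star}\in\pdis{\outS}$. The defining property of the center (Lemma \ref{thm:minimax}) gives
\[
\RD{\dsta_{n_k}}{\Wm(\dinp)}{\qmn{\dsta_{n_k},\Wm}}\leq \RC{\dsta_{n_k}}{\Wm}\quad\text{for all }\dinp\in\inpS,\,k.
\]
Letting $k\to\infty$, Lemma \ref{lem:capacityO} yields $\RC{\dsta_{n_k}}{\Wm}\to\RC{\rno}{\Wm}$, while joint lower semicontinuity of the R\'enyi divergence in the order and the second argument yields $\RD{\rno}{\Wm(\dinp)}{\mQ^{\star}}\leq \RC{\rno}{\Wm}$ for every $\dinp\in\inpS$. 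Taking the maximum over $\dinp$ and combining with the minimax identity \eqref{eq:thm:minimaxradius} forces equality, so $\mQ^{\star}$ attains the minimax in the characterization of $\RC{\rno}{\Wm}$. The uniqueness claim of Lemma \ref{thm:minimax} then forces $\mQ^{\star}=\qmn{\rno,\Wm}$, and since this identification holds for every convergent subsequence, the whole sequence $\qmn{\dsta_n,\Wm}$ converges to $\qmn{\rno,\Wm}$ in total variation.

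The main obstacle is justifying the joint lower semicontinuity of $\RD{\dsta}{\mW}{\mQ}$ in $(\dsta,\mQ)$ at the boundary $\rno=1$, where $\RD{1}{\mW}{\mQ}$ jumps to $+\infty$ as soon as $\supp(\mW)\not\subseteq\supp(\mQ)$, so a careless limit passage could leave $\mQ^{\star}$ supported off the relevant outputs. I would handle this by treating the divergence as a $[0,\infty]$-valued function and exploiting its convexity in $\mQ$ for each fixed $\rno$ together with continuity in $\rno$ (Lemma \ref{lem:divergence}) to obtain the Fatou-type semicontinuity invoked above. The finiteness of the upper bound $\RC{\rno}{\Wm}$ then automatically forces the limit value $\RD{\rno}{\Wm(\dinp)}{\mQ^{\star}}$ to be finite and hence the support containment $\supp(\Wm(\dinp))\subseteq\supp(\mQ^{\star})$ for every $\dinp$, so no ad hoc handling of boundary cases is needed to close the argument.
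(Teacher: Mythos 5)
Your proof is correct in outline and takes a genuinely different route from the paper's. The paper argues \emph{quantitatively}: it uses Sibson's identity \eqref{eq:sibson} (applied at an optimal prior $\pma{\rno}{*}$, with $\qmn{\rno,\pma{\rno}{*}}=\qmn{\rno,\Wm}$) together with the monotonicity of the R\'enyi divergence in its order to derive the explicit bound
\(\RC{\rnf}{\Wm}-\RC{\rno}{\Wm}\geq \RD{\rno}{\qmn{\rno,\Wm}}{\qmn{\rnf,\Wm}}\) for $\rnf\in[\rno,1]$, and then closes the argument with Pinsker's inequality \eqref{eq:pinsker} and the continuity of $\RC{\rno}{\Wm}$ in $\rno$ (Lemma~\ref{lem:capacityO}). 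That yields not just continuity of $\rno\mapsto\qmn{\rno,\Wm}$ but an explicit modulus in terms of the increment of the capacity. Your compactness/subsequence/uniqueness argument is a soft version: it also proves the statement, but it produces no rate and depends on sequential compactness of $\pdis{\outS}$ (fine here since $\outS$ is finite, but it is the ingredient that would not generalize for free).

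One point in your sketch is not quite right as stated, and you should repair it before relying on it. You propose to obtain the joint lower semicontinuity of $(\dsta,\mQ)\mapsto\RD{\dsta}{\mW}{\mQ}$ from ``convexity in $\mQ$ for each fixed $\rno$ together with continuity in $\rno$.'' Convexity in $\mQ$ at fixed order does not by itself combine with pointwise continuity in the order to give joint lower semicontinuity; and in finite dimensions convexity alone does not even guarantee lower semicontinuity at the boundary of the simplex. The clean fix is exactly the tool you already cited but did not deploy: Lemma~\ref{lem:divergence} says $\RD{\dsta}{\mW}{\mQ}$ is \emph{nondecreasing} in $\dsta$ on $(0,1]$. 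For any $\delta>0$, eventually $\dsta_{n_k}\geq\rno-\delta$, so
\(\liminf_k\RD{\dsta_{n_k}}{\Wm(\dinp)}{\qmn{\dsta_{n_k},\Wm}}\geq\liminf_k\RD{\rno-\delta}{\Wm(\dinp)}{\qmn{\dsta_{n_k},\Wm}}\geq\RD{\rno-\delta}{\Wm(\dinp)}{\mQ^{\star}}\),
using for the last step the lower semicontinuity of $\RD{\rno-\delta}{\mW}{\cdot}$ at fixed order (for $\rno-\delta<1$ this is just continuity of $\mQ\mapsto\sum_\dout[\mW(\dout)]^{\rno-\delta}[\mQ(\dout)]^{1-\rno+\delta}$; only at order one do you need the standard lower semicontinuity of the KL divergence). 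Letting $\delta\downarrow 0$ and using left-continuity of the R\'enyi divergence in its order (again Lemma~\ref{lem:divergence}) completes the Fatou-type step, after which your minimax-plus-uniqueness closing move is correct. With this repair, your argument is sound; it is simply a softer and non-quantitative alternative to the paper's Sibson--Pinsker derivation.
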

The continuity of the R\'enyi center in the order allows us to construct 
a probability measure  that plays a crucial role in the proof of
Theorem \ref{proposition:SPBF}. 
\begin{proof}[Proof of Lemma \ref{lem:centercontinuity}]
	The following identity, which is due to Sibson \cite[p. 153]{sibson69},
	can be confirmed by substitution.
	\begin{align}
	\label{eq:sibson}
	\!\RD{\rno}{\mP \mtimes \Wm}{\!\mP\!\otimes\!\mQ} 
	&\!=\!
	\RMI{\rno}{\mP}{\Wm}\!+\!\RD{\rno}{\qmn{\rno,\mP}}{\mQ} 
	\end{align}
	where \(\qmn{\rno,\mP}\) is \emph{the order-\(\rno\) R\'enyi mean}
	defined as follows
	\begin{align}
	\label{eq:def:mean}
	\qmn{\rno,\mP}(\dout)
	&\DEF
	\tfrac{\left(\sum\nolimits_{\dinp}\mP(\dinp)\left[\Wm(\dout|\dinp)\right]^{\rno}\right)^{\sfrac{1}{\rno}}}{\sum\nolimits_{\mB} \left(\sum\nolimits_{\mA}\mP(\mA)\left[\Wm(\mB|\mA)\right]^{\rno}\right)^{\sfrac{1}{\rno}}}.
	\end{align}
	There exists a \(\pma{\rno}{*}\!\in\!\pdis{\inpS}\) such that \(\RMI{\rno}{\pma{\rno}{*}}{\Wm}\!=\!\RC{\rno}{\Wm}\)
	as a result of the extreme value theorem \cite[4.16]{rudin} because 
	\(\RMI{\rno}{\mP}{\Wm}\) is continuous in \(\mP\) on \(\pdis{\inpS}\) and 
	\(\pdis{\inpS}\) is compact.
	Note that \(\qmn{\rno,\pma{\rno}{*}}=\qmn{\rno,\Wm}\) by 
	\eqref{eq:pinsker},	\eqref{eq:sibson}, and Lemma \ref{thm:minimax}.
	Applying \eqref{eq:sibson} for \(\mQ=\qmn{\rnf,\Wm}\) and for \(\mP=\pma{\rno}{*}\) we get
	\begin{align}
	\notag
	\max\nolimits_{\dinp}\RD{\rno}{\Wm(\dinp)}{\qmn{\rnf,\Wm}}
	&\geq\RC{\rno}{\Wm}+\RD{\rno}{\qmn{\rno,\Wm}}{\qmn{\rnf,\Wm}}. 
	\end{align}
	Then using the monotonicity of R\'enyi divergence in the order
	(i.e., Lemma \ref{lem:divergence}) and
	Lemma \ref{thm:minimax} we get
	\begin{align}
	\notag
	\RC{\rnf}{\Wm}\!-\!\RC{\rno}{\Wm}
	&\!\geq\! 
	\RD{\rno}{\qmn{\rno,\Wm}}{\qmn{\rnf,\Wm}}
	&
	&\forall\rnf\!\in\![\rno,1].
	\end{align}
	Then the lemma
	follows from \eqref{eq:pinsker} and Lemma \ref{lem:capacityO}.
\end{proof}

\begin{lemma}\label{lem:intermediate}
For any stochastic matrix \(\Wm\!\in\!\pdis{\outS|\inpS}\) satisfying \(\RC{0}{\Wm}\!\neq\!\RC{1}{\Wm}\)
and rate \(\rate\) in \((\RC{0}{\Wm},\RC{1}{\Wm})\) there exists a \(\rnf\!\in\!(0,1)\)
satisfying \(\RC{\rnf}{\Wm}\!=\!\rate\) and an \(\rnt\!\in\!(\rnf,1)\)
satisfying \(\tfrac{1-\rnt}{\rnt}\RC{\rnt}{\Wm}\!=\!\spe{\rate,\!\Wm\!}\).
\end{lemma}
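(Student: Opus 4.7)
The plan is to prove both existence claims via the intermediate value theorem (IVT) applied to monotone continuous functions of \(\rno\), using the properties assembled in Lemma \ref{lem:capacityO}, together with a small preliminary estimate locating \(\spe{\rate,\Wm}\) in a suitable interval.

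The existence of \(\rnf\) is immediate: by Lemma \ref{lem:capacityO} the map \(\rno\mapsto\RC{\rno}{\Wm}\) is nondecreasing and continuous on \((0,1]\), and \eqref{eq:def:capacity-zero} extends this to the right limit at zero. Since \(\RC{0}{\Wm}<\rate<\RC{1}{\Wm}\) by hypothesis, the IVT yields some \(\rnf\in(0,1)\) with \(\RC{\rnf}{\Wm}=\rate\).

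For the existence of \(\rnt\) I plan to apply the IVT to \(\psi(\rno)\DEF\tfrac{1-\rno}{\rno}\RC{\rno}{\Wm}\), which by Lemma \ref{lem:capacityO} is continuous and nonincreasing on \((0,1)\), equals \(\tfrac{1-\rnf}{\rnf}\rate\) at \(\rnf\), and tends to \(0\) as \(\rno\to 1^{-}\) since \(\RC{1}{\Wm}\) is finite. Thus the claim reduces to showing \(\spe{\rate,\Wm}\in\bigl(0,\tfrac{1-\rnf}{\rnf}\rate\bigr)\); the strict inequalities \(\rnt>\rnf\) and \(\rnt<1\) then follow from the monotonicity of \(\psi\) and its endpoint behaviour. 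Positivity of \(\spe{\rate,\Wm}\) is easy: I can pick \(\rno\in(\rnf,1)\) close enough to \(1\) so that \(\RC{\rno}{\Wm}>\rate\), which makes the objective in Definition \ref{def:spherepacking} strictly positive at that \(\rno\).

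The strict upper bound is the only mildly delicate step, and I expect it to be the main obstacle. I plan to argue that the supremum defining \(\spe{\rate,\Wm}\) is attained at some interior point \(\rno^{\star}\in(\rnf,1)\): on \((0,\rnf]\) the objective is nonpositive by monotonicity of \(\RC{\rno}{\Wm}\); on \((\rnf,1)\) it is continuous, positive somewhere by the preceding observation, and vanishes in the limits \(\rno\to\rnf^{+}\) and \(\rno\to 1^{-}\); the extreme value theorem on a suitable compact subinterval then supplies the optimizer. Rearranging Definition \ref{def:spherepacking} at \(\rno^{\star}\) gives \(\psi(\rno^{\star})=\spe{\rate,\Wm}+\tfrac{1-\rno^{\star}}{\rno^{\star}}\rate\), and monotonicity of \(\psi\) gives \(\psi(\rno^{\star})\leq\psi(\rnf)=\tfrac{1-\rnf}{\rnf}\rate\); combining these with \(\rno^{\star}<1\) delivers \(\spe{\rate,\Wm}<\tfrac{1-\rnf}{\rnf}\rate\). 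Applying the IVT to \(\psi\) on \([\rnf,1)\) then closes the proof.
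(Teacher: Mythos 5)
Your proposal is correct and follows essentially the same route as the paper: find \(\rnf\) by the IVT applied to the continuous nondecreasing \(\RC{\rno}{\Wm}\), restrict the supremum defining \(\spe{\rate,\Wm}\) to \((\rnf,1)\), establish positivity, show the supremum lies strictly below a value of \(\psi(\rno)=\tfrac{1-\rno}{\rno}\RC{\rno}{\Wm}\), and then apply the IVT to \(\psi\). The one place you are more explicit than the paper is the attainment of the supremum at an interior \(\rno^{\star}\) via the extreme value theorem (the paper simply asserts a maximizing \(\dsta\in[\rnf,1)\) exists), and your justification of that step is sound.
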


\begin{proof}[Proof of Lemma \ref{lem:intermediate}]
	Since \(\RC{\rno}{\Wm}\) is continuous in the order \(\rno\)
	by Lemma \ref{lem:capacityO}, the  existence of the order \(\rnf\)
	follows from the intermediate value theorem \cite[4.23]{rudin}.
Then, 
\begin{align}
\notag
\spe{\rate,\Wm}
&=\sup\nolimits_{\rno\in(\rnf,1)} \tfrac{1-\rno}{\rno} \left(\RC{\rno}{\Wm}-\rate\right),
\end{align}
because \(\RC{\rno}{\Wm}\) is non-decreasing  in the order \(\rno\) by Lemma \ref{lem:capacityO}. 
Thus \(\spe{\rate,\Wm}\) is positive 
at all rates \(\rate\) in \((\RC{0}{\Wm},\RC{1}{\Wm})\)  because 
\(\RC{\rnb}{\Wm}\!=\!\tfrac{\rate+\RC{1}{\Wm}}{2}\)
for some  \(\rnb\) in \((\rnf,1)\) by 
the intermediate value theorem \cite[4.23]{rudin}.
Then there exists an order \(\dsta\in[\rnf,1)\) satisfying
	\begin{align}
	\notag
	\spe{\rate,\Wm}
	&=\tfrac{1-\dsta}{\dsta} \left(\RC{\dsta}{\Wm}-\rate\right)
	\\
	\notag
	&<\tfrac{1-\dsta}{\dsta} \RC{\dsta}{\Wm}.
	\end{align}
Hence, \(\spe{\rate,\Wm}\) is between the values of the function
\(\tfrac{1-\rno}{\rno}\RC{\rno}{\Wm}\) at \(\rno\!=\!\dsta\)
and at \(\rno\!=\!1\).
Then the continuity of \(\tfrac{1-\rno}{\rno}\RC{\rno}{\Wm}\)
in the order \(\rno\) ---implied by Lemma \ref{lem:capacityO}---
and the intermediate value theorem \cite[4.23]{rudin}
imply the existence of the order \(\rnt\) in \((\dsta,1)\),
and hence in \((\rnf,1)\).
\end{proof}

\subsection{Tilting and the Selftilted Channel}\label{sec:tilting}
\begin{definition}\label{def:tiltedprobabilitymeasure}
	For any \(\rno\!\in\!(0,1]\) and \(\mW,\mQ\!\in\!\pdis{\outS}\)
	satisfying \(\RD{\rno}{\mW}{\mQ}\!<\!\infty\),
	\emph{the order-\(\rno\) tilted \pmf\!\!} 
	\(\wma{\rno}{\mQ}\) is 
	\begin{align}
	\notag
%	\label{eq:def:tiltedprobabilitymeasure}
	\wma{\rno}{\mQ}(\dout)
	&\DEF e^{(1-\rno)\RD{\rno}{\mW}{\mQ}}
	[\mW(\dout)]^{\rno} [\mQ(\dout)]^{1-\rno}
	&
	&\forall\dout\in\outS.
	\end{align}
\end{definition}
One can confirm by substitution that
\begin{align}
\label{eq:tiltedKLD}
\rno\RD{1}{\wma{\rno}{\mQ}}{\mW}
\!+\!(1-\rno)\RD{1}{\wma{\rno}{\mQ}}{\mQ}
\!=\! (1-\rno)\RD{\rno}{\mW}{\mQ},
\end{align}
provided that \(\wma{\rno}{\mQ}\) is defined,
i.e., \(\RD{\rno}{\mW}{\mQ}\!<\!\infty\).

The continuity of the tilted \pmf \(\wma{\rno}{\mQ}\) in the order \(\rno\) on \((0,1)\)
is an immediate consequence of its definition and Lemma \ref{lem:divergence}. 
Interestingly, the continuity of \(\wma{\rno}{\mQ}\) in the order \(\rno\) on \((0,1)\)
holds even when \(\mQ\) is changing continuously with \(\rno\).
\begin{lemma}[\!\!{\cite[Lemma 16]{nakiboglu19B}}]
%\begin{lemma}[\!\!{\cite[Lemma \ref*{B-lem:tilting}]{nakiboglu19B}}]
	\label{lem:tilting}
Let \(\qmn{\rno}\) be a continuous function of the order \(\rno\) from \((0,1)\) to \(\pdis{\outS}\)
and let \(\mW\!\in\!\pdis{\outS}\) satisfy \(\RD{\rno}{\mW}{\qmn{\rno}}\!<\!\infty\) for all \(\rno\in(0,1)\).
Then
\begin{enumerate}[(a)]
	\item\label{tilting-vma}
	  \(\wma{\rno}{\qmn{\rno}}\) is a continuous function of \(\rno\) from \((0,1)\)
to \(\pdis{\outS}\),
i.e., \(\lim\nolimits_{\dsta\to \rno}\lon{\wma{\dsta}{\qmn{\dsta}}-\wma{\rno}{\qmn{\rno}}}=0\)
for all \(\rno\in(0,1)\).
	\item\label{tilting-divergence}
	\(\RD{\rno}{\mW}{\qmn{\rno}}\),
	\(\RD{1}{\wma{\rno}{\qmn{\rno}}}{\mW}\), and 
	\(\RD{1}{\wma{\rno}{\qmn{\rno}}}{\qmn{\rno}}\) are continuous functions of \(\rno\)
	from \((0,1)\) to \(\reals{\geq0}\).
\end{enumerate}
\end{lemma}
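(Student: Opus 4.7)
The plan is to reduce the claim to the continuity of a single finite sum. By Definition~\ref{def:tiltedprobabilitymeasure} and a direct computation,
\begin{align*}
\wma{\rno}{\qmn{\rno}}(\dout)
&=\tfrac{[\mW(\dout)]^{\rno}[\qmn{\rno}(\dout)]^{1-\rno}}{S(\rno)},
&
S(\rno)
&\DEF \sum\nolimits_{\dout}[\mW(\dout)]^{\rno}[\qmn{\rno}(\dout)]^{1-\rno},
\end{align*}
so that $e^{(1-\rno)\RD{\rno}{\mW}{\qmn{\rno}}}=1/S(\rno)$. For each fixed $\dout\in\outS$, both $[\mW(\dout)]^{\rno}$ and $[\qmn{\rno}(\dout)]^{1-\rno}$ are continuous in $\rno$ on $(0,1)$: the first because $\mW(\dout)\in[0,1]$ is constant; the second because $q\mapsto q^{1-\rno}$ is continuous on $[0,1]$ for $\rno\in(0,1)$ and $\qmn{\rno}$ is continuous in $\rno$ by hypothesis. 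Since $\outS$ is finite, $S(\rno)$ is continuous on $(0,1)$, and the hypothesis $\RD{\rno}{\mW}{\qmn{\rno}}<\infty$ is equivalent to $S(\rno)>0$ on $(0,1)$.

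From here, part~(\ref{tilting-vma}) and the first clause of part~(\ref{tilting-divergence}) are immediate: $\RD{\rno}{\mW}{\qmn{\rno}}=\tfrac{1}{\rno-1}\ln S(\rno)$ is a ratio of continuous functions with a nonvanishing denominator, and $\wma{\rno}{\qmn{\rno}}(\dout)$ is continuous in $\rno$ for each $\dout$, hence continuous in total variation on the finite alphabet $\outS$. For the remaining two divergences, I would substitute the closed form above into the order-$1$ divergence definition to obtain
\begin{align*}
\RD{1}{\wma{\rno}{\qmn{\rno}}}{\mW}
&=-\ln S(\rno)+(1-\rno)\sum\nolimits_{\dout}\wma{\rno}{\qmn{\rno}}(\dout)\ln\tfrac{\qmn{\rno}(\dout)}{\mW(\dout)},
\\
\RD{1}{\wma{\rno}{\qmn{\rno}}}{\qmn{\rno}}
&=-\ln S(\rno)+\rno\sum\nolimits_{\dout}\wma{\rno}{\qmn{\rno}}(\dout)\ln\tfrac{\mW(\dout)}{\qmn{\rno}(\dout)}.
\end{align*}
Each sum splits into a $\ln\mW(\dout)$ piece---manifestly continuous because $\ln\mW(\dout)$ is a finite constant on $\supp{\mW}$ and $\wma{\rno}{\qmn{\rno}}(\dout)=0$ off $\supp{\mW}$---and a $\ln\qmn{\rno}(\dout)$ piece, which, after absorbing the continuous factor $[\mW(\dout)]^{\rno}/S(\rno)$, reduces to terms of the form $[\qmn{\rno}(\dout)]^{1-\rno}\ln\qmn{\rno}(\dout)$.

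The only genuine obstacle is the potential indeterminacy at outputs $\dout$ where $\qmn{\rno_{0}}(\dout)=0$ at some order $\rno_{0}\in(0,1)$: the product $[\qmn{\rno}(\dout)]^{1-\rno}\ln\qmn{\rno}(\dout)$ is of the $0\cdot(-\infty)$ form. This is handled by the elementary estimate $x^{\varepsilon}\lvert\ln x\rvert\to 0$ as $x\to 0^{+}$ valid for every $\varepsilon>0$: applied with $\varepsilon$ slightly smaller than $1-\rno_{0}$ on a small neighborhood of $\rno_{0}$, it shows that such terms vanish continuously. Identity~\eqref{eq:tiltedKLD} serves as a useful sanity check, since it expresses $(1-\rno)\RD{\rno}{\mW}{\qmn{\rno}}$ as a convex combination of the two order-$1$ divergences whose continuity we have just established.
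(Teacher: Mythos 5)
The paper does not prove Lemma~\ref{lem:tilting}; it is imported wholesale from \cite[Lemma 16]{nakiboglu19B}, so there is no in-text proof to compare against. Judged on its own terms, your argument is correct and self-contained for the finite-alphabet setting that this paper works in, which is all that is needed.

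Your route is the natural one here: write $\wma{\rno}{\qmn{\rno}}(\dout)=[\mW(\dout)]^{\rno}[\qmn{\rno}(\dout)]^{1-\rno}/S(\rno)$, observe that each factor is jointly continuous on the compact alphabet and that the hypothesis $\RD{\rno}{\mW}{\qmn{\rno}}<\infty$ is exactly $S(\rno)>0$, and then reduce everything (part~(a), all three quantities in part~(b)) to the continuity of finitely many scalar terms. The one genuinely delicate point --- the $0\cdot(-\infty)$ indeterminacy when $\qmn{\rno_{0}}(\dout)=0$ --- is handled cleanly via the bound $q^{1-\rno}\leq q^{\varepsilon}$ on $q\in[0,1]$ whenever $1-\rno>\varepsilon$, combined with $q^{\varepsilon}\abs{\ln q}\to0$; this is exactly what makes $[\qmn{\rno}(\dout)]^{1-\rno}\ln\qmn{\rno}(\dout)$ extend continuously by zero. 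The symmetric case $\mW(\dout)=0$ is correctly dismissed since $\wma{\rno}{\qmn{\rno}}$ vanishes off $\supp\mW$. Two small stylistic remarks: (i) when you assert that $[\qmn{\rno}(\dout)]^{1-\rno}$ is continuous in $\rno$, you are implicitly using the joint continuity of $(q,\rno)\mapsto q^{1-\rno}$ on $[0,1]\times(0,1)$ rather than separate continuity in each argument --- it holds, but it is worth stating, since the same joint-continuity issue is precisely where the $0\cdot(-\infty)$ term lives; (ii) the appeal to \eqref{eq:tiltedKLD} as a ``sanity check'' can actually be promoted to a shortcut: once $S(\rno)$, $\RD{\rno}{\mW}{\qmn{\rno}}$, and $\RD{1}{\wma{\rno}{\qmn{\rno}}}{\mW}$ are shown continuous, the continuity of $\RD{1}{\wma{\rno}{\qmn{\rno}}}{\qmn{\rno}}$ follows from \eqref{eq:tiltedKLD} without a second direct computation, since $1-\rno$ is bounded away from $0$ on $(0,1)$. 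Neither remark affects correctness.
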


Since \(\max_{\dinp\in\inpS} \RD{\rno}{\Wm(\dinp)}{\qmn{\rno,\Wm}}\) is finite 
by Lemma \ref{thm:minimax}
and \(\qmn{\rno,\Wm}\) changes continuously with \(\rno\)
by Lemma \ref{lem:centercontinuity}, one can invoke Lemma \ref{lem:tilting}
for \(\mW\!=\!\Wm(\dinp)\) and \(\qmn{\rno}\!=\!\qmn{\rno,\Wm}\)
for any \(\dinp\!\in\!\inpS\).
In the proof of Theorem \ref{proposition:SPBF}, 
this observation is used together with Lemma \ref{thm:tulcea}, given in 
the following, to construct a probability measure that is at the heart of the proof. 

For establishing Theorem \ref{proposition:SPBF},
we use two measure change arguments together with the Chebyshev's inequality. 
The bounds on the second moments, given in Lemma \ref{lem:variancebound}, 
are needed for applying the Chebyshev's inequality.  
\begin{lemma}[\!\!{\cite[Lemma 16.2-(a)]{augustin78}}]\label{lem:variancebound}
	If \(\RD{\rno}{\mW}{\mQ}\!<\!\infty\) for an \(\rno\!\in\!(0,1]\) and 
	\(\mW,\mQ\!\in\!\pdis{\outS}\), then
	\begin{align}
	\label{eq:lem:variancebound-w}
	\hspace{-.2cm}
	\sum\nolimits_{\dout} \wma{\rno}{\mQ}(\dout)
	\ln^{2}\tfrac{\wma{\rno}{\mQ}(\dout)}{\mW(\dout)}
	&\!\leq\!4 e^{-2}\!+\!\tfrac{(1-\rno)^2}{\rno^2}[4\!+\![\RD{\rno}{\mW}{\mQ}]^{2}],
	\\
	\label{eq:lem:variancebound-q}
	\hspace{-.2cm}
	\sum\nolimits_{\dout} \wma{\rno}{\mQ}(\dout)
	\ln^{2} \tfrac{\wma{\rno}{\mQ}(\dout)}{\mQ(\dout)}
	&\!\leq\!4 e^{-2}\!+\!\tfrac{4\rno^2}{(1-\rno)^2}
	\!+\![\RD{\rno}{\mW}{\mQ}]^{2}. 
	\end{align}
\end{lemma}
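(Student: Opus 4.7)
The plan is to prove \eqref{eq:lem:variancebound-w} directly by a two-region estimate and then deduce \eqref{eq:lem:variancebound-q} from it by a symmetry of the tilting operation. Put $D\DEF\RD{\rno}{\mW}{\mQ}$, $L(\dout)\DEF\ln\tfrac{\mW(\dout)}{\mQ(\dout)}$, and $T(\dout)\DEF D-L(\dout)$; Definition \ref{def:tiltedprobabilitymeasure} then yields the two dual representations
\[
\wma{\rno}{\mQ}(\dout)\;=\;\mW(\dout)\,e^{(1-\rno)T(\dout)}\;=\;\mQ(\dout)\,e^{D-\rno T(\dout)},
\]
so that $\ln\tfrac{\wma{\rno}{\mQ}(\dout)}{\mW(\dout)}=(1-\rno)T(\dout)$. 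Using \eqref{eq:symmetry} one checks by substitution that the expression for $\wma{\rno}{\mQ}(\dout)$ is invariant under the swap $(\rno,\mW,\mQ)\mapsto(1-\rno,\mQ,\mW)$, while the same swap changes $\mW$ to $\mQ$ in the logarithm on the left-hand side of \eqref{eq:lem:variancebound-w} and replaces $D$ on the right-hand side by $\tfrac{1-\rno}{\rno}D$. A one-line computation then turns \eqref{eq:lem:variancebound-w} into \eqref{eq:lem:variancebound-q}, so only \eqref{eq:lem:variancebound-w} needs to be proved.

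For \eqref{eq:lem:variancebound-w} I would split the sum according to the sign of $T(\dout)$. On $\{\dout:T(\dout)\leq 0\}$ the ratio $y(\dout)\DEF\wma{\rno}{\mQ}(\dout)/\mW(\dout)$ lies in $(0,1]$, and the elementary calculus bound $y(\ln y)^{2}\leq 4 e^{-2}$ on $(0,1]$ (extremum at $y=e^{-2}$) yields
\[
\sum\nolimits_{T\leq 0}\wma{\rno}{\mQ}(\dout)\ln^{2}\tfrac{\wma{\rno}{\mQ}(\dout)}{\mW(\dout)}\;=\;\sum\nolimits_{T\leq 0}\mW(\dout)\,y(\dout)\ln^{2}y(\dout)\;\leq\;4 e^{-2},
\]
accounting for the first summand on the right-hand side of \eqref{eq:lem:variancebound-w}.

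The substance of the proof lies on $\{T>0\}$, where the pointwise estimate fails because $\wma{\rno}{\mQ}$ can overshoot $\mW$ by an exponentially large factor. Instead one must use the global normalization $\sum_{\dout}\wma{\rno}{\mQ}(\dout)=1$, which in the present coordinates reads $\sum_{\dout}\mQ(\dout)e^{-\rno T(\dout)}=e^{-D}$. I would convert the second moment into a tail integral via $T^{2}=\int_{0}^{\infty}2t\,\TCIND{T>t}{}\,dt$, giving
\[
\sum\nolimits_{T>0}\wma{\rno}{\mQ}(\dout)\,T(\dout)^{2}\;=\;2\!\int_{0}^{\infty}\!t\,\wma{\rno}{\mQ}(\{T>t\})\,dt,
\]
and then bound the tail by the minimum of the trivial estimates $\wma{\rno}{\mQ}(\{T>t\})\leq 1$ and $\wma{\rno}{\mQ}(\{T>t\})=\sum_{T>t}\mQ\,e^{D-\rno T}\leq e^{D-\rno t}$. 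Splitting the integral at the crossover $t=D/\rno$ produces a bound of the form $(D^{2}+\mathrm{const})/\rno^{2}$; multiplication by $(1-\rno)^{2}$ coming from $[\ln\tfrac{\wma{\rno}{\mQ}}{\mW}]^{2}=(1-\rno)^{2}T^{2}$ then recovers the second summand of \eqref{eq:lem:variancebound-w}.

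The main obstacle is extracting the constant ``$4$'' in $4+D^{2}$ sharply: the naive application of the $\min$-of-two-bounds scheme above yields $D^{2}+2D+2$ instead of $D^{2}+4$, and these agree only at $D=1$. I would close this gap by refining the tail estimate with the help of the first-moment identity $E_{\wma{\rno}{\mQ}}[T]\leq D/\rno$, itself a consequence of $\RD{1}{\wma{\rno}{\mQ}}{\mQ}\geq 0$ applied to the decomposition $\RD{1}{\wma{\rno}{\mQ}}{\mQ}=(1-\rno)D+\rno E_{\wma{\rno}{\mQ}}[L]$; writing $T^{2}=(T-D/\rno)^{2}+2(D/\rno)T-(D/\rno)^{2}$ and handling the three terms separately (the cross term via the first-moment identity, the quadratic in $T-D/\rno$ via the exponential tail bound recentered at $t=D/\rno$) should deliver the stated constant. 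These refinements preserve the structure of the argument, so the overall layout above yields both inequalities claimed in the lemma.
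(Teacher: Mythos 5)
The first two steps of your proposal are sound and match the paper: the split on the sign of \(T\) (equivalently on whether \(\wma{\rno}{\mQ}/\mW\) exceeds~\(1\)), the calculus bound \(\sup_{y\in(0,1]}y\ln^2y=4e^{-2}\) on the region \(\{T\le0\}\), and the observation that the swap \((\rno,\mW,\mQ)\mapsto(1-\rno,\mQ,\mW)\) fixes \(\wma{\rno}{\mQ}\) and turns \eqref{eq:lem:variancebound-w} into \eqref{eq:lem:variancebound-q} via \eqref{eq:symmetry} — the paper also deduces one inequality from the other this way, though it does not spell the symmetry out as cleanly as you do.

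The gap is in the \(\{T>0\}\) region, and it is genuine. You correctly compute that the unrefined tail integral with the two estimates \(\wma{\rno}{\mQ}(\{T>t\})\le 1\) and \(\le e^{D-\rno t}\), split at \(t=D/\rno\), gives \(E_{\wma{\rno}{\mQ}}[T_+^2]\le (D^2+2D+2)/\rno^2\). The needed bound is \((D^2+4)/\rno^2\), and since \(D\) is not bounded above, \(D^2+2D+2>D^2+4\) for all \(D>1\); this is not a small slack. The proposed repair — recentering at \(D/\rno\), using \(E_{\wma{\rno}{\mQ}}[T]\le D/\rno\) for the cross term, and the exponential tail for the recentered quadratic — does not close it. The recentered quadratic \(E_{\wma{\rno}{\mQ}}\bigl[(T_+-D/\rno)^2\bigr]\) has uncontrolled contributions from the event \(\{T<D/\rno\}\) (and in particular from \(\{T\le0\}\), where \(T_+=0\) contributes \((D/\rno)^2\) times a probability mass that need not be small); bounding it by a right-tail estimate alone fails, and a crude pointwise bound on \(\{T\le D/\rno\}\) gives another \((D/\rno)^2\). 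Adding the \((D/\rno)^2\) recovered from the cross term yields roughly \(2(D/\rno)^2\) for large \(D\) — strictly worse than the naive estimate, not better. The recentered quadratic is essentially the variance of \(T_+\), which is what one is trying to bound in the first place, so the argument is circular.

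What the paper does instead, and what your proposal is missing, is a Jensen argument with a concave majorant rather than a tail integral. The paper introduces the concave non-negative function \(\fX(\tau)=4e^{-2}\tau\,\TCIND{[0,e^2]}{\tau}+\ln^2\tau\,\TCIND{(e^2,\infty)}{\tau}\), which satisfies \(\ln^2\tau\le\fX(\tau)\) for \(\tau\ge1\); it then factors out \((\tfrac{1-\rno}{\rno})^2\) by rewriting the logarithm with exponent \(\tfrac{\rno}{1-\rno}\), applies Jensen's inequality to move the concave \(\fX\) outside the sum, and uses the identity \(\sum_\dout\wma{\rno}{\mQ}(\dout)\bigl[\wma{\rno}{\mQ}(\dout)/\mW(\dout)\bigr]^{\rno/(1-\rno)}=e^D\) to arrive at \(\fX(e^D)\le(2\vee D)^2\le 4+D^2\). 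That one-step Jensen bound is what delivers the sharp constant \(4\) in place of \(2D+2\); a tail-integral approach against the crude exponential majorant of the tail cannot reproduce it. To salvage your route you would need to replace the tail-integral step with the concave-majorant Jensen step (or an equivalent device that exploits the exact normalization \(E_{\wma{\rno}{\mQ}}[e^{\rno T}]=e^D\) rather than the weaker tail bound it implies).
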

\begin{proof}[Proof of Lemma \ref{lem:variancebound}]
	Note that
	\begin{align}
	\label{eq:variancebound-1}
	\sum\nolimits_{\dout} \wma{\rno}{\mQ}(\dout)
	\ln^{2}\tfrac{\wma{\rno}{\mQ}(\dout)}{\mW(\dout)}
	\TCIND{[0,1]}{\tfrac{\wma{\rno}{\mQ}(\dout)}{\mW(\dout)}}
	&\!\leq\!4 e^{-2},
	\end{align}
	because  \(\sup_{\tau\in(0,1) }\tau\ln^{2}\tau=\left.\tau \ln^{2} \tau\right\vert_{\tau=e^{-2}}\leq 4 e^{-2}\).
	
	Furthermore, let \(\fX\!:\!\reals{\geq 0}\!\to\!\reals{+}\) be
	\begin{align}
\notag
%	\label{eq:variancebound-2}
	\fX(\tau)
	&=4 e^{-2}\tau \TCIND{[0,e^{2}]}{\tau}+\ln^{2}\tau \TCIND{(e^{2},\infty)}{\tau}.
	\end{align}
	Since \(\fX\) is a non-negative function  satisfying
	\(\ln^{2}\tau \leq \fX(\tau)\) for all  \(\tau\geq 1\)
	we have
	\begin{align}
	\notag
	\sum\nolimits_{\dout} 
	&\wma{\rno}{\mQ}(\dout)\ln^{2} \tfrac{\wma{\rno}{\mQ}(\dout)}{\mW(\dout)}
	\TCIND{(1,\infty)}{\tfrac{\wma{\rno}{\mQ}(\dout)}{\mW(\dout)}}
	\\
	\notag
	&\!=\!(\tfrac{1-\rno}{\rno})^{2}\sum\nolimits_{\dout} \wma{\rno}{\mQ}(\dout)
	\ln^{2} \left[\tfrac{\wma{\rno}{\mQ}(\dout)}{\mW(\dout)}\right]^{\frac{\rno}{1-\rno}}
	\TCIND{(1,\infty)}{\tfrac{\wma{\rno}{\mQ}(\dout)}{\mW(\dout)}}
	\\
	\label{eq:variancebound-3}
	&\!\leq\!(\tfrac{1-\rno}{\rno})^{2}\sum\nolimits_{\dout} \wma{\rno}{\mQ}(\dout)
	\fX\left(\left[\tfrac{\wma{\rno}{\mQ}(\dout)}{\mW(\dout)}\right]^{\frac{\rno}{1-\rno}}
	\right).
	\end{align}
	On the other hand the concavity of \(\fX\), 
	the Jensen's inequality, 
	the definition of tilted \pmf\!\!,
	and the monotonicity of \(\fX\) imply
	\begin{align}
	\notag
	\sum\nolimits_{\dout} \wma{\rno}{\mQ}(\dout)
	\fX\left(\left[\tfrac{\wma{\rno}{\mQ}(\dout)}{\mW(\dout)}\right]^{\frac{\rno}{1-\rno}}
	\right)
	&\!\leq\!\fX\left(\sum\nolimits_{\dout} \wma{\rno}{\mQ}(\dout)
	\left[\tfrac{\wma{\rno}{\mQ}(\dout)}{\mW(\dout)}\right]^{\frac{\rno}{1-\rno}}\right)
	\\
	\notag
	&\!\leq\! \fX\left(\sum\nolimits_{\dout} \mQ(\dout) e^{\RD{\rno}{\mW}{\mQ}}\right)
	\\
%	\notag
%	&\!=\! \fX\left(e^{\RD{\rno}{\mW}{\mQ}}\right)
%	\\
	\label{eq:variancebound-4}
	&\!\leq\! (2\vee \RD{\rno}{\mW}{\mQ})^2.
	\end{align}
	\eqref{eq:lem:variancebound-w} follows from 
	\eqref{eq:variancebound-1},
	\eqref{eq:variancebound-3}, \eqref{eq:variancebound-4}.
	One can prove \eqref{eq:lem:variancebound-q}, 
	following a similar analysis and invoking \eqref{eq:symmetry}.
\end{proof}

One can tilt the channel \(\Wm\!:\!\inpS\to\pdis{\outS}\) 
with a \(\mQ\) in \(\pdis{\outS}\), by tilting the individual 
\(\Wm(\dinp)\)'s; the resulting channel is called the tilted channel
and denoted by \(\Wma{\rno}{\mQ}\). 
If the R\'enyi center of the channel itself is used for tilting,
then we call the resulting channel the selftilted channel.
\begin{definition}\label{def:tiltedchannel}
	For any \(\Wm\!\in\!\pdis{\outS|\inpS}\) and \(\rno\in(0,1]\),
	\emph{the order-\(\rno\) selftilted channel \(\Wmn{\rno}\!:\!\inpS\!\to\!\pdis{\outS}\)}
	is 
	\begin{align}
	\notag
	\Wmn{\rno}(\dout|\dinp)
	&=[\Wm(\dout|\dinp)]^{\rno}[\qmn{\rno,\Wm}(\dout)]^{1-\rno}e^{(1-\rno)\RD{\rno}{\Wm(\dinp)}{\qmn{\rno,\Wm}}}
	\end{align}
	for all \(\dinp\!\in\!\inpS\) and \(\dout\!\in\!\outS\).
\end{definition}

\subsection{Construction of a Probability Measure with the Given Conditional Probabilities}\label{sec:tulcea}
In Definition \ref{def:Fproduct:discrete}, 
for describing the \pmf induced on the output set \(\outS_{1}^{\blx}\)  
by an element \(\vec{\dinp}_{1}^{\blx}\) of the input set  
\(\vec{\inpS}_{1}^{\blx}\),  
it was sufficient to specify the conditional \pmf given the past at each time instance.
This is true for arbitrary finite sample spaces, as well. 
When constructing probability measures in a similar
fashion for more general sample spaces, however,
there are additional technical conditions one needs to ensure.
If the conditional probability of events at each time are
Borel functions of the past, then the existence of a
unique probability measure is guaranteed, as demonstrated by 
the following lemma.

\begin{lemma}\label{thm:tulcea}
Let \((\smplS_{\tin},\smplG_{\tin})\) be an arbitrary measurable space 
for each  \(\tin\!\in\!\{1,\ldots,\blx\}\)
and \(\smplS\!=\!\smplS_{1}^{\blx}\), \(\smplG\!=\!\smplG_{1}^{\blx}\).
Suppose that a probability measure \(\prob^{(1)}\) is given on 
\((\smplS_{1},\smplG_{1})\) and that, 
for every \(\dsmpl_{1}^{\tin}\in \smplS_{1}^{\tin}\) and \(\tin\in\{1,\dots,\blx-1\}\),
probability measures 
\(\PCX{\cdot}{\dsmpl_{1}^{\tin}}\) are given on \((\smplS_{\tin+1},\smplG_{\tin+1})\).
Suppose that for every \(\set{B}\in\smplG_{\tin+1}\) the functions 
\(\PCX{\set{B}}{\dsmpl_{1}^{\tin}}\) are 
Borel functions of \(\dsmpl_{1}^{\tin}\) and let 
\begin{align}
\notag
\prob^{(\tin)}\!\left(\cset_{1}^{\tin}\right)
&\!=\!
\int_{\cset_{1}}\!\!\prob^{(1)}\!\left(\dif{\dsmpl_{1}}\right)
\int_{\cset_{2}}\!\!
\PCX{\dif{\dsmpl_{2}}}{\dsmpl_{1}}
\!\ldots\!
\int_{\cset_{\tin}}\!\!
\PCX{\dif{\dsmpl_{\tin}}}{\dsmpl_{1}^{\tin-1}}
\end{align}	
for all \(\cset_{\ind}\!\in\!\smplG_{\ind}\) and \(\tin\!\in\!\{2,\ldots,\blx\}\).
Then there is a unique probability measure \(\prob\) on 
\((\smplS,\smplG)\) such that
\begin{align}
\notag
\PX{\{\dsmpl:\dsmpl_{1}\!\in\!\cset_{1},\ldots,\dsmpl_{\tin}\!\in\!\cset_{\tin}\}}
&=\prob^{(\tin)}\!\left(\cset_{1}^{\tin}\right)
\end{align}
for every \(\tin\!\in\!\{1,\ldots,\blx\}\).
\end{lemma}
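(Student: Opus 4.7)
The plan is to proceed by induction on $\blx$, defining $\prob$ directly from the iterated-integral formula and using the standard measure-theoretic machine to verify that it is the unique probability measure on $(\smplS,\smplG)$ with the required marginals. For $\blx = 1$ one simply sets $\prob = \prob^{(1)}$. Assuming the claim through length $\blx - 1$, the inductive hypothesis supplies a probability measure $\prob^{(\blx-1)}$ on $(\smplS_1^{\blx-1},\smplG_1^{\blx-1})$ whose value on every rectangle $\cset_1 \times \cdots \times \cset_{\blx-1}$ equals the iterated integral displayed in the statement. I would then define
\[
\prob(\set{A}) \DEF \int_{\smplS_1^{\blx-1}} \prob^{(\blx-1)}\!(\dif{\dsmpl_1^{\blx-1}}) \int_{\smplS_\blx} \TCIND{\set{A}}{\dsmpl_1^\blx} \, \PCX{\dif{\dsmpl_\blx}}{\dsmpl_1^{\blx-1}}
\]
for every $\set{A} \in \smplG$, and argue that this is the measure sought.

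The main obstacle is showing that this double integral is well-defined, i.e., that the inner integral is $\smplG_1^{\blx-1}$-measurable as a function of $\dsmpl_1^{\blx-1}$ for arbitrary $\set{A} \in \smplG$. For rectangles this reduces directly to the hypothesis that $\dsmpl_1^{\blx-1} \mapsto \PCX{\cset_\blx}{\dsmpl_1^{\blx-1}}$ is a Borel function. For general $\set{A}$ I would invoke Dynkin's $\pi$-$\lambda$ theorem: the collection $\mathcal{D}$ of sets in $\smplG$ for which the inner integral is measurable is a $\lambda$-system, since closure under complements uses only that $\PCX{\cdot}{\dsmpl_1^{\blx-1}}$ is a probability measure, and closure under countable disjoint unions follows from monotone convergence applied pointwise in $\dsmpl_1^{\blx-1}$. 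Because $\mathcal{D}$ contains the $\pi$-system of measurable rectangles, which generates $\smplG$, we conclude $\mathcal{D} = \smplG$.

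Once well-definedness is in hand, verifying that $\prob$ is a probability measure is routine. Normalization $\prob(\smplS) = 1$ holds because both $\prob^{(\blx-1)}$ and $\PCX{\cdot}{\dsmpl_1^{\blx-1}}$ are probabilities; countable additivity follows by applying monotone convergence twice, interchanging the sum first with the inner and then with the outer integral for disjoint $\set{A}_k$'s. Specializing $\set{A}$ to a rectangle recovers $\prob(\cset_1^{\blx}) = \prob^{(\blx)}(\cset_1^{\blx})$ by unwinding one layer, and compatibility with $\prob^{(\tin)}$ for $\tin < \blx$ follows from the inductive hypothesis together with $\PCX{\smplS_\blx}{\dsmpl_1^{\blx-1}} = 1$, so that the trailing integrals collapse. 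Uniqueness is the standard consequence of Carath\'eodory's extension theorem: rectangles form a $\pi$-system generating $\smplG$, so any two probability measures on $\smplG$ agreeing on rectangles must coincide.

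The finiteness of $\blx$ is what lets this argument go through without any additional projective-consistency step; the analogue for infinite products is the full Ionescu-Tulcea theorem \cite{shiryaev} and requires an extra continuity-from-above verification at the empty set that is unnecessary here.
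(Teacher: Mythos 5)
Your proposal is correct and takes essentially the same route as the paper, which disposes of the lemma in three sentences by citing \cite[Thm. 2.6.2]{ash} for the two-factor case, asserting that the general finite case follows by ``a recursive application'' of that result, and noting that the lemma is also a special case of Ionescu Tulcea's theorem in \cite{shiryaev}. Your inductive construction with the Dynkin $\pi$-$\lambda$ argument for measurability of the inner integral is precisely the content of the two-factor theorem the paper invokes, so you are supplying the details the paper delegates to its references.

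Two small points worth tightening. First, in the step ``specializing $\set{A}$ to a rectangle recovers $\prob(\cset_1^\blx) = \prob^{(\blx)}(\cset_1^\blx)$ by unwinding one layer,'' the inductive hypothesis as literally stated only tells you $\prob^{(\blx-1)}$ agrees with the iterated integral on rectangles, whereas the unwinding requires integrating the non-rectangular integrand $\TCIND{\cset_1^{\blx-1}}{\dsmpl_1^{\blx-1}}\,\PCX{\cset_\blx}{\dsmpl_1^{\blx-1}}$ against $\prob^{(\blx-1)}$ and identifying it with the iterated form. This is routine (run the same $\pi$-$\lambda$ argument on the class of $\set{B}\in\smplG_1^{\blx-1}$ for which $\prob^{(\blx-1)}(\set{B})$ equals the iterated integral of $\TCIND{\set{B}}{\cdot}$, then extend to bounded measurable functions by simple-function approximation and monotone convergence), but it is cleanest to fold it into a slightly strengthened inductive hypothesis, which is exactly what Ash's Thm.\ 2.6.2 already delivers in the two-factor case. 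Second, the uniqueness you want is the $\pi$-$\lambda$/Dynkin uniqueness-on-a-generating-$\pi$-system fact rather than Carath\'eodory's extension theorem per se; the latter is about extending from an algebra and its uniqueness clause is usually derived from the former anyway. Neither of these affects the soundness of the argument.
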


Lemma \ref{thm:tulcea} for \(n\!=\!2\) case is \cite[Thm. 2.6.2]{ash}.
For arbitrary but finite \(n\), Lemma \ref{thm:tulcea}  follows from
a recursive application of \cite[Thm. 2.6.2]{ash}. 
Lemma \ref{thm:tulcea} is also implied by 
Ionescu Tulcea's theorem \cite[Ch.II \S 9 Thm. 2]{shiryaev},
which establishes a more general result for the infinite 
horizon (i.e., \(n\)) case.

\begin{remark}
%Recall that 
\(\PCX{\set{B}}{\dsmpl_{1}^{\tin}}\) is a Borel function
iff the inverse image of every Borel set is in \(\smplG_{1}^{\tin}\),
i.e., if
\(\{\dsmpl_{1}^{\tin}\!:\!\PCX{\set{B}}{\dsmpl_{1}^{\tin}}\!\in\!\set{C}\}\!\in\!\smplG_{1}^{\tin}\)
for every \(\set{C}\!\in\!\rborel{[0,1]}\).
If ---for example--- \((\smplS_{\tin},\smplG_{\tin})\!=\!(\reals{},\rborel{\reals{}})\)
for all \(\tin\), then \(\PCX{\set{B}}{\dsmpl_{1}^{\tin}}\)'s
are Borel functions
whenever \(\PCX{\set{B}}{\dsmpl_{1}^{\tin}}\) are continuous in \(\dsmpl_{1}^{\tin}\). 
\end{remark}
\begin{remark}
Lemma \ref{thm:tulcea} requires 
\(\PCX{\set{B}}{\dsmpl_{1}^{\tin}}\)'s to be 
Borel functions of \(\dsmpl_{1}^{\tin}\).
This is general enough for our purposes because 
we work with real valued random variables.
More generally, this condition is stated 
as the measurability of 
\(\PCX{\set{B}}{\dsmpl_{1}^{\tin}}\)
in \(\smplG_{1}^{\tin}\), 
which makes \(\PCX{\cdot}{\cdot}\)'s
transition probabilities (i.e., Markov kernels or stochastic kernels),
see \cite[\S10.7]{bogachev} for a more complete discussion.
The same measurability condition makes \(\PCX{\cdot}{\cdot}\)'s
conditional distributions in the sense of \cite[p. 343]{dudley},
as well.
\end{remark}

The proof of Theorem \ref{proposition:SPBF} presented
in the following section employs Lemma \ref{thm:tulcea} 
in order to assert the existence of a probability 
with certain conditional probabilities.
It is worth mentioning that we are not asserting that
one needs to consider infinite sample spaces in order to 
calculate the average error probability of a channel code
on a DSPC with feedback. 
The expressions in \eqref{eq:def:errorprobability} and
\eqref{eq:def:conditionalerrorprobability}
determine the value of the average error probability 
relying solely on a finite sample space model.
What we are saying is that Augustin's approach relies 
on a probability space with an infinite sample space 
in order to bound the minimum average
error probability of channel codes on a given DSPC 
with feedback. 

\subsection{Chebyshev's Inequality}\label{sec:chebyshev}
\begin{lemma}\label{lem:chebyshev}
	Let \(\amn{1},\ldots,\amn{\blx}\) be a sequence of real numbers and
	\((\inp_{1},\fltrn{1}),\ldots,(\inp_{\blx},\fltrn{\blx})\)
	be a stochastic sequence satisfying 
	\(\ECX{\inp_{\tin}}{\fltrn{\tin-1}}\!\leq\!\amn{\tin}\) and  
	\(\EX{(\inp_{\tin})^{2}}\!<\!\infty\)
	for all \(\tin\)  in \(\{1,\ldots,\blx\}\),
	and \(\sigma\) satisfy \(\sigma^{2}=\sum\nolimits_{\tin=1}^{\blx}\EX{(\inp_{\tin})^{2}}\). 
	Then 
	\begin{align}
	\label{eq:lem:chebyshev}
	\PX{\sum\nolimits_{\tin=1}^{\blx}\inp_{\tin}<\gamma+\sum\nolimits_{\tin=1}^{\blx}\amn{\tin}}
	&\geq 1- \tfrac{\sigma^{2}}{\gamma^2}
	\end{align}
for all \(\gamma\!\in\!\reals{+}\).
\end{lemma}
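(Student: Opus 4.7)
The plan is to reduce the statement to a standard Chebyshev (second-moment Markov) bound applied to the centered martingale associated with the sequence $\inp_{1},\ldots,\inp_{\blx}$. Set
\begin{align}
\notag
\mathsf{Y}_{\tin} &\DEF \inp_{\tin}-\ECX{\inp_{\tin}}{\fltrn{\tin-1}},
&
\mathsf{M}_{\tin} &\DEF \sum\nolimits_{\ssv=1}^{\tin}\mathsf{Y}_{\ssv}.
\end{align}
The assumption $\EX{(\inp_{\tin})^{2}}<\infty$ guarantees that $\ECX{\inp_{\tin}}{\fltrn{\tin-1}}$ exists in $\Lp{2}{\smplS}$ and that $\mathsf{Y}_{\tin}$ is $\fltrn{\tin}$-measurable with zero conditional mean, so $(\mathsf{M}_{\tin},\fltrn{\tin})_{\tin=1}^{\blx}$ is a square-integrable martingale with $\EX{\mathsf{M}_{\tin}}=0$.

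Next I would replace the target event by a simpler one. Since $\ECX{\inp_{\tin}}{\fltrn{\tin-1}}\leq \amn{\tin}$ almost surely, on the event $\{\sum_{\tin}\inp_{\tin}\geq \gamma+\sum_{\tin}\amn{\tin}\}$ we have
\begin{align}
\notag
\mathsf{M}_{\blx}
\;=\;\sum\nolimits_{\tin=1}^{\blx}\inp_{\tin}-\sum\nolimits_{\tin=1}^{\blx}\ECX{\inp_{\tin}}{\fltrn{\tin-1}}
\;\geq\;\sum\nolimits_{\tin=1}^{\blx}\inp_{\tin}-\sum\nolimits_{\tin=1}^{\blx}\amn{\tin}
\;\geq\;\gamma,
\end{align}
so the complement bound we need will follow from $\PX{\mathsf{M}_{\blx}\geq \gamma}\leq \sigma^{2}/\gamma^{2}$.

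For the second-moment bound on $\mathsf{M}_{\blx}$, the martingale-difference property gives $\ECX{\mathsf{Y}_{\ssv}\mathsf{Y}_{\tin}}{\fltrn{\ssv}}=\mathsf{Y}_{\ssv}\ECX{\mathsf{Y}_{\tin}}{\fltrn{\tin-1}}=0$ for $\ssv<\tin$, hence the cross terms in $\EX{\mathsf{M}_{\blx}^{2}}$ vanish and
\begin{align}
\notag
\EX{\mathsf{M}_{\blx}^{2}}
\;=\;\sum\nolimits_{\tin=1}^{\blx}\EX{\mathsf{Y}_{\tin}^{2}}
\;=\;\sum\nolimits_{\tin=1}^{\blx}\left(\EX{(\inp_{\tin})^{2}}-\EX{(\ECX{\inp_{\tin}}{\fltrn{\tin-1}})^{2}}\right)
\;\leq\;\sigma^{2},
\end{align}
where the last step uses the conditional Jensen inequality, or equivalently the $\Lp{2}{\smplS}$-orthogonality of $\inp_{\tin}-\ECX{\inp_{\tin}}{\fltrn{\tin-1}}$ and $\ECX{\inp_{\tin}}{\fltrn{\tin-1}}$.

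Finally, Markov's inequality applied to $\mathsf{M}_{\blx}^{2}$ gives $\PX{\mathsf{M}_{\blx}\geq \gamma}\leq \PX{\mathsf{M}_{\blx}^{2}\geq \gamma^{2}}\leq \EX{\mathsf{M}_{\blx}^{2}}/\gamma^{2}\leq \sigma^{2}/\gamma^{2}$, which combined with the event-inclusion above yields \eqref{eq:lem:chebyshev} after taking complements. I expect no real obstacle; the only care required is to note the centering step (which relies on conditional, not merely marginal, mean bounds), the orthogonality of martingale differences for the variance decomposition, and the passage through $\mathsf{M}_{\blx}^{2}$ for the non-symmetric one-sided bound.
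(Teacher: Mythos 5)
Your argument is correct and is essentially the paper's own proof: both center each $\inp_{\tin}$ at its conditional mean $\ECX{\inp_{\tin}}{\fltrn{\tin-1}}$, use the vanishing of cross-terms in the martingale-difference sum to bound the second moment of the centered sum by $\sigma^{2}$, apply the one-sided Chebyshev/Markov inequality, and then transfer the bound to the original event using the hypothesis $\ECX{\inp_{\tin}}{\fltrn{\tin-1}}\leq\amn{\tin}$. The only cosmetic difference is that you write the variance bound as $\EX{(\inp_{\tin})^{2}}-\EX{(\ECX{\inp_{\tin}}{\fltrn{\tin-1}})^{2}}$ via orthogonality, while the paper bounds the conditional variance by the conditional second moment; these are the same estimate.
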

Lemma \ref{lem:chebyshev} is essentially a corollary of the Chebyshev's 
inequality,
a proof is presented in \hyperlink{appendix}{Appendix} for completeness.
A similar lemma was stated for a particular stochastic sequence 
and probability space in \cite[Lemma 41.4]{augustin78}.

\section{SPB for Codes on DSPCs with Feedback}\label{sec:SPB}
The main aim of this section is to prove a non-asymptotic SPB,
i.e., Lemma \ref{lem:SPBF}  given in the following.
We use this non-asymptotic SPB to prove the asymptotic one 
given in Theorem \ref{proposition:SPBF}
at the end of this section in  \S \ref{sec:proof:proposition:SPBF}.
Let us start with stating the aforementioned non-asymptotic SPB.  

\begin{lemma}\label{lem:SPBF}
	For any \(\Wm\!\in\!\pdis{\outS|\inpS}\) satisfying \(\RC{0}{\Wm}\!\neq\!\RC{1}{\Wm}\)
	and	\(\rate_{1},\!\rate_{2}\) satisfying \(\RC{0}{\Wm}\!<\!\rate_{0}\!<\!\rate_{1}\!<\!\RC{1}{\Wm}\),
	let	\(\rnf\!\in\!(0,1)\) satisfy \(\RC{\rnf}{\Wm}=\rate_{0}\),
	\(\rnt\in(\rnf,1)\) 
	satisfy\footnote{Such a \(\rnf\) and \(\rnt\) can always be found as a result of  Lemmas \ref{lem:capacityO} and \ref{lem:intermediate}.} 
	\(\tfrac{1-\rnt}{\rnt}\RC{\rnt}{\Wm}\!=\!\spe{\rate_{1},\!\Wm\!}\),
	positive parameter \(\epsilon\) satisfy \(\epsilon\leq \tfrac{\rnf\wedge(1-\rnt)}{2}\),
	and positive integers 	\(\blx\), \(\knd\) satisfy \(\knd\leq \blx\).
	Then any rate \(\rate\) channel code on 
	the length \(\blx\) DSPC with feedback
	\(\Wmn{\vec{[1,\blx]}}\) satisfying  \(\Wmn{\tin}=\Wm\)
	satisfies
	\begin{align}
	\label{eq:lem:SPBF}
	\Pem{av}
	&\!\geq\!e^{-\blx\left[\spe{\rate-\delta_{1},\Wm}+\delta_{2}\right]},
	\end{align}
	provided that 
	\begin{align}
	\label{eq:lem:SPBF-hypothesis}
	\hspace{-.2cm}
	\rate_{1}
	%	\!-\!\delta_{3}
	\!\geq\!\rate&\!\geq\! 
	\rate_{0}\!+\!\delta_{1},
	\end{align}
	where
	\begin{align}
	\label{eq:lem:SPBF-delta1}
	\delta_{1}
	&\DEF\tfrac{\ln 4}{\blx}+8	
	\tfrac{2+\RC{\sfrac{1}{2}}{\Wm}}{(1-\rnt)\sqrt{\knd}}
	+\tfrac{\knd}{\blx}\ln(\blx+\tfrac{1}{\epsilon}),
	\\
	\label{eq:lem:SPBF-delta2}
	\delta_{2}
	&\DEF\tfrac{\ln 4}{\blx}+8
	\tfrac{2+\RC{\sfrac{1}{2}}{\Wm}}{\rnf \sqrt{\knd}}
	+\tfrac{\knd \ln\blx}{\blx}+\tfrac{2\rate\epsilon}{\rnf^{2}}.
	%	\\
	%	\label{eq:lem:SPBF-delta3}
	%	\delta_{3}
	%	&\DEF\tfrac{\ln 4}{\blx}.
	\end{align}
\end{lemma}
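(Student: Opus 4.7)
\textbf{Proof plan for Lemma \ref{lem:SPBF}.} The plan is to carry out a sphere packing / measure-change argument in which, for each message, the true feedback distribution is compared against an auxiliary output distribution built from the selftilted channel of Definition \ref{def:tiltedchannel}, but with the tilting order \emph{varying across subblocks} (the subblock idea) and then \emph{averaged over a discretized family of such order-assignments} (the averaging idea), so that the argument does not need to know a priori which tilting order is suited to any given encoder path.

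First I would partition $\{1,\ldots,\blx\}$ into $\knd$ consecutive subblocks $B_{1},\ldots,B_{\knd}$ of (nearly) equal length and choose an $\epsilon$-net $\{\rno^{(1)},\ldots,\rno^{(L)}\}\subset[\rnf,\rnt]$ with $L=\bigo{1/\epsilon}$ so that any order in $[\rnf,\rnt]$ is within $\epsilon$ of some $\rno^{(\ind)}$; by Lemmas \ref{lem:centercontinuity} and \ref{lem:tilting} all relevant quantities $\qmn{\rno,\Wm}$, $\Wmn{\rno}$, and the variance bounds of Lemma \ref{lem:variancebound} vary continuously in $\rno$, so the $\epsilon$-discretization costs at most an $\bigo{\epsilon}$ error (this produces the $\tfrac{2\rate\epsilon}{\rnf^{2}}$ contribution in $\delta_{2}$). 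For each message $\dmes$ and each assignment $\vec{\rno}=(\rno^{(\jnd_{1})},\ldots,\rno^{(\jnd_{\knd})})$ of net-orders to subblocks, I would apply Lemma \ref{thm:tulcea} to define a probability measure $\mQ_{\dmes,\vec{\rno}}$ on $\outS_{1}^{\blx}$ by the recursion that, at each time $\tin\in B_{j}$, the conditional \pmf of $\dout_{\tin}$ given the past outputs is $\Wmn{\rno^{(\jnd_{j})}}(\cdot|\vec{\dinp}_{\tin}(\dout_{1}^{\tin-1}))$ with the encoder's feedback input $\vec{\dinp}_{\tin}$. The averaged reference is $\mQ_{\dmes}^{*}=L^{-\knd}\sum_{\vec{\rno}}\mQ_{\dmes,\vec{\rno}}$, and the key gain is that for any single ``good'' assignment $\vec{\rno}^{*}$ we still have $\mQ_{\dmes}^{*}\!\geq\! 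L^{-\knd}\mQ_{\dmes,\vec{\rno}^{*}}$, paying only an additive $\tfrac{\knd}{\blx}\ln L$ penalty in exponent (the source of $\tfrac{\knd}{\blx}\ln(\blx+\tfrac{1}{\epsilon})$ in $\delta_{1}$ after also absorbing a further $\blx$-discretization of subblock positions).

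Next, for each $\dmes$ I would write the log-likelihood ratio $\cln{\vec{\rno}}\DEF\ln\tfrac{d\Wmn{\vec{[1,\blx]}}(\cdot|\enc(\dmes))}{d\mQ_{\dmes,\vec{\rno}}}$ as the telescoped sum $\sum_{j=1}^{\knd}\sum_{\tin\in B_{j}}\ln\tfrac{\Wm(\dout_{\tin}|\vec{\dinp}_{\tin})}{\Wmn{\rno^{(\jnd_{j})}}(\dout_{\tin}|\vec{\dinp}_{\tin})}$, which, by Definition \ref{def:tiltedchannel} and the identity \eqref{eq:tiltedKLD}, decomposes into a per-symbol conditional-mean piece and a conditional-variance piece. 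Together with Lemma \ref{thm:minimax}, the bound \eqref{eq:orderoneovertwo}, and Lemma \ref{lem:variancebound} applied to $\mW=\Wm(\vec{\dinp}_{\tin})$ and $\mQ=\qmn{\rno^{(\jnd_{j})},\Wm}$, each subblock's contribution has bounded conditional mean and second moment controlled by $(2+\RC{\sfrac{1}{2}}{\Wm})^{2}$ divided by $(1-\rno^{(\jnd_{j})})^{2}$ or $(\rno^{(\jnd_{j})})^{2}$. I would then invoke Lemma \ref{lem:chebyshev} on the stochastic sequence of rescaled per-subblock contributions under $\mQ_{\dmes,\vec{\rho}^*}$ (with $\gamma\sim\sqrt{\sigma^2/\text{prob}}$), which yields concentration of $\cln{\vec{\rno}^{*}}$ around its mean at scale $\bigo{\tfrac{2+\RC{\sfrac{1}{2}}{\Wm}}{(1-\rnt)\sqrt{\knd}}}$; this explains the $\sqrt{\knd}$-type terms in both $\delta_{1}$ and $\delta_{2}$, and the separation of $(1-\rnt)$ versus $\rnf$ comes from applying Lemma \ref{lem:variancebound} with the ``w'' and ``q'' sides.

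Finally, the standard Shannon–Gallager–Berlekamp hypothesis-testing step would be carried out symmetrically under $\mQ^*_\dmes$: pick $\vec{\rno}^{*}(\dmes)$ so that the per-subblock mean divergences match the operating point of $\spe{\rate,\Wm}$ (this is the place where the parametrization $\RC{\rnf}{\Wm}=\rate_{0}$ and $\tfrac{1-\rnt}{\rnt}\RC{\rnt}{\Wm}=\spe{\rate_{1},\Wm}$ from Lemma \ref{lem:intermediate} is used); then one Chebyshev event gives $\mQ_{\dmes,\vec{\rno}^{*}}\!\left[\cln{}\leq \rnt\,\blx(\rate-\delta_{1})\right]\!\geq\tfrac{1}{2}$, while the averaged $\mQ_{\dmes}^{*}$-probability that the decoder outputs $\dmes$ is bounded by $\tfrac{1}{|\mesS|}$ (up to another Chebyshev slack, since correct decoding regions are disjoint across messages under any fixed distribution on $\outS_{1}^{\blx}$ built from the same output space). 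Multiplying the likelihood-ratio bound by the complementary event probability and averaging over $\dmes\in\mesS$ yields $\Pem{av}\geq e^{-\blx[\spe{\rate-\delta_{1},\Wm}+\delta_{2}]}$ after using the reparametrization between $\RC{\rno}{\Wm}$ and $\tfrac{1-\rno}{\rno}\RC{\rno}{\Wm}$ from Lemma \ref{lem:capacityO} to convert a shift-in-rate into a shift-in-exponent via the convexity of $\spe{\cdot,\Wm}$.

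The main obstacle I expect is bookkeeping: one must simultaneously (i)~ensure that $\vec{\rno}^{*}(\dmes)$ is a measurable choice so that Lemma \ref{thm:tulcea} actually applies to $\mQ_{\dmes,\vec{\rno}^{*}(\dmes)}$ as a bona fide probability measure on the infinite product space that carries the whole argument, (ii)~match the constants in $\delta_{1},\delta_{2}$ tightly, in particular recovering the factor $8$ and the $\tfrac{\ln 4}{\blx}$ from the two successive Chebyshev-based measure changes (one for the likelihood ratio, one for the decoder region), and (iii)~propagate the $\epsilon$-discretization error through the $\tfrac{1-\rno}{\rno}$-parametrized exponent without losing the continuity afforded by Lemmas \ref{lem:capacityO} and \ref{lem:tilting}. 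The novel ``averaging'' and ``subblock'' ideas buy us precisely the ability to replace what would, in a non-feedback proof, be a fixed i.i.d. reference $\qmn{\rnt,\Wm}^{\otimes\blx}$ by something that can track the encoder's arbitrary feedback policy at the cost only of the $\delta_{1},\delta_{2}$ corrections.
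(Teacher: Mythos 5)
Your plan captures the two signature ideas of Augustin's argument -- the division into $\knd$ subblocks with a subblock-wise choice of tilting order, and an averaging device to cope with not knowing in advance which order suits a given feedback path -- but as written it has a genuine gap in the rate-constraint half of the argument, and it also elides a structural point about \emph{how} the averaging is implemented.

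The gap: you define $\mQ_{\dmes,\vec{\rno}}$ by running the \emph{selftilted channel} $\Wmn{\rno^{(\jnd_j)}}(\cdot\,|\,\vec{\dinp}_\tin(\dout_1^{\tin-1}))$ along the encoder's feedback path for message $\dmes$, and you propose to bound $\PXS{\mQ^*_\dmes}{\est=\dmes}\le 1/\abs{\mesS}$ via disjointness of decoding regions. But that disjointness argument requires a \emph{single} fixed reference measure on $\outS_1^\blx$ shared by all messages; here $\mQ^*_\dmes$ still carries the encoder input $\vec{\dinp}_\tin(\cdot)$ and hence depends on $\dmes$, and no amount of averaging over $\vec{\rno}$ removes that dependence. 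Your $\mQ_{\dmes,\vec{\rno}}$ is the analogue of the paper's $\prob_{\mV}$ (the tilted \emph{input-dependent} measure used for the exponent side of the measure change). What the paper also builds, and what your plan is missing, is the measure $\prob_{\mQ}$ whose subblock output conditionals are the \emph{R\'enyi center} $\qmn{\sta_\ind,\Wm}(\dout_\tin)$, which does not see the channel input at all -- that is the object the rate bound needs. Even $\prob_{\mQ}$ is not yet message-independent, because the tilting order is chosen path- and message-adaptively; this residual dependence is precisely what the paper's pigeon-hole step handles, by finding one $\knd$-cube $\zeta^*$ of order-values that carries $\ge 1/(2\blx^\knd)$ of $\prob_{\mV}$-mass and then switching to the genuinely message-independent $\prob_{\mU}$ with the order fixed near $\rnb_1,\ldots,\rnb_\knd$. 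Your sketch cites a ``further $\blx$-discretization of subblock positions'' for the $\blx$ inside $\ln(\blx+1/\epsilon)$, but in the paper that $\blx$ comes from discretizing the $\knd$-tuple of \emph{order values} $\grv_1^\knd$ into $\blx^\knd$ cubes -- a different object than subblock positions, and an indispensable one.

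Two smaller but related discrepancies. First, the averaging in the paper is not over a discrete $\epsilon$-net of orders; the sample space itself is extended with a continuous $\sta_\ind\in(0,1)$ per subblock, whose conditional law is uniform on an $\epsilon$-window centered at $\gX_\ind(\dmes,\dout_1^{\tin_{\ind-1}})$, and all four measures live on this extended space -- this is what lets the Radon--Nikodym derivatives $\der{\prob_{\mV}}{\prob_{\mQ}}$, $\der{\prob_{\mV}}{\prob}$ be well defined and lets Lemma~\ref{thm:tulcea} be invoked cleanly (the continuity Lemmas~\ref{lem:centercontinuity}, \ref{lem:tilting} are used to verify the Borel-measurability of the conditional kernels, not to control an $\epsilon$-net error). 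Second, the per-subblock tilting order must be \emph{feedback-adaptive}, i.e.\ a function of $(\dmes,\dout_1^{\tin_{\ind-1}})$; choosing a fixed $\vec{\rno}^*(\dmes)$ per message does not match the operating point of $\spe{\rate,\Wm}$ along every realized feedback path, and the intermediate-value-theorem step in the paper's definition of $\gX_\ind$ is exactly what makes \eqref{eq:LSPBF_06} hold realization by realization so that Lemma~\ref{lem:chebyshev} can be applied to the stochastic sequence $(\qrv_\ind,\smplA_\ind)$. So the high-level choreography you describe is right, but the specific object you use as the reference measure and the specific argument you give for the $e^{-\blx\rate}$ side would not close.
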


The proof of Lemma \ref{lem:SPBF} relies on a pigeon hole argument 
and a measure change argument. In this respect, it is similar
to the standard proofs of the SPB.
Its principle novelty is in the choice/construction 
of the probability spaces and measures to apply these arguments.
We present this construction and the proof through self contained
pieces in \S\ref{sec:subblocks}-\S\ref{sec:pigeonhole}.
\begin{itemize}
\item In \S\ref{sec:subblocks}, we divide the block length into \(\knd\) subblocks of
approximately equal length.

\item In \S\ref{sec:construction}, we extend the natural finite sample space
that is used to describe the channel codes by introducing a positive valued
random variable at beginning of each subblock and construct probability 
measures \(\prob\), \(\prob_{\!\mV}\), \(\prob_{\!\mQ}\)
for the extended sample space
using a sequence of functions \(\gX_{1},\ldots,\gX_{\knd}\)
to be determined later.
The probability of the error event under \(\prob\) will be equal 
to \(\Pem{av}\) by construction.

\item In \S \ref{sec:tuning}, we describe a choice of the functions
\(\gX_{1},\ldots,\gX_{\knd}\) that bounds the order-one
R\'enyi divergence between the conditional \pmf\!\!'s of
the outputs of the subblocks, i.e. \(\out_{1+\tin_{\ind-1}}^{\tin_{\ind}}\!\)'s,
under \(\prob_{\!\mV}\) and \(\prob_{\!\mQ}\)
---as well as under \(\prob_{\!\mV}\) and \(\prob\)---
\(\prob_{\!\mV}\)-almost surely.

\item In \S\ref{sec:substantialprobability}, we use Chebyshev's inequality
to find an event \(\oev\) in the extended probability spaces
satisfying \(\PXS{\mV}{\!\oev\!}\!\geq\!0.5\)
for which both
\(\PX{\oev\cap \set{B}}\gtrapprox e^{-\blx \spe{\rate,\Wm}}\PXS{\mV}{\oev\cap \set{B}}\)
and
\(\PXS{\mQ}{\oev\cap \set{B}}\gtrapprox e^{-\blx \rate} \PXS{\mV}{\oev\cap \set{B}}\)
hold for any event \(\set{B}\) in the extended probability spaces.

\item In \S\ref{sec:pigeonhole}, we apply a measure change argument
together with a pigeon hole argument to prove Lemma \ref{lem:SPBF}.
\end{itemize}

In the following, we assume without loss of generality that 
the input and output sets are finite subsets of \(\reals{}\).
This will allow us to call 
the channel input and output at time \(\tin\) 
random variables and to denote them by \(\inp_{\tin}\)
and \(\out_{\tin}\), respectively.
Similarly, we assume that \(\mesS\) is a subset of \(\reals{}\) and denote
the random variables associated with the transmitted 
and decoded messages by \(\mes\) and \(\est\), respectively.  
We denote the realizations of the random variables 
such as \(\mes\), \(\sta_{\ind}\), \(\est\)
or vectors such as \(\inp_{\tau}^{\tin}\), \(\out_{\tau}^{\tin}\)
by the corresponding lower case letters 
such as \(\dmes\), \(\dsta_{\ind}\), \(\dest\)
or \(\dinp_{\tau}^{\tin}\), \(\dout_{\tau}^{\tin}\).
We denote the expected value of a random variable \(\qrv\)
under \(\prob_{\!\mV}\) by  \(\EXS{\mV}{\qrv}\).
As it is customary, we denote the expected value of 
a random variable \(\qrv\) 
conditioned on the random variable \(\sta\)
(i.e., conditioned on the minimum \(\sigma\)-algebra generated by \(\sta\))
by \(\ECX{\qrv}{\sta}\).
When we are working with \(\prob_{\!\mV}\)
instead of \(\prob\),
we use \(\ECXS{\mV}{\qrv}{\sta}\)
rather than \(\ECX{\qrv}{\sta}\).

\subsection{Division into \(\knd\) Subblocks}\label{sec:subblocks}
We divide the length \(\blx\) block into \(\knd\) subblocks of 
length either \(\lfloor\tfrac{\blx}{\knd}\rfloor\) or \(\lceil\tfrac{\blx}{\knd}\rceil\).
In particular, we set \(\tin_{0}\) to zero and define \(\ell_{\ind}\) and \(\tin_{\ind}\)
for \(\ind \in\{1,\ldots,\knd\}\) as follows
\begin{align}
\notag
\ell_{\ind}
&\DEF \lceil  \sfrac{\blx}{\knd} \rceil  \TCIND{(0,\blx-\lfloor \sfrac{\blx}{\knd} \rfloor \knd]}{\ind}+
\lfloor \sfrac{\blx}{\knd} \rfloor \TCIND{(\blx-\lfloor \sfrac{\blx}{\knd} \rfloor \knd,\knd]}{\ind},
\\
\notag
\tin_{\ind}
&\DEF \tin_{\ind-1}+\ell_{\ind}.
\end{align}
The last time instance of the \(\ind^{{th}}\) subblock is \(\tin_{\ind}\); 
for brevity, we denote the first time instance 
by \(\tau_{\ind}\), i.e.,
\begin{align}
\notag
\tau_{\ind}
&\DEF \tin_{\ind-1}+1.
\end{align}
Figure \ref{fig:time} demonstrates a typical partitioning of the length \(\blx\) block 
into $\knd$ subblocks.

\subsection{Construction of Auxiliary Probability Measures for a Given Sequence of Functions \(\gX_{1},\ldots,\gX_{\knd}\)}\label{sec:construction}
Let the sample space \(\smplS\) and \(\sigma\)-algebra of its subsets \(\smplA\) be
\begin{align}
\notag
\smplS
&\DEF\mesS\times\staS_{1}\times \outS_{\tau_{1}}^{\tin_{1}}\times
\cdots\times\staS_{\knd}\times\outS_{\tau_{\knd}}^{\tin_{\knd}},
\\
\notag
\smplA
&\DEF\sss{\mesS}\otimes\rborel{\staS_{1}}\otimes \sss{\outS_{\tau_{1}}^{\tin_{1}}}\otimes
\cdots\otimes\rborel{\staS_{\knd}}\otimes\sss{\outS_{\tau_{\knd}}^{\tin_{\knd}}},
\end{align}
where \(\staS_{\ind}\) is the open interval \((0,1)\) and 
\(\rborel{\staS_{\ind}}\)  is the associated Borel \(\sigma\)-algebra
for each \(\ind\) in \(\{1,\ldots,\knd\}\).

Let the \(\sigma\)-algebras \(\fltrn{0},\ldots,\fltrn{\knd}\) be
\begin{align}
\notag
\fltrn{0}
&\DEF \sss{\mesS},
\\
\notag
\fltrn{\ind}
&\DEF\fltrn{\ind-1}\otimes\rborel{\staS_{\ind}}\otimes\sss{\outS_{\tau_{\ind}}^{\tin_{\ind}}}
&
&\forall\ind\!\in\!\{1,\ldots,\knd\}.
\end{align}

In the following, we construct three probability measures on \((\smplS,\smplA)\)
---i.e., \(\prob\), \(\prob_{\!\mV}\), and  \(\prob_{\!\mQ}\)---
through their marginal distributions on \(\mesS\) and 
their  conditional distributions using Lemma \ref{thm:tulcea}.
The marginal distributions of \(\prob\), \(\prob_{\!\mV}\), and  \(\prob_{\!\mQ}\)
on the message set \(\mesS\) are all equal to the uniform distribution.
We specify the conditional distributions of \(\sta_{\ind}\!\)'s individually 
and the conditional distributions of \(\out_{\tin}\!\)'s
jointly through the conditional distributions of the vectors of
the form \(\out_{\tau_{\ind}}^{\tin_{\ind}}\).
In both cases, however, we demonstrate the conditional distributions to be Borel functions.
This allows us to invoke the existence of unique probability measures
\(\prob\), \(\prob_{\!\mV}\), and  \(\prob_{\!\mQ}\) on \((\smplS,\smplA)\) 
with the given conditional
distributions\footnote{Those readers who are not already familiar with the
	technical subtleties about the conditional probabilities 
	might benefit from taking the existence of 
	\(\prob\), \(\prob_{\!\mV}\), and  \(\prob_{\!\mQ}\) on \((\smplS,\smplA)\) 
	with the conditional distributions given in \eqref{eq:LSPBF_01}, 
	\eqref{eq:LSPBF_02w}, \eqref{eq:LSPBF_02q}, and \eqref{eq:LSPBF_02v}  
	granted, at least in their initial reading.}
 via  Lemma \ref{thm:tulcea}.

\begin{figure}[ht]
\begin{center}
%	\vspace{-.3cm}
%	\hspace{-0.18cm}
	\begin{tikzpicture}
	\draw[decorate,decoration={brace,amplitude=11pt},xshift=0.0pt,yshift=0.0pt](0,1) -- (8.9,1) node [black,midway,yshift=0.5cm] 
	{\footnotesize $\blx$};
	
	\draw [thick] (4.85,0.5) -- (0,0.5) -- (0,1) -- (4.85,1);
	\draw [thick] (6.15,0.5) -- (8.9,0.5) -- (8.9,1) -- (6.15,1);
	
	\draw (0.4,1) -- (0.4,0.5);
	\node[right] at (-.053,.75) {\(\tau_{1}\)};
	\node at (1.0,.75) {\(\cdots\)};
	\draw (1.7,1) -- (1.7,0.5);
	\node[right] at (1.66,.75) {\(\tin_{1}\)};
	\draw (2.1,1) -- (2.1,0.5);
	\draw[decorate,decoration={brace,amplitude=8pt,mirror},xshift=0.0pt,yshift=0.0pt](0.0,0.5) -- (2.1,0.5) node [black,midway,yshift=-0.5cm] 
	{\footnotesize $\ell_{1}\!=\!\lceil\tfrac{\blx}{\knd}\rceil$};
	\draw (2.5,1) -- (2.5,0.5);
	\node[right] at (2.02,.75) {\(\tau_{2}\)};
	\node at (3.1,.75) {\(\cdots\)};
	\draw (3.8,1) -- (3.8,0.5);
	\node[right] at (3.76,.75) {\(\tin_{2}\)};
	\draw (4.2,1) -- (4.2,0.5);
	\draw[decorate,decoration={brace,amplitude=8pt,mirror},xshift=0.0pt,yshift=0.0pt](2.1,0.5) -- (4.2,0.5) node [black,midway,yshift=-0.5cm] 
	{\footnotesize $\ell_{2}$};
	
	\node at (5.45,.75) {\(\cdots\cdots\)};
	
	\draw (6.9,1) -- (6.9,0.5);
	\node[right] at (6.82,.75) {\(\tau_{\knd}\)};
	\draw (7.3,1) -- (7.3,0.5);
	\node at (8.0,.75) {\(\cdots\)};
	\draw (8.5,1) -- (8.5,0.5);
	\node[right] at (8.43,.75) {\(\tin_{\knd}\)};
	\draw[decorate,decoration={brace,amplitude=8pt,mirror},xshift=0.0pt,yshift=0.0pt](6.9,0.5) -- (8.9,0.5) node [black,midway,yshift=-0.5cm] 
	{\footnotesize $\ell_{\knd}\!=\!\lfloor\tfrac{\blx}{\knd}\rfloor$};
	\end{tikzpicture}
	\vspace{-.6cm}
\caption{A typical partitioning of the length \(\blx\) block into \(\knd\) subblocks. 		
The length of the first subblock is always $\lceil\tfrac{\blx}{\knd}\rceil$
and the length of the last subblock is always $\lfloor\tfrac{\blx}{\knd}\rfloor$.}
\label{fig:time}
\vspace{-.3cm}
\end{center}
\end{figure} 

Let us first describe the conditional distributions of
\(\sta\)'s.
Let \(\gX_{1}\) be a function from \(\mesS\) to \((0,1)\)
to be determined later.
Similarly, for each \(\ind\)  in \(\{2,\ldots,\blx\}\), 
let
\(\gX_{\ind}\!:\!\mesS\times\outS_{1}^{\tin_{\ind-1}}\to(0,1)\)
be a  function that is to be determined later.
The conditional distribution of \(\sta_{\ind}\) is 
the same for \(\prob\), \(\prob_{\!\mV}\), and  \(\prob_{\!\mQ}\)
and it is determined by the function \(\gX_{\ind}\) as follows:
\vspace{-.1cm}
\begin{align}
\label{eq:LSPBF_01}
\PCX{\cset}{\dmes,\dsta_{1}^{\ind-1},\dout_{1}^{\tin_{\ind-1}}}	
&\!=\!\tfrac{1}{\epsilon}
\int_{(1-\epsilon)\rno}^{\rno+\epsilon(1-\rno)} \TCIND{\cset}{\dsta} \dif{\dsta}
\end{align}
for all \(\cset\!\in\!\rborel{\staS_{\ind}}\),
where \(\rno\!=\!\gX_{\ind}(\dmes,\dout_{1}^{\tin_{\ind-1}})\).
Since \(\mesS\times\outS_{1}^{\tin_{\ind-1}}\) is a finite set,
all of the elements of its power set are Borel sets
and \(\PCX{\cset}{\dmes,\dsta_{1}^{\ind-1},\dout_{1}^{\tin_{\ind-1}}}\)
is a Borel function for any \(\cset\!\in\!\rborel{\staS_{\ind}}\).

Let us proceed with the description of the conditional probability 
distributions of  \(\out\)'s.
For \(\prob\) we have
\begin{align}
\label{eq:LSPBF_02w}
\PCX{\dout_{\tau_{\ind}}^{\tin_{\ind}}}{\dmes,\dsta_{1}^{\ind},\dout_{1}^{\tin_{\ind-1}}}	
&\!=\!
\prod\nolimits_{\tin=\tau_{\ind}}^{\tin_{\ind}}\!\Wm(\dout_{\tin}|\dinp_{\tin})
\end{align}
for all \(\dout_{\tau_{\ind}}^{\tin_{\ind}}\!\in\!\outS_{\tau_{\ind}}^{\tin_{\ind}}\)
where \(\dinp_{\tin}\) is the channel input at time 
\(\tin\), which is nothing but \(\vec{\dinp\!}_{\!\tin}(\dout_{1}^{\tin-1})\) 
for \(\vec{\dinp\!}_{1}^{\blx}\) 
satisfying \(\enc(\dmes)\!=\!\vec{\dinp\!}_{1}^{\blx}\).
Note that \(\PCX{\cset}{\dmes,\dsta_{1}^{\ind},\dout_{1}^{\tin_{\ind-1}}}\) does not
depend on \(\dsta_{1}^{\ind}\).
Thus \(\PCX{\cset}{\dmes,\dsta_{1}^{\ind},\dout_{1}^{\tin_{\ind-1}}}\)
is a Borel function for all 
\(\cset\!\subset\!\outS_{\tau_{\ind}}^{\tin_{\ind}}\)
as a consequence of the finiteness of \(\mesS\times\outS_{1}^{\tin_{\ind-1}}\). 

For \(\prob_{\!\mQ}\) we have
\vspace{-.1cm}
\begin{align}
\label{eq:LSPBF_02q}
\PCXS{\mQ}{\dout_{\tau_{\ind}}^{\tin_{\ind}}}{\dmes,\dsta_{1}^{\ind},\dout_{1}^{\tin_{\ind-1}}}	
&\!=\!
\prod\nolimits_{\tin=\tau_{\ind}}^{\tin_{\ind}}\!
\qmn{\dsta_{\ind},\Wm}(\dout_{\tin})
\end{align}
for all \(\dout_{\tau_{\ind}}^{\tin_{\ind}}\!\in\!\outS_{\tau_{\ind}}^{\tin_{\ind}}\).
Since R\'enyi center is continuous in its order by Lemma \ref{lem:centercontinuity}, 
\(\PCXS{\mQ}{\cset}{\dmes,\dsta_{1}^{\ind},\dout_{1}^{\tin_{\ind-1}}}\)
is a continuous and hence a Borel function of \(\dsta_{\ind}\)
for all  \(\cset\!\subset\!\outS_{\tau_{\ind}}^{\tin_{\ind}}\).

For \(\prob_{\!\mV}\) we have
\vspace{-.1cm}
\begin{align}
\label{eq:LSPBF_02v}
\PCXS{\mV}{\dout_{\tau_{\ind}}^{\tin_{\ind}}}{\dmes,\dsta_{1}^{\ind},\dout_{1}^{\tin_{\ind-1}}}	
&\!=\!
\prod\nolimits_{\tin=\tau_{\ind}}^{\tin_{\ind}}\!\Wmn{\dsta_{\ind}}(\dout_{\tin}|\enc_{\tin}(\dmes,\dout_{1}^{\tin-1}))
\end{align}
for all \(\dout_{\tau_{\ind}}^{\tin_{\ind}}\!\in\!\outS_{\tau_{\ind}}^{\tin_{\ind}}\)
where \(\Wmn{\dsta_{\ind}}\) is the order-\(\dsta_{\ind}\) 
selftilted channel described in Definition \ref{def:tiltedchannel}
and \(\dinp_{\tin}\) is the channel input at time 
\(\tin\). 
Since \(\Wmn{\rno}(\cdot|\dinp)\) is continuous in \(\rno\)
for any \(\dinp\)  
by Lemmas \ref{lem:centercontinuity} and \ref{lem:tilting},
\(\PCXS{\mV}{\cset}{\dmes,\dsta_{1}^{\ind},\dout_{1}^{\tin_{\ind-1}}}\) 
is a continuous function of \(\dsta_{\ind}\) for any \(\dout_{1}^{\tin_{\ind-1}}\),
which does not depend on \(\dsta_{1}^{\ind-1}\). 
Since \(\outS_{1}^{\tin_{\ind-1}}\) is a finite set,
this will ensure 
\(\PCXS{\mV}{\cset}{\dmes,\dsta_{1}^{\ind},\dout_{1}^{\tin_{\ind-1}}}\) 
to be a Borel function for any \(\cset\!\subset\!\outS_{\tau_{\ind}}^{\tin_{\ind}}\).
\vspace{-.3cm}

\subsection{A Choice of \(\gX_{1},\ldots,\gX_{\knd}\)}\label{sec:tuning}
The preceding construction works for any choice of the functions 
\(\gX_{1},\ldots,\gX_{\knd}\).
However, only some of the choices are appropriate for our purposes.
In the following, we choose 
\(\gX_{1},\ldots,\gX_{\knd}\) by determining the value of 
\(\gX_{\ind}(\dmes,\dout_{1}^{\tin_{\ind-1}})\) 
for each \(\ind\), \(\dmes\), and \(\dout_{1}^{\tin_{\ind-1}}\)
individually 
and commit to the resulting \(\gX_{1},\ldots,\gX_{\knd}\)'s
for the rest of the paper.
In order to find the aforementioned appropriate choice we analyze the 
value of certain conditional expectation
---i.e., \(\ECXS{\mV}{\hrv_{\ind}}{\dmes,\dout_{1}^{\tin_{\ind-1}}}\)---
as a function of the value of  \(\gX_{\ind}\) at 
\((\dmes,\dout_{1}^{\tin_{\ind-1}})\)
---i.e., as a function of 
\(\gX_{\ind}(\dmes,\dout_{1}^{\tin_{\ind-1}})\)--- 
at each \((\dmes,\dout_{1}^{\tin_{\ind-1}})\) individually.

Note that \(\RD{1}{\Wmn{\sta_{\ind}}(\inp_{\tin})}{\qmn{\sta_{\ind},\Wm}}\)
is a random variable that is measurable in the \(\sigma\)-algebra generated by
\(\inp_{\tin}\) and \(\sta_{\ind}\)
because \(\RD{1}{\Wmn{\dsta}(\dinp)}{\qmn{\dsta,\Wm}}\) is 
continuous in \(\dsta\) by Lemmas \ref{lem:centercontinuity}  and  \ref{lem:tilting}.
For any \(\ind\!\in\!\{1,\ldots,\knd\}\),
let the random variable \(\hrv_{\ind}\) be
\begin{align}
\label{eq:LSPBF_03}
\hrv_{\ind}
&\!\DEF\!
\sum\nolimits_{\tin=\tau_{\ind}}^{\tin_{\ind}}
\ECXS{\mV}{\RD{1}{\Wmn{\sta_{\ind}}(\inp_{\tin})}{\qmn{\sta_{\ind},\Wm}}}{\fltrn{\ind-1},\!\sta_{\ind}}.
\end{align}
Note that
 \(\hrv_{\ind}\) is a non-negative 
random variable 
by \eqref{eq:pinsker}.
Furthermore
\(\RD{1}{\!\Wmn{\sta_{\ind}}(\inp_{\tin})}{\qmn{\sta_{\ind},\Wm}\!}\!\leq\!\RD{\sta_{\ind}}{\!\Wm(\inp_{\tin})}{\qmn{\sta_{\ind},\Wm}\!}\)
by \eqref{eq:pinsker} and \eqref{eq:tiltedKLD} and  
\(\RD{\sta_{\ind}}{\!\Wm(\inp_{\tin})}{\qmn{\sta_{\ind},\Wm}\!}\!\leq\!\RC{\sta_{\ind}}{\Wm}\)
by Lemma \ref{thm:minimax}.
Thus for any \(\ind\!\in\!\{1,\ldots,\knd\}\),
the random variables \(\hrv_{\ind}\) and \(\RC{\sta_{\ind}}{\Wm}\) 
satisfy
\begin{align}
\label{eq:LSPBF_04}
0\leq \hrv_{\ind}
&\leq \ell_{\ind} \RC{\sta_{\ind}}{\Wm}
&
&
\end{align}
for all realizations of \(\fltrn{\ind-1}\) and \(\sta_{\ind}\).
Then for all realizations of \(\fltrn{\ind-1}\),
the conditional expectation \(\ECXS{\mV}{\hrv_{\ind}}{\fltrn{\ind-1}}\)
is a continuous function of the value of \(\gX_{\ind}\) 
at \((\dmes,\dout_{1}^{\tin_{\ind-1}})\)
---i.e., \(\gX_{\ind}(\dmes,\dout_{1}^{\tin_{\ind-1}})\)---
as a result of \eqref{eq:LSPBF_01} defining the conditional 
distribution of \(\sta_{\ind}\) for \(\prob\), \(\prob_{\!\mV}\), and  \(\prob_{\!\mQ}\),
because  \(\RC{\rno}{\Wm}\)
is nondecreasing in \(\rno\) and finite on \((0,1)\) by Lemma \ref{lem:capacityO}.
Thus we can tune the value of 
\(\ECXS{\mV}{\hrv_{\ind}}{\fltrn{\ind-1}}\) by changing the value of the 
function \(\gX_{\ind}\) for different realizations of \(\mes\)
and \(\out_{1}^{\tin_{\ind-1}}\).

On the other hand as a result of the construction, we have
\begin{align}
\label{eq:LSPBF_05}
\ECXS{\mV}{\hrv_{\ind}}{\fltrn{\ind-1}}
&=\ECXS{\mV}{\hrv_{\ind}}{\mes,\out_{1}^{\tin_{\ind-1}}}.
\end{align}

We use the following rule to choose the value \(\gX_{\ind}\) 
at each \((\dmes,\dout_{1}^{\tin_{\ind-1}})\)
depending on the rate of the code \(\rate\)
and the positive constant \(\delta_{1}\) defined in \eqref{eq:lem:SPBF-delta1}.
\begin{itemize}
	\item If \(\ECXS{\mV}{\hrv_{\ind}}{\dmes,\dout_{1}^{\tin_{\ind-1}}}
	\leq \ell_{\ind}(\rate-\delta_{1})\) 
	for \(\gX_{\ind}(\dmes,\dout_{1}^{\tin_{\ind-1}})\!=\!\tfrac{\rnt}{1-\epsilon}\),
	then \(\gX_{\ind}(\dmes,\dout_{1}^{\tin_{\ind-1}})\!=\!\tfrac{\rnt}{1-\epsilon}\).

	\item If \(\ECXS{\mV}{\hrv_{\ind}}{\dmes,\dout_{1}^{\tin_{\ind-1}}}>\ell_{\ind}(\rate-\delta_{1})\) 
for \(\gX_{\ind}(\dmes,\dout_{1}^{\tin_{\ind-1}})\!=\!\tfrac{\rnt}{1-\epsilon}\),
then \(\gX_{\ind}(\dmes,\dout_{1}^{\tin_{\ind-1}})\!=\!\rno\) for an \(\rno\) in 
\([\tfrac{\rnf-\epsilon}{1-\epsilon},\tfrac{\rnt}{1-\epsilon})\) satisfying 
\(\ECXS{\mV}{\hrv_{\ind}}{\dmes,\dout_{1}^{\tin_{\ind-1}}}\!=\!\ell_{\ind}(\rate-\delta_{1})\).
The existence of such an \(\rno\) follows from 
the continuity of \(\ECXS{\mV}{\hrv_{\ind}}{\dmes,\dout_{1}^{\tin_{\ind-1}}}\)
in the value of \(\gX_{\ind}(\dmes,\dout_{1}^{\tin_{\ind-1}})\), 
the intermediate value theorem \cite[4.23]{rudin},  
and the inequality  
\(\ECXS{\mV}{\hrv_{\ind}}{\dmes,\dout_{1}^{\tin_{\ind-1}}}\leq\ell_{\ind}(\rate-\delta_{1})\)
for \(\gX_{\ind}(\dmes,\dout_{1}^{\tin_{\ind-1}})\!=\!\tfrac{\rnf-\epsilon}{1-\epsilon}\).
In order to see why the inequality at \(\tfrac{\rnf-\epsilon}{1-\epsilon}\)  holds,
first note that \eqref{eq:LSPBF_01} and \eqref{eq:LSPBF_04} imply
\vspace{-.1cm}
\begin{align}
\notag
\ECXS{\mV}{\hrv_{\ind}}{\dmes,\dout_{1}^{\tin_{\ind-1}}}
\!\leq\!\tfrac{\ell_{\ind}}{\epsilon}
\int_{\rnf-\epsilon}^{\rnf} \RC{\dsta}{\Wm} \dif{\dsta}.
\end{align}
Then the inequality follows from \eqref{eq:lem:SPBF-hypothesis}, 
\(\rate_{0}\!=\!\RC{\rnf}{\Wm}\),
and the monotonicity of the R\'enyi capacity in its order.
\end{itemize}
The choice of \(\gX_{\ind}\)'s described 
above ensures not only
\begin{align}
\label{eq:LSPBF_06}
0\leq \ECXS{\mV}{\hrv_{\ind}}{\fltrn{\ind-1}}
&\leq \ell_{\ind}(\rate-\delta_{1})
\end{align}
for all \(\ind\in\{1,\ldots,\knd\}\), but also
\begin{align}
\label{eq:LSPBF_07}
\tfrac{\rnf-\epsilon}{1-\epsilon}
\leq \grv_{\ind}
&\leq \tfrac{\rnt}{1-\epsilon}
\\
\label{eq:LSPBF_08}
\rnf-\epsilon
\leq \sta_{\ind}
&\leq \rnt+\epsilon
\end{align}
for all \(\ind\in\{1,\ldots,\knd\}\),
where \(\grv_{\ind}\) is the random
variable defined as \(\grv_{\ind}\DEF\gX_{\ind}(\mes,\out_{1}^{\tin_{\ind-1}})\).

\subsection{Application of Chebyshev's Inequality to Find an Event with Substantial Probability under the Auxiliary Measure}\label{sec:substantialprobability}
The preceding choice of the functions \(\gX_{1},\ldots,\gX_{\knd}\), 
bounds the expected value of random 
variables that are used in the measure change argument. 
In order to apply the measure change argument, we first prove
---using Lemma \ref{lem:chebyshev}--- that
these random variables take values that are close to their means 
with substantial probability under \(\prob_{\!\mV}\).

%For any \(\ind\!\in\!\{1,\ldots,\knd\}\),
Let \(\fltrn{\ind}\)-measurable random variable
\(\qrv_{\ind}\) be
\begin{align}
\label{eq:LSPBF_09}
\qrv_{\ind}
&\!\DEF\!\ln \tfrac{\PCXS{\mV}{\out_{\tau_{\ind}}^{\tin_{\ind}}}{\mes,\sta_{1}^{\ind},\out_{1}^{\tin_{\ind-1}}}}
{\PCXS{\mQ}{\out_{\tau_{\ind}}^{\tin_{\ind}}}{\mes,\sta_{1}^{\ind},\out_{1}^{\tin_{\ind-1}}}}
&
&~
&
&~
\forall \ind\!\in\!\{1,\ldots,\knd\}.
\end{align} 
Note that 
\eqref{eq:LSPBF_02q},
\eqref{eq:LSPBF_02v},
\eqref{eq:LSPBF_03},
and the definition of order-one R\'enyi divergence imply
\begin{align}
\notag
\ECXS{\mV}{\qrv_{\ind}}{\fltrn{\ind-1},\!\sta_{\ind}}
&\!=\!\hrv_{\ind}. 
\end{align}
Then \eqref{eq:LSPBF_06} implies
\begin{align}
\label{eq:LSPBF_10}
0\leq \ECXS{\mV}{\qrv_{\ind}}{\fltrn{\ind-1}}
&\leq \ell_{\ind}(\rate-\delta_{1})
&
&~
&
&~
\forall \ind\!\in\!\{1,\ldots,\knd\}.
\end{align}
%for all \(\ind\in\{1,\ldots,\knd\}\).

Let us proceed with bounding the second moments  
of \(\qrv_{\ind}\!\)'s from above. The Cauchy-Schwarz inequality implies
\begin{align}
\notag
\EXS{\mV}{(\qrv_{\ind})^{2}}
&\!=\!
\sum\nolimits_{\tin=\tau_{\ind}}^{\tin_{\ind}}
\sum\nolimits_{\bar{\jnd}=\tau_{\ind}}^{\tin_{\ind}}
\EXS{\mV}{
	\drv_{\tin}
	\drv_{\jnd}}
\\
\notag
&\!\leq\!
\sum\nolimits_{\tin=\tau_{\ind}}^{\tin_{\ind}}
\sum\nolimits_{\bar{\jnd}=\tau_{\ind}}^{\tin_{\ind}}
\sqrt{\EXS{\mV}{(\drv_{\tin})^{2}}
	\EXS{\mV}{(\drv_{\jnd})^{2}}
}
\end{align}
where 
\(\!\drv_{\tin}
\!\DEF\!\ln\!\tfrac{\PCXS{\mV}{\out_{\tin}}{\mes,\sta_{1}^{\ind},\out_{1}^{\tin-1}}}
{\PCXS{\mQ}{\out_{\tin}}{\mes,\sta_{1}^{\ind},\out_{1}^{\tin-1}}}\)
for all \(\tin\in\{\tau_{\ind},\ldots,\tin_{\ind}\}\).

On the other hand using 
the definition of the order-one R\'enyi divergence
and  
\eqref{eq:lem:variancebound-q} of Lemma \ref{lem:variancebound}
we get
\begin{align}
\notag
\EXS{\mV}{(\drv_{\tin})^{2}}
&=\EXS{\mV}{
	\ECXS{\mV}{(\drv_{\tin})^{2}}{\fltrn{\ind-1},\sta_{\ind},\out_{\tau_{\ind}}^{\tin-1}}}
\\
\notag
&\leq 
\EXS{\mV}{\tfrac{4}{(1-\sta_{\ind})^{2}}
	\!+\!(\RD{\sta_{\ind}}{\Wm(\inp_{\tin})}{\qmn{\sta_{\ind},\Wm}})^{2}}
\end{align}
First invoking 
\eqref{eq:thm:minimaxradiuscenter}  and
\eqref{eq:orderoneovertwo} 
to bound \(\RD{\sta_{\ind}}{\Wm(\inp_{\tin})}{\qmn{\sta_{\ind},\Wm}}\),
and then using the identity
\(1-\sta_{\ind}\geq \tfrac{1-\rnt}{2}\),
which follows from \eqref{eq:LSPBF_08}
and the hypothesis
\(\epsilon\leq \tfrac{\rnf\wedge(1-\rnt)}{2}\),
we get
\begin{align}
\notag
\EXS{\mV}{(\drv_{\tin})^{2}}
%&\!\leq\!
%\EXS{\mV}{\tfrac{4}{(1-\sta_{\ind})^{2}}+(\RC{\sta_{\ind}}{\Wm})^{2}}
%\\
%\notag
&\!\leq\!
\EXS{\mV}{\tfrac{4+(\RC{\sfrac{1}{2}}{\Wm})^{2}}{(1-\sta_{\ind})^{2}}}
\\
\notag
&\!\leq\!
4\tfrac{4+(\RC{\sfrac{1}{2}}{\Wm})^{2}}{(1-\rnt)^{2}}.
\end{align}
Thus using \(\ell_{\ind} \leq 2\tfrac{\blx}{\knd}\) we get
\begin{align}
\notag
\EXS{\mV}{(\qrv_{\ind})^{2}}
&\leq\ell_{\ind}^{2} 4\tfrac{4+(\RC{\sfrac{1}{2}}{\Wm})^{2}}{(1-\rnt)^{2}}
\\
\label{eq:LSPBF_11}
&\leq 16\tfrac{4+(\RC{\sfrac{1}{2}}{\Wm})^{2}}{(1-\rnt)^{2}} \tfrac{\blx^{2}}{\knd^{2}}.
%\\
%\ECXS{\mV}{(\qrv_{\ind})^{2}}{\fltrn{\ind-1}}
%&\leq 16\tfrac{\knd^{2} + (\blx\RC{\sfrac{1}{2}}{\Wm})^{2}}{(1-\rnt)^{2}\knd^{2}}.
\end{align}
Applying Lemma \ref{lem:chebyshev},
for \(\amn{\ind}=\ell_{\ind}(\rate-\delta_{1})\)
to the stochastic 
sequence\footnote{Note that \(\smplA_{\ind}\)'s are not
	defined as \(\sigma\)-algebras on \(\smplS\) and hence they are not 
	sub-\(\sigma\)-algebras of \(\smplA\).
	Nevertheless, for each \(\smplA_{\ind}\)
	there is a corresponding \(\widetilde{\fltrn{\ind}}\subset \smplA\)
	that uniquely determines \(\fltrn{\ind}\) and that is uniquely determined by \(\fltrn{\ind}\).
	When applying Lemma \ref{lem:chebyshev} we are in fact considering 
	\((\qrv_{1},\widetilde{\smplA}_{1}),\ldots,(\qrv_{\knd},\widetilde{\smplA}_{\knd})\)
	rather than
	\((\qrv_{1},\smplA_{1}),\ldots,(\qrv_{\knd},\smplA_{\knd})\).}  
	\((\qrv_{1},\smplA_{1}),\ldots,(\qrv_{\knd},\smplA_{\knd})\)
via \eqref{eq:LSPBF_10} we get
\begin{align}
\notag
\PXS{\mV}{\qrv\!\leq\!\blx (\rate-\delta_{1})+\gamma}
&\geq 1-\tfrac{\sum\nolimits_{\ind=1}^{\knd}\EXS{\mV}{(\qrv_{\ind})^{2}}}{\gamma^{2}},
\end{align}
where \(\qrv\) is defined as
\begin{align}
\label{eq:LSPBF_Qdef}
\qrv
&\DEF\sum\nolimits_{\ind=1}^{\knd} \qrv_{\ind}.
\end{align}
Setting 
\(\gamma=8\tfrac{(2+\RC{\sfrac{1}{2}}{\Wm})\blx}{(1-\rnt)\sqrt{\knd}}\)
and invoking
\eqref{eq:lem:SPBF-delta1} and \eqref{eq:LSPBF_11} we get
\begin{align}
\label{eq:LSPBF-Q-MC-prob}
\PXS{\mV}{\oev_{\mQ}}
&\geq \tfrac{3}{4},
\end{align}
%where \(\oev_{\mQ}
%\!\DEF\!\{\dsmpl\!\in\!\smplS\!:\!\qrv(\dsmpl)\!\leq\!\blx \rate\!+\!\ln 4\!+\!\knd \ln(\blx+\tfrac{1}{\epsilon})\}\). 
where \(\oev_{\mQ}\) is defined as
\begin{align}
\label{eq:LSPBF-Q-MC-event}
\oev_{\mQ}
&\!\DEF\!\left\{\dsmpl\!\in\!\smplS\!:\!
\qrv(\dsmpl)\!\leq\!\blx \rate\!-\!\ln 4\!-\!\knd \ln\left(\blx+\tfrac{1}{\epsilon}\right)
\right\}.
\end{align}
Recall that for all \(\ind\in\{1,\ldots,\knd\}\) 
the conditional distributions of \(\prob_{\!\mV}\) and  \(\prob_{\!\mQ}\)
for \(\sta_{\ind}\)'s given \(\smplA_{\ind-1}\) are identical 
because of \eqref{eq:LSPBF_01}.
Thus \(\qrv(\dsmpl)=\ln \der{\prob_{\!\mV}}{\prob_{\!\mQ}}(\dsmpl)\) and consequently 
\begin{align}
\label{eq:LSPBF-Q-MC-bound}
\PXS{\mQ}{\set{B}\cap \left\{\qrv\leq \lambda\right\}}
&\geq e^{-\lambda}\PXS{\mV}{\set{B}\cap \left\{\qrv\leq \lambda\right\}}
\end{align}
for any \(\set{B}\in\smplA\) and \(\lambda\in\reals{}\).

We need identities analogous to 
\eqref{eq:LSPBF-Q-MC-prob} and  \eqref{eq:LSPBF-Q-MC-bound} 
for \(\prob\) and \(\prob_{\!\mV}\), 
as well. The random variables \(\vrv_{1},\ldots,\vrv_{\knd}\)
are used to obtain those identities.
For any \(\ind\!\in\!\{1,\ldots,\knd\}\),
let \(\fltrn{\ind}\)-measurable random variable
\(\vrv_{\ind}\) be
\begin{align}
\label{eq:LSPBF_13}
\vrv_{\ind}
&\!\DEF\!\ln \tfrac{\PCXS{\mV}{\out_{\tau_{\ind}}^{\tin_{\ind}}}{\mes,\sta_{1}^{\ind},\out_{1}^{\tin_{\ind-1}}}}
{\PCX{\out_{\tau_{\ind}}^{\tin_{\ind}}}{\mes,\sta_{1}^{\ind},\out_{1}^{\tin_{\ind-1}}}}.
\end{align}
Then as a result of 
\eqref{eq:LSPBF_02w},
\eqref{eq:LSPBF_02v},
and the definition of order-one R\'enyi divergence
\begin{align}
\notag
\ECXS{\mV}{\vrv_{\ind}}{\fltrn{\ind-1},\!\sta_{\ind}}
&\!=\!
\sum\nolimits_{\tin=\tau_{\ind}}^{\tin_{\ind}}\!
\!\!\ECXS{\mV}{\RD{1}{\!\Wmn{\sta_{\ind}}(\inp_{\tin})}{\!\Wm(\inp_{\tin})}}{\fltrn{\ind-1},\!\sta_{\ind}}.
\end{align}
On the other hand 
as a result of \eqref{eq:tiltedKLD} and Lemma \ref{thm:minimax},
we have
\begin{align}
\notag
\RD{1}{\!\Wmn{\sta_{\ind}}(\inp_{\tin})}{\!\Wm(\inp_{\tin})}
\!\leq\!\tfrac{1-\sta_{\ind}}{\sta_{\ind}}\left(\RC{\sta_{\ind}}{\!\Wm}\!-\!\RD{1}{\!\Wmn{\sta_{\ind}}(\inp_{\tin})}{\qmn{\sta_{\ind},\Wm}}\right)
\end{align}
for all  \(\tin\in\{\tau_{\ind},\ldots,\tin_{\ind}\}\).

Then the non-negativity of the R\'enyi divergence
and the definition of \(\hrv_{\ind}\) 
given in \eqref{eq:LSPBF_03}
imply
\begin{align}
\label{eq:LSPBF_14}
0\leq \ECXS{\mV}{\vrv_{\ind}}{\fltrn{\ind-1}}
&\leq 
\ECXS{\mV}{
\tfrac{1-\sta_{\ind}}{\sta_{\ind}}\left(\ell_{\ind}\RC{\sta_{\ind}}{\Wm}-\hrv_{\ind}\right)}{\fltrn{\ind-1}}.
\end{align}
We bound the expression on the right hand side of \eqref{eq:LSPBF_14}
through a case by case analysis based on the value of \(\grv_{\ind}\).
\begin{itemize}
\item If \(\grv_{\ind}\!=\!\tfrac{\rnt}{1-\epsilon}\),
then \(\sta_{\ind}\geq \rnt\) by construction.
On the other hand
\(\tfrac{1-\rnt}{\rnt}\RC{\rnt}{\Wm}\!=\!\spe{\rate_{1},\!\Wm\!}\)
by the hypothesis and 
\(\tfrac{1-\rno}{\rno}\RC{\rno}{\Wm}\) is nonincreasing in \(\rno\)
by Lemma \ref{lem:capacityO}.
Thus
\(\ECXS{\mV}{\vrv_{\ind}}{\fltrn{\ind-1}}\leq \spe{\rate_{1},\!\Wm\!}\)
as a result of the non-negativity of \(\hrv_{\ind}\) established by \eqref{eq:LSPBF_04}.
Since \(\spe{\rate,\!\Wm\!}\) is nonincreasing  in \(\rate\) by definition
we get
\begin{align}
\label{eq:LSPBF_15}
\ECXS{\mV}{
	\tfrac{1-\sta_{\ind}}{\sta_{\ind}}\left(\ell_{\ind}\RC{\sta_{\ind}}{\Wm}-\hrv_{\ind}\right)}{\fltrn{\ind-1}}
&\leq \ell_{\ind}\!\spe{\rate-\delta_{1},\Wm\!}.
\end{align}
\item If \(\grv_{\ind}\!\neq\!\tfrac{\rnt}{1-\epsilon}\),
then \(\ECXS{\mV}{\hrv_{\ind}}{\fltrn{\ind-1}}\!=\!\ell_{\ind}(\rate-\delta_{1})\)
by construction. Thus 
\(\hrv_{\ind}\geq0\) ---established in \eqref{eq:LSPBF_04}---
and \eqref{eq:LSPBF_01} imply 
\begin{align}
\notag
&\ECXS{\mV}{
\tfrac{1-\sta_{\ind}}{\sta_{\ind}}\left(\ell_{\ind}(\rate-\delta_{1})-\hrv_{\ind}\right)}{\fltrn{\ind-1}}
\\
\notag
&\qquad\!\leq\!\tfrac{1-(1-\epsilon)\grv_{\ind}}{(1-\epsilon)\grv_{\ind}}\ell_{\ind}(\rate-\delta_{1})
-\!\tfrac{(1-\epsilon)(1-\grv_{\ind})}{\grv_{\ind}+\epsilon(1-\grv_{\ind})}
\ECXS{\mV}{\hrv_{\ind}}{\fltrn{\ind-1}}
\\
\notag
&\qquad\!=\tfrac{\ell_{\ind} (\rate-\delta_{1})\epsilon }{(\grv_{\ind}-\epsilon\grv_{\ind})(\grv_{\ind}+\epsilon(1-\grv_{\ind}))}.
\end{align}
On the other hand \(\grv_{\ind}\!\geq\!\tfrac{\rnf-\epsilon}{1-\epsilon}\)
by \eqref{eq:LSPBF_07}, \(\epsilon\!\leq\!\tfrac{\rnf}{2}\) 
by hypothesis
and
\(\tfrac{1-\sta_{\ind}}{\sta_{\ind}}\!\left(\RC{\sta_{\ind}}{\Wm}\!-\!(\rate-\delta_{1})\right)\leq\spe{\rate-\delta_{1},\Wm}\) 
by the definition of \(\spe{\rate,\Wm}\) given in 
Definition \ref{def:spherepacking}. Thus
\begin{align}
\notag
\ECXS{\mV}{
\tfrac{1-\sta_{\ind}}{\sta_{\ind}}\left(\ell_{\ind}\RC{\sta_{\ind}}{\Wm}-\hrv_{\ind}\right)}{\fltrn{\ind-1}}
&\!\leq\!\ell_{\ind}\!\spe{\rate-\delta_{1},\Wm\!}\!
\\
\label{eq:LSPBF_16}
&\hspace{1cm}+\!\ell_{\ind}\!\tfrac{2\rate \epsilon}{\rnf^{2}}.
\end{align}
\end{itemize}
Using \eqref{eq:LSPBF_14}, \eqref{eq:LSPBF_15}, and \eqref{eq:LSPBF_16} we get
\begin{align}
\label{eq:LSPBF_17}
\hspace{-.2cm}
0\!\leq\!\ECXS{\mV}{\vrv_{\ind}}{\fltrn{\ind-1}}
&\!\leq\!\ell_{\ind}\!\spe{\rate-\delta_{1},\Wm\!}\!+\!\ell_{\ind}\!\tfrac{2\rate \epsilon}{\rnf^{2}}\!
\end{align}
for all \(\ind\!\in\!\{\!1\!,\!\ldots\!,\!\knd\!\}\)

The analysis for bounding the conditional second moments  
of \(\vrv_{\ind}\)'s is analogous to the one for bounding 
the conditional second moments  of \(\qrv_{\ind}\)'s.
We invoke  \(\sta_{\ind}\geq \rnf-\epsilon\)
instead of \(\sta_{\ind}\leq \rnt+\epsilon\).
\begin{align}
\label{eq:LSPBF_18}
\ECXS{\mV}{(\vrv_{\ind})^{2}}{\fltrn{\ind-1}}
&\leq 16\tfrac{(2+\RC{\sfrac{1}{2}}{\Wm})^{2}\blx^{2}}{(\rnf)^{2}\knd^{2}}.
\end{align}
Applying Lemma \ref{lem:chebyshev},
for \(\amn{\ind}=\ell_{\ind}(\spe{\rate-\delta_{1},\Wm\!}\!+\!\tfrac{2\rate \epsilon}{\rnf^{2}})\)
to the stochastic 
sequence \((\vrv_{1},\smplA_{1}),\ldots,(\vrv_{\knd},\smplA_{\knd})\)
via \eqref{eq:LSPBF_17} we get
\begin{align}
\notag
\PXS{\mV}{\vrv\!\leq\!\blx (\spe{\rate-\delta_{1},\Wm\!}\!+\!\tfrac{2\rate \epsilon}{\rnf^{2}})+\gamma}
&\!\geq\!1\!-\!\tfrac{\sum\nolimits_{\ind=1}^{\knd}\EXS{\mV}{(\vrv_{\ind})^{2}}}{\gamma^{2}},
\end{align}
where \(\vrv\) is defined as
\begin{align}
\label{eq:LSPBF_Vdef}
\vrv
&\DEF\sum\nolimits_{\ind=1}^{\knd} \vrv_{\ind}.
\end{align}
Setting
\(\gamma=8\tfrac{(2+\RC{\sfrac{1}{2}}{\Wm})\blx}{\rnf\sqrt{\knd}}\)
and invoking \eqref{eq:lem:SPBF-delta2} and  \eqref{eq:LSPBF_18} we get
\begin{align}
\label{eq:LSPBF-V-MC-prob}
\PXS{\mV}{\oev_{\mV}}
&\geq \tfrac{3}{4},
\end{align}
%where 
%\(\!\oev_{\mV} \!\DEF\!\{\dsmpl\!\in\!\smplS\!:\!\vrv(\dsmpl)\!\leq\!\!\blx (\spe{\rate\!-\!\delta_{1},\Wm}\!+\!\delta_{2})
%\!-\!\ln 4\blx^{\knd}\}.\)
where \(\oev_{\mV}\) is defined as
\begin{align}
\label{eq:LSPBF-V-MC-event}
\!\oev_{\mV}
&\!\DEF\!\left\{\dsmpl\!\in\!\smplS\!:\!
\vrv(\dsmpl)\!\leq\!\!\blx (\spe{\rate\!-\!\delta_{1},\!\Wm}\!+\!\delta_{2})
\!+\!\ln\!\tfrac{1}{4\blx^{\knd}}\right\}.
\end{align}
The conditional distribution of \(\prob_{\!\mV}\), and  \(\prob\)
for \(\sta_{\ind}\)'s given \(\smplA_{\ind-1}\) are identical 
for all \(\ind\in\{1,\ldots,\knd\}\) because of \eqref{eq:LSPBF_01}.
Thus \(\vrv(\dsmpl)=\ln\der{\prob_{\!\mV}}{\prob}(\dsmpl)\) and consequently 
\begin{align}
\label{eq:LSPBF-V-MC-bound}
\PX{\set{B}\cap \left\{\vrv\leq \lambda\right\}}
&\geq e^{-\lambda}\PXS{\mV}{\set{B}\cap \left\{\vrv\leq \lambda\right\}}
\end{align}
for any \(\set{B}\in\smplA\) and \(\lambda\in\reals{}\).

As a result of 
\eqref{eq:LSPBF-Q-MC-prob}
and
\eqref{eq:LSPBF-V-MC-prob}
we have
\begin{align}
\label{eq:LSPBF-MC-prob}
\PXS{\mV}{\oev_{\mQ}\cap\oev_{\mV}}
&\geq \tfrac{1}{2},
\end{align}
where \(\oev_{\mQ}\) and \(\oev_{\mV}\) are defined 
in \eqref{eq:LSPBF-Q-MC-event} and \eqref{eq:LSPBF-V-MC-event},
respectively.

\begin{remark}
	If we could show 
	\(\PXS{\mQ}{\mes\!\neq\!\est}\!\approx\!e^{-\blx \rate}\), 
	then we would use 
	\eqref{eq:LSPBF-Q-MC-bound}, \eqref{eq:LSPBF-V-MC-bound}, 
	and \eqref{eq:LSPBF-MC-prob}, 
	to bound the error probability under \(\prob\)--- i.e., 
	to bound \(\Pem{av}\)---
	from below. 
	However, the distribution of \(\out_{1}^{\blx}\) depends on \(\mes\) not only
	under \(\prob\) and \(\prob_{\!\mV}\) but also under \(\prob_{\!\mQ}\) 
	because of \(\sta\)'s. 
	We cope with this issue using a pigeon hole argument.
\end{remark}

\subsection{A Change of Measure Argument together with a Pigeon Hole Argument}\label{sec:pigeonhole}

Let us consider the random variables \(\grv_{1},\ldots,\grv_{\knd}\). 
Let us divide the interval \((0,1]\)
into \(\blx\) intervals of length \(\sfrac{1}{\blx}\). 
Thus for each \(\ind\) in \(\{1,\ldots,\knd\}\),
the value of the random variable \(\grv_{\ind}\) 
will be in only one of the  \(\blx\) intervals 
for each sample point \(\dsmpl\!\in\!\smplS\). 
Thus we get \(\blx^{\knd}\) disjoint \(\knd\)-cubes whose union is \((0,1]^{\knd}\)
for the vector \(\grv_{1}^{\knd}\). 
For each \(\knd\)-cube \(\zeta\), let us  define the event \(\oev_{\zeta}\in\smplA\) as
\begin{align}
\notag
\oev_{\zeta}
&\!\DEF\!\{\dsmpl\!\in\!\smplS: \grv_{1}^{\knd}(\dsmpl)\in\zeta\}.
\end{align}
As a result of \eqref{eq:LSPBF-MC-prob} there exists at least one \(\knd\)-cube \(\zeta^{*}\) satisfying 
\begin{align}
\label{eq:LSPBF-MC-prob-zeta}
\PXS{\mV}{\oev_{\mQ}\cap\oev_{\mV}\cap \oev_{\zeta^{*}}}
&\geq \tfrac{1}{2\blx^{\knd}}.
\end{align}

\begin{figure}[ht]
	%	\vspace{-.6cm}
	\begin{center}
	\begin{tikzpicture}	
	\draw [thick] (0,0) -- (9,0);
	
	\draw[red,thick,densely dotted] (4.4,0) -- (4.4,.6) -- (6.9,.6)-- (6.9,0);
	\draw[red,decorate,decoration={brace,amplitude=6pt},xshift=0.0pt,yshift=0.0pt](4.4,.6) -- (6.9,.6) node [red,midway,yshift=0.3cm] 
	{\footnotesize $\epsilon$};
	\draw[red,decorate,decoration={brace,amplitude=5pt,mirror},xshift=0.0pt,yshift=0.0pt](6.9,0) -- (6.9,.6) node [red,midway,xshift=+0.3cm] 
	{\footnotesize $\tfrac{1}{\epsilon}$};
	
	\draw[blue,dashed] (3,0) --(3,0.5) -- (6,0.5)-- (6,0);
	\draw[blue,decorate,decoration={brace,amplitude=8pt,mirror},xshift=0.0pt,yshift=0.0pt](3,0) -- (6,0) node [blue,midway,yshift=-0.5cm] 
	{\footnotesize $\tilde{\epsilon}$};
	\draw[blue,decorate,decoration={brace,amplitude=4pt},xshift=0.0pt,yshift=0.0pt](3,0) -- (3,.5) node [blue,midway,xshift=-0.3cm] 
	{\footnotesize $\tfrac{1}{\tilde{\epsilon}}$};
	\end{tikzpicture}
	\vspace{-.8cm}
	\caption{A representation of the conditional probability density functions 
		of \(\sta_{\ind}\) given \(\fltrn{\ind-1}\)
		under \(\prob_{\!\mU}\) and \(\prob_{\!\mQ}\),
		which are described in \eqref{eq:LSPBF_01u} and \eqref{eq:LSPBF_01}.
		For all realizations of \(\fltrn{\ind-1}\),
		the conditional probability density function 
		of \(\sta_{\ind}\) under \(\prob_{\!\mU}\) is the same:
		it is equal to \(\sfrac{1}{\tilde{\epsilon}}\) on an interval of length
		\(\tilde{\epsilon}\) and zero elsewhere.
		We represent it with a dashed line in the above figure. 
		For all realizations of \(\fltrn{\ind-1}\), 
		the conditional probability density function 
		of \(\sta_{\ind}\) under \(\prob_{\!\mQ}\)
		is equal to \(\sfrac{1}{{\epsilon}}\) on some interval of length
		\({\epsilon}\) and zero elsewhere, as well.
		However, the starting point of this interval, i.e. \((1-\epsilon)\grv_{\ind}\),
		---and hence the conditional density function \(\sta_{\ind}\) under \(\prob_{\!\mQ}\)
		itself---
		depends on the realization of \(\fltrn{\ind-1}\).
		We represent it with a dotted line in the above figure.}
	%		In above figure  we used \(\sfrac{\tilde{\epsilon}}{\epsilon}\approx 2.3\)
	%		but we will have \(\sfrac{\tilde{\epsilon}}{\epsilon}\approx 1\) in order to get tight bounds. }
	\vspace{-.3cm}
	\label{fig:pigeonhole}
	\end{center}
\end{figure} 

Let us assume without loss of generality that 
\begin{align}
\notag
\zeta^{*}
&\!=\!\left(\rnb_{1}\!-\!\tfrac{\rnb_{1}}{\blx},\rnb_{1}\!+\!\tfrac{1-\rnb_{1}}{\blx}\right]\times\cdots\times 
\left(\rnb_{\knd}\!-\!\tfrac{\rnb_{\knd}}{\blx},\rnb_{\knd}\!+\!\tfrac{1-\rnb_{\knd}}{\blx}\right]
\end{align} 
for some \(\rnb_{1},\ldots,\rnb_{\knd}\).
Let us define the probability measure \(\prob_{\!\mU}\) on \((\smplS,\smplA)\)
by setting its marginal on \(\mesS\) to the uniform  distribution
and defining its conditional distributions  as follows:
\begin{align}
\label{eq:LSPBF_01u}
\PCXS{\mU}{\cset}{\dmes,\dsta_{1}^{\ind-1},\dout_{1}^{\tin_{\ind-1}}}	
&\!=\!\tfrac{1}{\tilde{\epsilon}}
\int_{(1-\tilde{\epsilon})\rnb_{\ind}}^{\rnb_{\ind}+\tilde{\epsilon}(1-\rnb_{\ind})} \TCIND{\cset}{\dsta} \dif{\dsta}
\end{align}
for all \(\cset\!\in\!\rborel{\staS_{\ind}}\),
where \(\tilde{\epsilon}=\epsilon+\tfrac{1-\epsilon}{\blx}\)
and
\begin{align}
\label{eq:LSPBF_02u}
\PCXS{\mU}{\dout_{\tau_{\ind}}^{\tin_{\ind}}}{\dmes,\dsta_{1}^{\ind},\dout_{1}^{\tin_{\ind-1}}}	
&\!=\!
\prod\nolimits_{\tin=\tau_{\ind}}^{\tin_{\ind}}\!
\qmn{\dsta_{\ind},\Wm}(\dout_{\tin})
\end{align}
for all \(\dout_{\tau_{\ind}}^{\tin_{\ind}}\!\in\!\outS_{\tau_{\ind}}^{\tin_{\ind}}\).

Comparing 
\eqref{eq:LSPBF_01u}
and
\eqref{eq:LSPBF_02u}
describing the conditional distributions of \(\prob_{\!\mU}\)
with 
\eqref{eq:LSPBF_01}
and
\eqref{eq:LSPBF_02q}
describing the conditional distributions of \(\prob_{\!\mQ}\),
we can conclude that 
\begin{align}
\label{eq:LSPBF-PH-1}
\PXS{\mQ}{\set{B}\cap \oev_{\zeta^{*}}}
&\leq 
(\tfrac{\tilde{\epsilon}}{\epsilon})^{\knd}
\PXS{\mU}{\set{B}}
\end{align}
for any \(\set{B}\in\smplA\). 

Since the distribution of \(\out_{1}^{\blx}\) does not depend on \(\mes\)
under \(\prob_{\!\mU}\), we have
\begin{align}
\notag
\PXS{\mU}{\mes\!=\!\est}
&\leq \tfrac{1}{\lceil e^{\blx\rate} \rceil}.
\end{align}
Invoking \eqref{eq:LSPBF-PH-1} for 
\({\set B}=\oev_{\mQ}\cap\oev_{\mV}\cap \{\mes\!=\!\est\}\)
we get
\begin{align}
\notag
\PXS{\mQ}{\oev_{\mQ}\cap\oev_{\mV}\cap\oev_{\zeta*}\cap \{\mes\!=\!\est\}}
&\leq (\tfrac{\tilde{\epsilon}}{\epsilon})^{\knd} e^{-\blx\rate}.
\end{align}
If we use \eqref{eq:LSPBF-Q-MC-event}
and  \eqref{eq:LSPBF-Q-MC-bound}
for \(\lambda\!=\!\blx \rate\!-\!\ln 4\!-\!\knd\!\ln(\blx+\tfrac{1}{\epsilon})\) 
and recall \(\tilde{\epsilon}=\epsilon+\tfrac{1-\epsilon}{\blx}\) we get
\begin{align}
\notag
\PXS{\mV}{\oev_{\mQ}\cap\oev_{\mV}\cap\oev_{\zeta*}\cap \{\mes\!=\!\est\}}
&\leq 
\tfrac{e^{\!\blx \rate}}{4}
(\blx+\tfrac{1}{\epsilon})^{-\knd}
(\tfrac{\tilde{\epsilon}}{\epsilon})^{\knd} e^{-\blx \rate}
\\
\notag
&= \tfrac{1}{4} (\tfrac{1}{\epsilon\blx+1})^{\knd}(\tfrac{\epsilon\blx+(1-\epsilon)}{\blx})^{\knd}
\\
\notag
&\leq \tfrac{1}{4\blx^{\knd}}. 
\end{align}
Then as a result of \eqref{eq:LSPBF-MC-prob-zeta},
\begin{align}
\notag
\PXS{\mV}{\oev_{\mQ}\cap\oev_{\mV}\cap\oev_{\zeta*}\cap \{\mes\!\neq\!\est\}}
&\geq \tfrac{1}{4\blx^{\knd}}.
\end{align}
If we use \eqref{eq:LSPBF-V-MC-event}
and  \eqref{eq:LSPBF-V-MC-bound}
for 
\(\lambda\!=\!\blx (\spe{\rate\!-\!\delta_{1},\!\Wm}\!+\!\delta_{2})
+\ln\!\tfrac{1}{4\blx^{\knd}}\), then
we get
\begin{align}
\notag
\PX{\oev_{\mQ}\cap\oev_{\mV}\cap\oev_{\zeta*}\cap \{\mes\!\neq\!\est\}}
&\geq e^{-\blx \left[\spe{\rate-\delta_{1},\Wm}+\delta_{2}\right]}.
\end{align}
Then \eqref{eq:lem:SPBF} holds because \(\Pem{av}\!=\!\PX{\mes\!\neq\!\est}\).

\subsection{Proof of Theorem \ref{proposition:SPBF}}\label{sec:proof:proposition:SPBF}
If  \(\epsilon_{\blx}=\tfrac{\rnf\wedge(1-\rnt)}{\blx}\)
and \(\knd_{\blx}=\lfloor \blx^{\sfrac{2}{3}}\rfloor\),
then there exists an \(\blx_{0}\)
for which 
\(\delta_{1}\) defined in \eqref{eq:lem:SPBF-delta1}
and
\(\delta_{2}\) defined in \eqref{eq:lem:SPBF-delta2}
satisfy \(\delta_{1}\vee \delta_{2} \leq\tfrac{2\ln \blx}{\blx^{1/3}}\)
for all \(\blx\geq\blx_{0}\).
Then for any \(\blx\geq\blx_{0}\),
the hypotheses of Lemma \ref{lem:SPBF} is satisfied 
by
any code satisfying the hypotheses of Theorem \ref{proposition:SPBF} 
and Theorem \ref{proposition:SPBF}
follows from Lemma \ref{lem:SPBF}.

 \section{Discussion}\label{sec:conclusion}
We have proved  both
the non-asymptotic  SPB given in Lemma \ref{lem:SPBF}
and
the asymptotic SPB given in Theorem \ref{proposition:SPBF} 
for codes on DSPC with feedback
in order keep the analysis as simple as possible.
Nevertheless, the proofs work, essentially, as is for codes on 
finite output set stationary product channels with feedback, as
well. 
Augustin, on the other hand, stated his asymptotic result
\cite[Thm. 41.7]{augustin78} for codes on  
finite input set stationary product channels with feedback.

In a general stationary product channel with feedback,
the stochastic matrix \(\!\Wm\!\in\!\pdis{\outS|\inpS}\) is replaced 
by a transition probability \(\!\Wm\!\in\!\pmea{\outA|\inpA}\),
see \cite[{Definition 8}]{nakiboglu19B}.
%\cite[{Definition \ref*{B-def:Fproduct}}]{nakiboglu19B}.
In order to generalize Lemma \ref{lem:SPBF}
to stationary product channels with feedback,
we first need to prove Lemma \ref{lem:intermediate}.
That can be done rather easily by assuming
\begin{align}
\label{eq:equivalent}
\lim\nolimits_{\rno \uparrow1}\tfrac{1-\rno}{\rno}\RC{\rno}{\Wm}
&\!=\!0.
\end{align}
The challenge lies in the construction of
probability measures \(\prob\), \(\prob_{\!\mV}\),
and \(\prob_{\!\mQ}\) 
and in determining the functions \(\gX_{1},\ldots,\gX_{\knd}\):  
we need to show that expressions given in 
\eqref{eq:LSPBF_01}, \eqref{eq:LSPBF_02w}, \eqref{eq:LSPBF_02q},
\eqref{eq:LSPBF_02v}  define Borel functions for an
appropriate choice of the functions \(\gX_{1},\ldots,\gX_{\knd}\)
and that the same choice ensures 
\eqref{eq:LSPBF_06}, \eqref{eq:LSPBF_07}, and \eqref{eq:LSPBF_08}.
The other parts of the proof of Lemma \ref{lem:SPBF} and 
the proof of Theorem \ref{proposition:SPBF} work  as is.
Augustin has asserted in 
\cite[Corollary 41.9]{augustin78}
that his proof sketch works for
codes on stationary product channels with 
feedback whose component channel \(\Wm\) satisfies\footnote{\cite[Corollary 41.9]{augustin78} assumes 
	the conditional weak compactness,
	which is just another way of assuming \eqref{eq:equivalent}.
	The condition given in  \eqref{eq:equivalent} 
	and being conditionally weakly compact
	---i.e., having compact closure in the weak topology---
	are equivalent by 
\cite[{Lemma 24-(d)}]{nakiboglu19A}.
%\cite[{Lemma \ref*{A-lem:capacityEXT}-(\ref*{A-capacityEXT-compact-N})}]{nakiboglu19A}.
} \eqref{eq:equivalent}.

The SPBs are customarily stated for the list decoding, 
e.g. \cite[(1.4) and Thm. 2]{shannonGB67A};
however, we have confined our discussion to the case without 
the list decoding for the sake of simplicity. 
Nevertheless, both Lemma \ref{lem:SPBF} and Theorem \ref{proposition:SPBF}
can be extended to the list decoding case in a straightforward way.

Recently, we have proposed another proof for the SPB for codes on DSPCs with feedback
\cite[Thm. 3]{nakiboglu19B}
%\cite[Thm. \ref*{B-thm:fDSPCexponent}]{nakiboglu19B}
and generalized it to codes on (possibly non-stationary) DPCs
with feedback
\cite[Thm. 4]{nakiboglu19B}.
%\cite[Thm. \ref*{B-thm:fDPCexponent}]{nakiboglu19B}.
It seems analogous generalizations are possible for 
Theorem \ref{proposition:SPBF} and Lemma \ref{lem:SPBF} under similar hypotheses.
A natural next step would be considering 
codes on the cost constrained
stationary memoryless channels with feedback. 
Under certain hypothesis, it is possible to 
establish the SPB using the proof technique applied here, 
but we are not aware of a general proof that will work
for all cost constrained stationary discrete memoryless 
channels with feedback.

%\begin{comment}
\appendix
\begin{proof}[Proof of Lemma \ref{lem:chebyshev}]
\hypertarget{appendix}{For} all \(\tin\) in \(\{1,\ldots,\blx\}\), let \(\sta_{\tin}\) be 
\begin{align}
\notag
\sta_{\tin}
&\DEF\inp_{\tin}-\ECX{\inp_{\tin}}{\smplA_{\tin-1}}.
\end{align}	
Then \(\sta_{\tin}\) is	an \(\smplA_{\tin}\)-measurable random variable.
Furthermore,
\(\ECX{\sta_{\tin}\sta_{\tau}}{\fltrn{\tau}}\!=\!\ECX{\sta_{\tin}}{\fltrn{\tau}}\sta_{\tau}\!=\!0\)
for any \(\tau\) and \(\tin\) satisfying
	\(1\leq \tau<\tin\leq\blx\). Thus
	\begin{align}
	\notag
	\EX{\left(\sum\nolimits_{\tin=1}^{\blx}\sta_{\tin}\right)^{2}}
	&\!=\!\sum\nolimits_{\tin=1}^{\blx}\!\EX{(\sta_{\tin})^{2}}
	\\
	\notag
	&\!=\!\sum\nolimits_{\tin=1}^{\blx}\!\EX{\ECX{\left(\inp_{\tin}-\ECX{\inp_{\tin}}{\smplA_{\tin-1}}\right)^{2}}{\smplA_{\tin-1}}}
	\\
	\notag
	&\!\leq\!\sum\nolimits_{\tin=1}^{\blx}\!\EX{\ECX{\left(\inp_{\tin}\right)^{2}}{\smplA_{\tin-1}}}
	\\
	\notag
	&\!=\!\sigma^{2}.
	\end{align}
	Then the Chebyshev's inequality implies 
	\begin{align}
	\label{eq:chebyshev-1}
	\PX{\sum\nolimits_{\tin=1}^{\blx}\sta_{\tin}<\gamma}&\geq 1- \tfrac{\sigma^{2}}{\gamma^2}.
	\end{align}
	On the other hand,
	\(\sum\nolimits_{\tin=1}^{\blx}\sta_{\tin}\geq \sum\nolimits_{\tin=1}^{\blx}(\inp_{\tin}-\amn{\tin})\)
	holds with probability one
	because \(\ECX{\inp_{\tin}}{\fltrn{\tin-1}}\!\leq\!\amn{\tin}\) 
	with probability one by the hypothesis. Thus
	\begin{align}
	\label{eq:chebyshev-2}
	\PX{\sum\nolimits_{\tin=1}^{\blx}\sta_{\tin}<\gamma}
	&\leq \PX{\sum\nolimits_{\tin=1}^{\blx}(\inp_{\tin}-\amn{\tin})<\gamma}.
	\end{align}
	\eqref{eq:lem:chebyshev} follows from \eqref{eq:chebyshev-1} and \eqref{eq:chebyshev-2}.
\end{proof}
%\end{comment}

\section*{Acknowledgment}
The author would like to thank Fatma Nakibo\u{g}lu and Mehmet Nakibo\u{g}lu for 
their hospitality; this work simply would not have been possible without it.  
The author completed his initial study confirming the soundness of Augustin's 
method at UC Berkeley in 2012 fall while he was working as a postdoctoral 
researcher. The author would like to thank 
Imre Csisz\'{a}r for pointing out Agustin's work at Austin in 2010 ISIT,
Harikrishna R. Palaiyanur for sending him Augustin's manuscript \cite{augustin78},
and the reviewers for their comments and suggestions on the manuscript.
\bibliographystyle{ieeetr} 
\bibliography{main}
\end{document}